\newtheorem{theorem}{Theorem}[section]
\theoremstyle{definition}
\newtheorem{assumption}[theorem]{Assumption}
\theoremstyle{definition}
\theoremstyle{plain}
\newtheorem{definition}{Definition}[section]
\theoremstyle{plain}
\newtheorem{proposition}{Proposition}[section]
\title{Identification and Inference Under Narrative Restrictions\thanks{We thank Isaiah Andrews, Sophocles Mavroeidis, Jos\'{e} Luis Montiel-Olea, Mikkel Plagborg-M{\o}ller, Morten Ravn, Christian Wolf and seminar participants at several venues for helpful comments. We gratefully acknowledge financial support from ERC grants (numbers 536284 and 715940) and the ESRC Centre for Microdata Methods and Practice (CeMMAP) (grant number RES-589-28-0001).}\\
%\normalsize{(Preliminary and incomplete: please do not circulate)}
}
\date{\today}
\author{Raffaella Giacomini\thanks{University College London, Department of Economics/Cemmap. Email: r.giacomini@ucl.ac.uk}, Toru Kitagawa\thanks{University College London, Department of Economics/Cemmap. Email: t.kitagawa@ucl.ac.uk} and Matthew Read\thanks{University College London, Department of Economics. Email: matthew.read.16@ucl.ac.uk}
}
\begin{document}

\maketitle

\begin{abstract}
\noindent We consider structural vector autoregressions subject to `narrative restrictions', which are inequality restrictions on functions of the structural shocks in specific periods. These restrictions raise novel problems related to identification and inference, and there is currently no frequentist procedure for conducting inference in these models. We propose a solution that is valid from both Bayesian and frequentist perspectives by: 1) formalizing the identification problem under narrative restrictions; 2) correcting a feature of the existing (single-prior) Bayesian approach that can distort inference; 3) proposing a robust (multiple-prior) Bayesian approach that is useful for assessing and eliminating the posterior sensitivity that arises in these models due to the likelihood having flat regions; and 4) showing that the robust Bayesian approach has asymptotic frequentist validity. We illustrate our methods by estimating the effects of US monetary policy under a variety of narrative restrictions.
\end{abstract}

\textbf{JEL classification:} C32, E52

\textbf{Keywords:} Frequentist coverage, global identification, identified set, multiple priors

\newpage

\section{Introduction}
\label{sec:introduction}

Estimating the dynamic causal effects of structural shocks is a key challenge in macroeconomics. A common approach to this problem is to use a structural vector autoregression (SVAR) with sign or zero restrictions on the model's structural parameters. Recently, a number of papers have augmented these restrictions with restrictions that involve the values of the structural shocks in specific periods. For example, \cite{Antolin-Diaz_Rubio-Ramirez_2018} (AR18) propose restricting the signs of structural shocks and their contributions to the change in particular variables in certain historical episodes. Ludvigson, Ma and Ng (2018)\nocite{Ludvigson_Ma_Ng_2018} independently propose restricting the sign or magnitude of the structural shocks in specific periods. A burgeoning empirical literature has adopted similar restrictions, including \cite{BenZeev_2018}, \cite{Furlanetto_Robstad_2019}, \cite{Cheng_Yang_2020}, \cite{Inoue_Kilian_2020}, Kilian and Zhou (2020a, 2020b)\nocite{Kilian_Zhou_2020a}\nocite{Kilian_Zhou_2020b}, \cite{Laumer_2020}, \cite{Redl_2020}, \cite{Zhou_2020} and Ludvigson, Ma and Ng (2020)\nocite{Ludvigson_Ma_Ng_2020}. The fact that these restrictions are placed on the shocks rather than the parameters raises novel problems related to identification, estimation and inference. This paper clarifies the nature of these problems and proposes a solution that is valid from both Bayesian and frequentist perspectives.

Henceforth, we refer to any restrictions that can be written as inequalities involving structural shocks in particular periods as `narrative restrictions' (NR). An example of NR are `shock-sign restrictions', such as the restriction in AR18 that the US economy was hit by a positive monetary policy shock in October 1979. This is when the Federal Reserve markedly increased the federal funds rate following Paul Volcker becoming chairman, and is widely considered an example of a positive monetary policy shock (e.g., \cite{Romer_Romer_1989}). AR18 also consider `historical-decomposition restrictions', such as the restriction that the change in the federal funds rate in October 1979 was overwhelmingly due to a monetary policy shock. This is an inequality restriction that simultaneously constrains the historical decomposition of the federal funds rate with respect to all structural shocks in the SVAR. Other restrictions on the structural shocks also fit into this framework. For example, we additionally consider `shock-rank restrictions', such as the restriction that the monetary policy shock in October 1979 was the largest positive realization of this shock in the sample period.

From a frequentist perspective, NR are fundamentally different from traditional identifying restrictions, such as sign restrictions on impulse responses (e.g., \cite{Uhlig_2005}). Under normally distributed structural shocks, traditional sign restrictions induce set-identification, because they generate a set-valued mapping from the SVAR's reduced-form parameters to its structural parameters that represents observational equivalence (i.e., an identified set). This set-valued mapping corresponds to the flat region of the structural-parameter likelihood and, by the definition of observational equivalence (e.g., \cite{Rothenberg_1971}), does not depend on the realization of the data. NR also result in the structural-parameter likelihood possessing flat regions and hence generate a set-valued mapping from the reduced-form parameters to the structural parameters. Crucially, this mapping depends not only on the reduced-form parameters, but also on the realization of the data. The data-dependence of this mapping implies that the standard concept of an identified set does not apply. In turn, this means that: 1) it is unclear whether NR are point- or set-identifying restrictions; and 2) there is no known valid frequentist procedure to conduct inference in these models.\footnote{Ludvigson et al. (2018, in press)\nocite{Ludvigson_Ma_Ng_2018}\nocite{Ludvigson_Ma_Ng_2020} conduct inference using a bootstrap procedure, but its frequentist validity is unknown.}

From a Bayesian perspective, AR18 and the empirical papers that adopt their approach conduct standard (single-prior) Bayesian inference under NR in much the same way as under traditional sign restrictions. However, we highlight two features of this approach that can spuriously affect inference. First, the \textit{conditional} likelihood used by AR18 to construct the posterior (distribution) implies that, for some types of NR, a component of the prior (distribution) is updated only in the direction that makes the NR unlikely to hold ex ante. This occurs because the numerator of the conditional likelihood -- the likelihood of the reduced-form VAR -- is flat with respect to the orthonormal matrix that maps reduced-form VAR innovations into structural shocks, whereas the denominator -- the ex ante probability that the NR hold -- depends on this matrix. Second, standard Bayesian inference under NR may be sensitive to the choice of prior when the NR yield a likelihood with flat regions. A flat likelihood implies that the conditional posterior of the orthonormal matrix is proportional to its conditional prior whenever the likelihood is nonzero. Posterior inference may therefore be sensitive to the choice of conditional prior for the orthonormal matrix. This is a problem that also occurs in set-identified models under traditional restrictions (e.g., \cite{Poirier_1998}).

To address the above issues, we study identification under NR and propose a framework for conducting estimation and inference that is potentially appealing to both Bayesians and frequentists. We proceed in four main steps. First, we formalize the identification problem under NR. Second, we propose a simple modification of the existing Bayesian approach that eliminates the source of posterior distortion arising under NR. The modification is to use the \textit{unconditional} likelihood, rather than the conditional likelihood, to construct the posterior. Third, as a tool for assessing and/or eliminating posterior sensitivity occurring due to the likelihood having flat regions, we propose a robust (multiple-prior) Bayesian approach to estimation and inference. Finally, we show that the robust Bayesian approach has frequentist validity in large samples.

To the best of our knowledge, this is the first paper to formally study identification under general NR. \cite{Plagborg-Moller_Wolf_2020b} suggest that shock-sign restrictions, in particular, could in principle be recast as an external instrument (or `proxy') and used to point-identify impulse responses in a proxy SVAR or local projection framework. We explore this idea in Appendix~\ref{sec:narrativeproxy} and highlight the potential sensitivity of this approach to the realization of the unrestricted shocks in the time periods that enter the NR. Petterson, Seim and Shapiro (2020)\nocite{Petterson_Seim_Shapiro_2020} derive bounds for a slope parameter in a single equation given restrictions on the plausible magnitude of the residuals, but the restrictions are over the entire sample and the setting is non-probabilistic.

We make two main contributions to the study of identification under NR. First, we provide a necessary and sufficient condition for global identification of an SVAR under NR and show that this condition is satisfied in a simple bivariate example with a single shock-sign restriction. That is, in contrast with traditional sign restrictions, NR may be formally point-identifying despite generating a set-valued mapping from reduced-form to structural parameters in any particular sample. However, this point-identification result does not deliver a point estimator, because the observed likelihood is almost always flat at the maximum. Second, to develop a frequentist-valid procedure for inference, we introduce the notion of a `conditional identified set'. The conditional identified set extends the standard notion of an identified set to a setting where identification is defined in a repeated sampling experiment conditional on the set of observations entering the NR. This provides an interpretation for the set-valued mapping induced by the NR as the set of observationally equivalent structural parameters in such a conditional frequentist experiment.

In terms of inference under NR, this paper makes contributions from both a Bayesian and a frequentist point of view.

The paper's contribution to Bayesian inference is to address the issues associated with the current approach to standard Bayesian inference under NR. First, we advocate using the \textit{unconditional} likelihood -- the joint probability of observing the data and the NR being satisfied -- when constructing the posterior, rather than the conditional likelihood. Regardless of the type of NR imposed, the unconditional likelihood is flat with respect to the orthonormal matrix that maps reduced-form VAR innovations into structural shocks. This removes the source of posterior distortion that arises due to conditioning on the NR holding. Standard Bayesian inference under the unconditional likelihood requires a simple change to existing computational algorithms. Second, to address posterior sensitivity to the choice of prior, we adapt the robust Bayesian approach of \cite{Giacomini_Kitagawa_2020a} (GK) to a setting with NR.

In the context of an SVAR under traditional identifying restrictions, the robust Bayesian approach of GK involves decomposing the prior for the structural parameters into a prior for the reduced-form parameters, which is revised by the data, and a conditional prior for the orthonormal matrix given the reduced-form parameters, which is unrevisable. Considering the class of all conditional priors for the orthonormal matrix that are consistent with the identifying restrictions generates a class of posteriors, which can be summarized by a set of posterior means (an estimator of the identified set) and a robust credible region. This removes the source of posterior sensitivity.\footnote{Giacomini, Kitagawa and Read (2019)\nocite{Giacomini_Kitagawa_Read_2019} extend this approach to proxy SVARs where the parameters of interest are set-identified using external instruments.}

We show that this approach can also be used to summarize posterior sensitivity under NR, since the unconditional likelihood at the realized data possesses flat regions and the posterior can therefore be sensitive to the choice of prior, as in standard set-identified models. There are, however, some modifications needed to account for the novel features of the NR. In particular, one cannot use a conditional prior for the orthonormal matrix to impose the NR due to the data-dependent mapping between reduced-form and structural parameters. However, by considering the class of all conditional priors consistent with any traditional identifying restrictions (if present), one can trace out all possible posteriors that are consistent with the traditional restrictions \textit{and} the NR. This is because traditional restrictions truncate the support of the conditional prior, while NR truncate the support of the likelihood. Consequently, the posterior given any particular conditional prior is only supported on the common support of the conditional prior and the likelihood.

If the researcher has a credible conditional prior, we recommend reporting the standard Bayesian posterior under the unconditional likelihood together with the robust Bayesian output. This allows other researchers to assess the extent to which posterior inference may be driven by prior choice. In the absence of a credible conditional prior, the robust Bayesian output should be reported as an alternative to the standard Bayesian posterior.

The paper's contribution to frequentist inference is to provide an asymptotically valid approach to inference under NR, which, to the best of our knowledge, was not previously available. To explore the asymptotic frequentist properties of our robust Bayesian procedure, we assume a fixed number of NR. This assumption is empirically relevant given that applications typically impose no more than a handful of NR. We provide conditions under which the robust credible region provides asymptotically valid frequentist coverage of the conditional identified set for the impulse response. Since the conditional identified set is guaranteed to include the true impulse response, the robust credible region also provides valid coverage of the true impulse response. Our robust Bayesian approach should therefore appeal to  Bayesians as well as frequentists.

We illustrate our methods by estimating the effects of monetary policy shocks in the United States. We find that posterior inferences about the response of output obtained under restrictions based on the October 1979 episode may be sensitive to the choice of conditional prior for the orthonormal matrix. In contrast, under an extended set of restrictions constructed by AR18 based on multiple historical episodes, output falls with high posterior probability following a positive monetary policy shock regardless of the choice of conditional prior. We also estimate the set of output responses that are consistent with the restriction that the monetary policy shock in October 1979 was the largest positive realization of the shock in the sample period. Compared with the extended set of restrictions, this shock-rank restriction results in broadly similar robust posterior inferences about the output response.

\bigskip

\noindent\textbf{Outline.} The remainder of the paper is structured as follows. Section~\ref{sec:bivariate} highlights the econometric issues that arise when imposing NR using a simple bivariate example. Section~\ref{sec:framework} describes the general SVAR($p$) framework. Section~\ref{sec:identification} formally analyzes identification under NR and introduces the concept of a conditional identified set. Section~\ref{sec:posteriorinferenceunderNR} discusses how to conduct standard and robust Bayesian inference under NR. Section~\ref{sec:fewNR} explores the frequentist properties of the robust Bayesian approach. Section~\ref{sec:empirical} contains the empirical application and Section~\ref{sec:conclusion} concludes. The appendices contain proofs and other supplemental material.

\bigskip

\noindent\textbf{Generic notation:} For the matrix $\mathbf{X}$, $\mathrm{vec}(\mathbf{X})$ is the vectorization of $\mathbf{X}$ and $\mathrm{vech}(\mathbf{X})$ is the half-vectorization of $\mathbf{X}$ (when $\mathbf{X}$ is symmetric). $\mathbf{e}_{i,n}$ is the $i$th column of the $n\times n$ identity matrix, $\mathbf{I}_{n}$. $\mathbf{0}_{n\times m}$ is a $n\times m$ matrix of zeros. $1(.)$ is the indicator function. $\lVert . \rVert$ is the Euclidean norm.

\section{Bivariate example}
\label{sec:bivariate}

This section sets out the econometric issues that arise when imposing NR using the simplest possible SVAR as an example. Consider the SVAR($0$) $\mathbf{A}_{0}\mathbf{y}_{t} = \bm{\varepsilon}_{t}$, for $t=1,\ldots,T$, where $\mathbf{y}_{t} = (y_{1t},y_{2t})'$ and $\bm{\varepsilon}_{t} = (\varepsilon_{1t},\varepsilon_{2t})'$ with $\bm{\varepsilon}_{t} \overset{iid}{\sim} N(\mathbf{0}_{2\times 1},\mathbf{I}_{2})$. We abstract from dynamics for ease of exposition, but this is without loss of generality. The orthogonal reduced form of the model reparameterizes
$\mathbf{A}_{0}$ as $\mathbf{Q}'\bm{\Sigma}_{tr}^{-1}$, where $\bm{\Sigma}_{tr}$ is the lower-triangular Cholesky factor (with positive diagonal elements) of $\bm{\Sigma} = \mathbb{E}(\mathbf{y}_{t}\mathbf{y}_{t}') = \mathbf{A}_{0}^{-1}\left(\mathbf{A}_{0}^{-1}\right)'$. We parameterize $\bm{\Sigma}_{tr}$ directly as
\begin{equation}
  \bm{\Sigma}_{tr} =
  \begin{bmatrix}
    \sigma_{11} & 0 \\
    \sigma_{21} & \sigma_{22}
  \end{bmatrix}
  \quad (\sigma_{11},\sigma_{22} > 0),
\end{equation}
and denote the vector of reduced-form parameters as $\bm{\phi} = \mathrm{vech}(\bm{\Sigma}_{tr})$. $\mathbf{Q}$ is an orthonormal matrix in the space of $2\times 2$ orthonormal matrices, $\mathcal{O}(2)$:
\begin{equation}
  \mathbf{Q} \in \mathcal{O}(2) = \left\{
  \begin{bmatrix}
    \cos \theta & -\sin \theta \\
    \sin \theta & \cos \theta
  \end{bmatrix}
  : \theta \in [-\pi,\pi]\right\} \cup \left\{
  \begin{bmatrix}
    \cos \theta & \sin \theta \\
    \sin \theta & -\cos \theta
  \end{bmatrix}
  : \theta \in [-\pi,\pi]\right\},
\end{equation}
where the first set is the set of `rotation' matrices and the second set is the set of `reflection' matrices.

Given the `sign normalization' $\mathrm{diag}(\mathbf{A}_{0}) \geq \mathbf{0}_{2\times 1}$, the set of values for $\mathbf{A}_{0}$ that are consistent with the reduced-form parameters in the absence of additional restrictions is
\small
\begin{multline}
    \mathbf{A}_{0} \in \left\{\frac{1}{\sigma_{11}\sigma_{22}}
    \begin{bmatrix}
      \sigma_{22}\cos \theta -\sigma_{21}\sin \theta  & \sigma_{11}\sin \theta \\
      -\sigma_{21}\cos \theta -\sigma_{22}\sin \theta & \sigma_{11}\cos \theta
    \end{bmatrix}
    : \sigma_{22}\cos \theta \geq \sigma_{21}\sin \theta, \cos \theta \geq 0, \theta \in [-\pi,\pi]\right\} \\
    \cup \left\{\frac{1}{\sigma_{11}\sigma_{22}}
    \begin{bmatrix}
      \sigma_{22}\cos\theta - \sigma_{21}\sin\theta & \sigma_{11}\sin\theta \\
      \sigma_{22}\sin\theta + \sigma_{21}\cos\theta & -\sigma_{11}\cos\theta
    \end{bmatrix}
    : \sigma_{22}\cos\theta \geq \sigma_{21}\sin\theta, \cos\theta \leq 0, \theta \in [-\pi,\pi]\right\}.
\end{multline}
\normalsize

\subsection{Shock-sign restrictions}
\label{subsec:sign}

Consider the `shock-sign restriction' that $\varepsilon_{1k}$ is nonnegative for some $k \in \left\{1,\ldots,T\right\}$:
\begin{equation}
  \varepsilon_{1k} = \mathbf{e}_{1,2}'\mathbf{A}_{0}\mathbf{y}_{k} = (\sigma_{11}\sigma_{22})^{-1}\left(\sigma_{22}y_{1k}\cos \theta + (\sigma_{11}y_{2k} - \sigma_{21}y_{1k})\sin \theta\right) \geq 0. \label{eq:nsr}
\end{equation}
Given the realization of the data in period $k$, Equation~(\ref{eq:nsr}) implies that the restricted structural shock can be written as a function $\varepsilon_{1k}(\theta,\bm{\phi},\mathbf{y}_{k})$. Under the sign normalization and the shock-sign restriction, $\theta$ is restricted to the set
\begin{multline}
  \theta \in \left\{\theta: \sigma_{21}\sin\theta \leq \sigma_{22}\cos\theta, \cos\theta \geq 0, \sigma_{22}y_{1k}\cos\theta \geq (\sigma_{21}y_{1k}-\sigma_{11}y_{2k})\sin\theta, -\pi \leq \theta \leq \pi \right\} \\
  \cup \left\{\theta: \sigma_{21}\sin\theta \leq \sigma_{22}\cos\theta, \cos\theta \leq 0, \sigma_{22}y_{1k}\cos\theta \geq (\sigma_{21}y_{1k}-\sigma_{11}y_{2k})\sin\theta, -\pi \leq \theta \leq \pi \right\}.
\end{multline}
Since $y_{1k}$ and $y_{2k}$ enter the inequalities characterising this set, the shock-sign restriction induces a set-valued mapping from $\bm{\phi}$ to $\theta$ that depends on the realization of $\mathbf{y}_{k}$. For example,  if $\sigma_{21} < 0$, $\sigma_{21}y_{1k}-\sigma_{11}y_{2k} > 0$ and $y_{1k} > 0$,
\begin{equation}
  \theta \in \left[\arctan\left(\frac{\sigma_{22}}{\sigma_{21}}\right),\arctan\left(\frac{\sigma_{22}y_{1k}}{\sigma_{21}y_{1k}-\sigma_{11}y_{2k}}\right)\right].\footnote{See Appendix~A for the full characterization of this mapping.}
\end{equation}
The direct dependence of this mapping on the realization of the data implies that the standard notion of an identified set -- the set of observationally equivalent structural parameter values given the reduced-form parameters -- does not apply. Consequently, it is not obvious whether existing frequentist procedures for conducting inference in set-identified models are valid under NR. Moreover, it is unclear whether the restrictions are, in fact, set-identifying in a formal frequentist sense. We formally analyze identification under NR in Section~\ref{sec:identification}.

When conducting Bayesian inference, AR18 construct the posterior using the conditional likelihood, which is the likelihood of observing the data conditional on the NR holding. Letting $\mathbf{y}^{T} = (\mathbf{y}_{1}',\ldots,\mathbf{y}_{T}')'$ represent a realization of the random variable $\mathbf{Y}^{T}$, the conditional likelihood is
\begin{equation}
  p\left(\mathbf{y}^{T} | \theta, \bm{\phi}, \varepsilon_{1k}(\theta,\bm{\phi},\mathbf{y}_{k}) \geq 0\right) = \frac{\prod_{t=1}^{T}(2\pi)^{-1}|\bm{\Sigma}|^{-\frac{1}{2}}\exp\left(-\frac{1}{2}\mathbf{y}_{t}'\bm{\Sigma}^{-1}\mathbf{y}_{t}\right)}{\mathrm{Pr}( \varepsilon_{1k} \geq 0 | \theta,\bm{\phi})} 1\left(\varepsilon_{1k}(\theta,\bm{\phi},\mathbf{y}_{k}) \geq 0\right).
\end{equation}
The numerator in the first term is a function of $\bm{\phi}$ and $\mathbf{y}^{T}$, while the denominator is equal to 1/2, because the marginal distribution of $\varepsilon_{1k}$ is standard normal. The conditional likelihood therefore depends on $\theta$ only through the indicator function $1\left(\varepsilon_{1k}(\theta,\bm{\phi},\mathbf{y}_{k}) \geq 0\right)$. This indicator function truncates the likelihood, with the truncation points depending on $\mathbf{y}_{k}$. To illustrate, the left panel of Figure~\ref{fig:LikSign} plots the likelihood given different realizations of the data drawn from a data-generating process with $\sigma_{21} < 0$ and assuming for simplicity that the econometrician knows $\bm{\phi}$.\footnote{The data-generating process assumes $\mathbf{A}_{0} = \begin{bmatrix} 1 & 0.5 \\ 0.2 & 1.2 \end{bmatrix}$, which implies that $\theta = \arcsin(0.5\sigma_{22})$ with $\mathbf{Q}$ equal to the rotation matrix. We assume the time series is of length $T=3$ and draw sequences of structural shocks such that $\varepsilon_{1,1} \geq 0$. $T$ is a small number to control Monte Carlo sampling error in the exercises below. The analysis with known $\bm{\phi}$ replicates the situation with a large sample, where the likelihood for $\bm{\phi}$ concentrates at the truth. The assumption that $\bm{\phi}$ is known also facilitates visualizing the likelihood, which otherwise is a function of four parameters.} The conditional likelihood is flat over the region for $\theta$ satisfying the shock-sign restriction and is zero outside this region. The support of the nonzero region depends on the realization of $\mathbf{y}_{k}$.

\begin{figure}[h]
    \center
    \caption{Shock-sign Restriction} \label{fig:LikSign}
    \begin{tabular}{c c c}
        \includegraphics[scale=0.5]{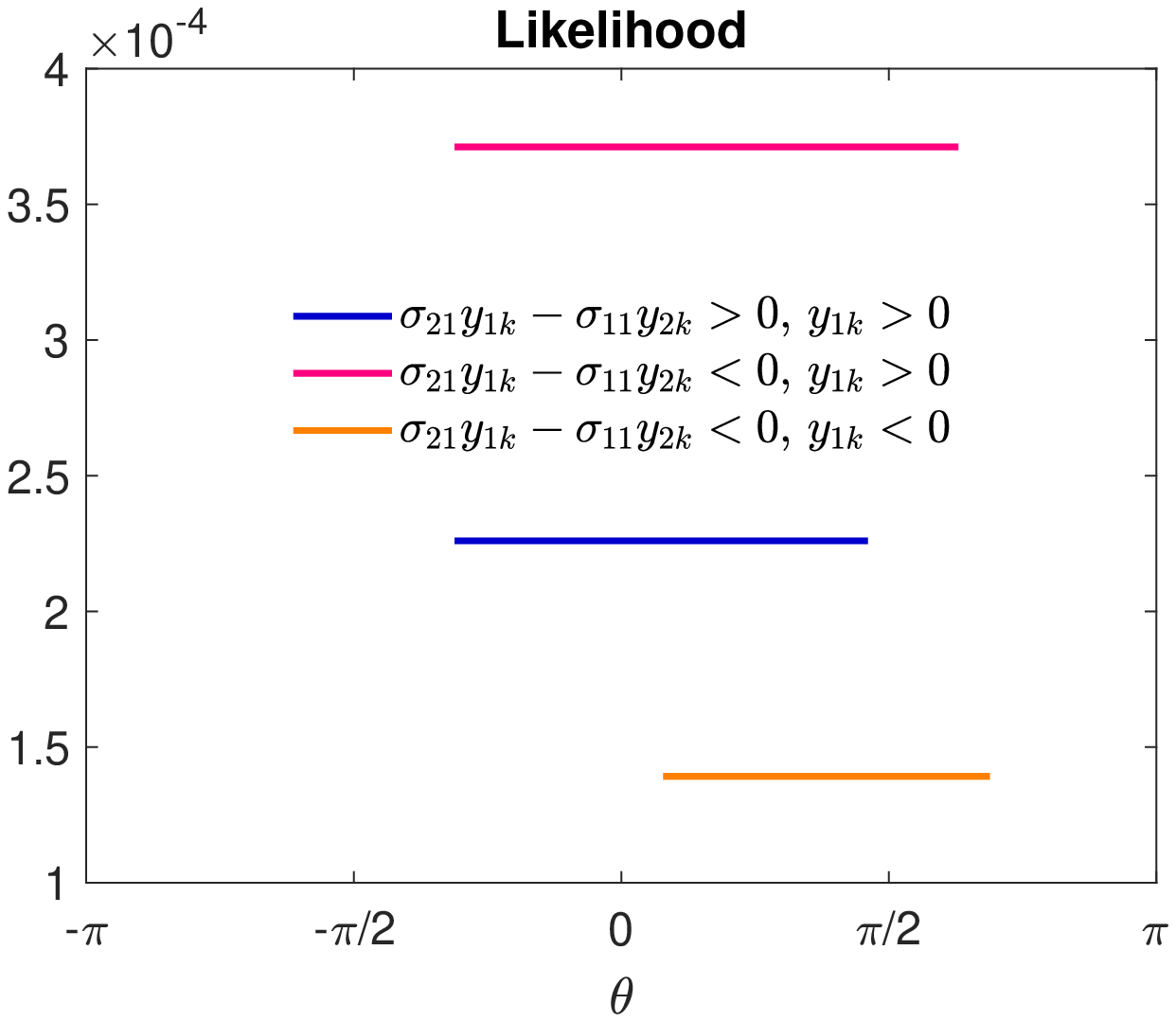} & \phantom{a} & \includegraphics[scale=0.5]{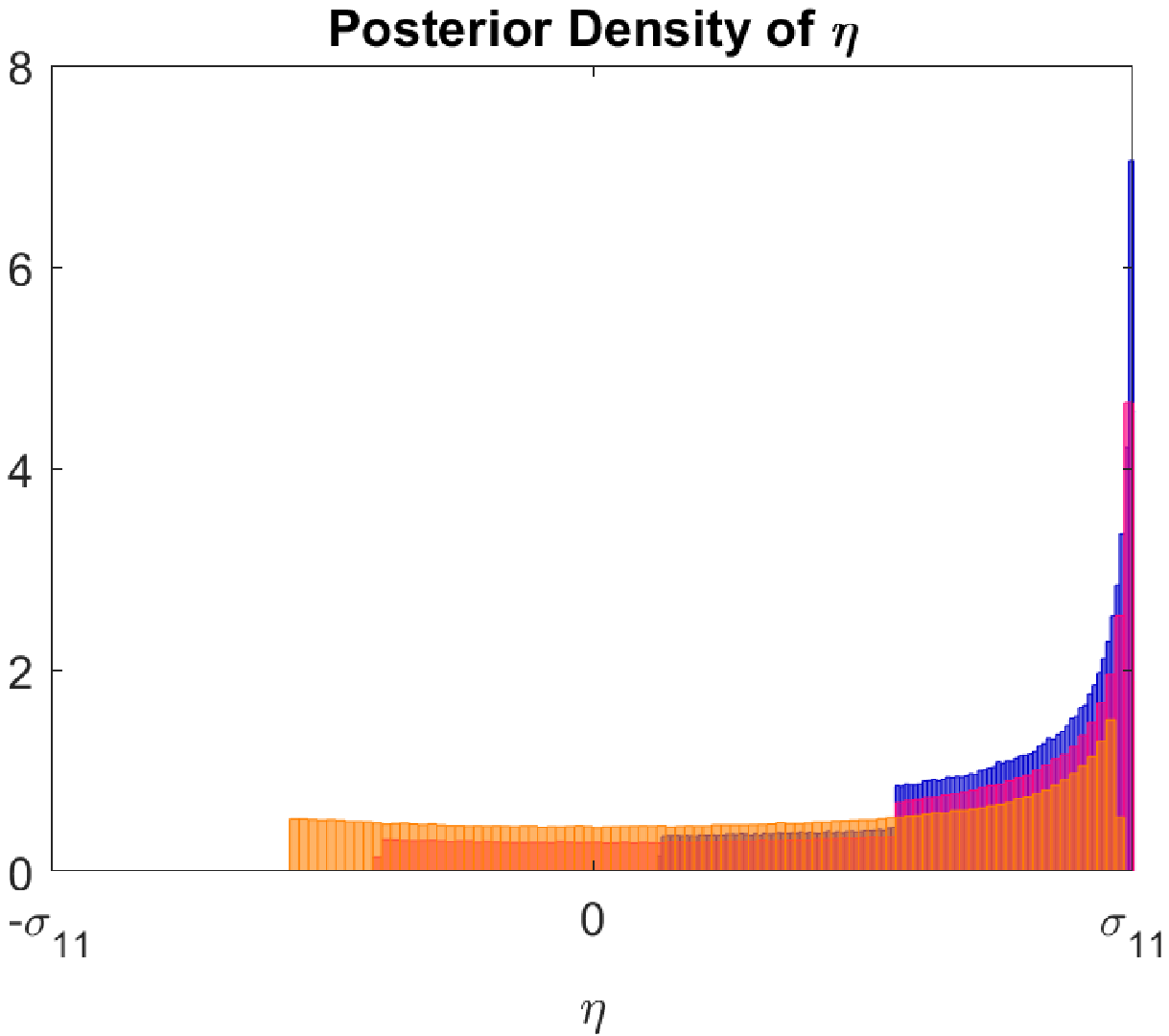} \\
    \end{tabular}
    \footnotesize \parbox[t]{0.65 in}{Notes:}\parbox[t]{5 in}{$T = 3$, $\bm{\phi}$ is known and $\varepsilon_{1k}(\theta,\bm{\phi},\mathbf{y}_{k}) \geq 0$ is the narrative sign restriction; likelihood in top-left panel is zero outside of plotted intervals; posterior density of $\eta = \sigma_{11}\cos\theta$ is approximated using 1,000,000 draws of $\theta$ from its uniform posterior.} \\
\end{figure}

The flat likelihood function implies that the posterior will be proportional to the prior in the region where the likelihood function is nonzero, and it will be zero outside this region. The standard approach to Bayesian inference in SVARs identified via sign restrictions assumes a uniform (or Haar) prior over $\mathbf{Q}$, as does the approach in AR18.\footnote{See, for example, \cite{Uhlig_2005}, Rubio-Ram\'{i}rez, Waggoner and Zha (2010)\nocite{Rubio-Ramirez_Waggoner_Zha_2010}, \cite{Baumeister_Hamilton_2015} and Arias, Rubio-Ram\'{i}rez and Waggoner (2018)\nocite{Arias_Rubio-Ramirez_Waggoner_2018}.} In the bivariate example, this is equivalent to a prior for $\theta$ that is uniform over the interval $[-\pi,\pi]$. This prior implies that the posterior for $\theta$ is also uniform over the interval for $\theta$ where the likelihood function is nonzero.

The impact impulse response of $y_{1t}$ to a positive standard-deviation shock $\varepsilon_{1t}$ is $\eta \equiv \sigma_{11}\cos\theta$. The right panel of Figure~\ref{fig:LikSign} plots the posterior  for $\eta$ induced by a uniform prior over $\theta$ given the same realizations of the data for which the likelihood was plotted in the left panel. The uniform posterior for $\theta$ induces a posterior for $\eta$ that assigns more probability mass to more-extreme values of $\eta$. This highlights that even a `uniform' prior may be informative for parameters of interest, which is also the case under traditional sign restrictions (\cite{Baumeister_Hamilton_2015}). One difference is that the conditional prior under sign restrictions is never updated by the data, whereas the support and shape of the posterior for $\eta$ under NR may depend on the realization of $\mathbf{y}_{k}$ through its effect on the truncation points of the likelihood, so there may be some updating of the conditional prior by the data. For example, when $\sigma_{21} < 0$, $\sigma_{21}y_{1k}-\sigma_{11}y_{2k} > 0$ and $y_{1k} > 0$,
\begin{equation}\label{eq:impulseresponseset}
  \eta \in \left[\sigma_{11}\cos\left(\arctan\left(\max\left\{-\frac{\sigma_{22}}{\sigma_{21}},\frac{\sigma_{22}y_{1k}}{\sigma_{21}y_{1k}-\sigma_{11}y_{2k}}\right\}\right)\right),\sigma_{11}\right].
\end{equation}
However, the conditional prior is not updated at values of $\theta$ corresponding to the flat region of the likelihood. Posterior inference about $\eta$ may therefore still be sensitive to the choice of prior, as in standard set-identified SVARs.

\subsection{Historical-decomposition restrictions}
\label{subsec:histdecomp}

The historical decomposition is the contribution of a particular structural shock to the observed unexpected change in a particular variable over some horizon. The contribution of the first shock to the change in the first variable in the $k$th period is
\begin{equation}
    H_{1,1,k}(\theta,\bm{\phi},\mathbf{y}_{k}) = \sigma_{22}^{-1}\left(\sigma_{22}y_{1k}\cos^{2} \theta + (\sigma_{11}y_{2k} - \sigma_{21}y_{1k})\cos\theta \sin \theta\right),
\end{equation}
while the contribution of the second shock is
\begin{equation}
  H_{1,2,k}(\theta,\bm{\phi},\mathbf{y}_{k}) = \sigma_{22}^{-1}\left(\sigma_{22}y_{1k}\sin^{2}\theta + (\sigma_{21}y_{1k}-\sigma_{11}y_{2k})\cos\theta\sin\theta\right).
\end{equation}
Consider the restriction that the first structural shock in period $k$ was positive and (in the language of AR18) the `most important contributor' to the change in the first variable, which requires that $|H_{1,1,k}(\theta,\bm{\phi},\mathbf{y}_{k})| \geq |H_{1,2,k}(\theta,\bm{\phi},\mathbf{y}_{k})|$. Under these restrictions and the sign normalization, $\theta$ must satisfy a set of inequalities that depends on $\bm{\phi}$ and $\mathbf{y}_{k}$. As in the case of the shock-sign restriction, this set of restrictions generates a set-valued mapping from $\bm{\phi}$ to $\theta$ that depends on $\mathbf{y}_{k}$.\footnote{See Appendix~A for this set of inequalities. It is more difficult to analytically characterize the induced mapping than in the shock-sign example, so we do not pursue this.}

Let $\mathcal{D}(\theta,\bm{\phi},\mathbf{y}_{k}) = 1\{\varepsilon_{1k}(\theta,\bm{\phi},\mathbf{y}_{k}) \geq 0, |H_{1,1,k}(\theta,\bm{\phi},\mathbf{y}_{k})| \geq |H_{1,2,k}(\theta,\bm{\phi},\mathbf{y}_{k})|\}$ represent the indicator function equal to one when the NR are satisfied and equal to zero otherwise, and let $\tilde{\mathcal{D}}(\theta,\bm{\phi},\bm{\varepsilon}_{k}) = 1\{\varepsilon_{1k} \geq 0, |\tilde{H}_{1,1,k}(\theta,\bm{\phi},\varepsilon_{1k})| \geq |\tilde{H}_{1,2,k}(\theta,\bm{\phi},\varepsilon_{2k})|\}$ represent the indicator function for the same event in terms of the structural shocks rather than the data. The conditional likelihood function given the restrictions is then
\begin{equation}
  p\left(\mathbf{y}^{T} | \theta, \bm{\phi}, \mathcal{D}(\theta,\bm{\phi},\mathbf{y}_{k})=1 \right) = \frac{\prod_{t=1}^{T}(2\pi)^{-\frac{n}{2}}|\bm{\Sigma}|^{-\frac{1}{2}}\exp\left(-\frac{1}{2}\mathbf{y}_{t}'\bm{\Sigma}^{-1}\mathbf{y}_{t}\right)}
  {\mathrm{Pr}(\tilde{\mathcal{D}}(\theta,\bm{\phi},\bm{\varepsilon}_{k}) = 1|\theta,\bm{\phi})}\mathcal{D}(\theta,\bm{\phi},\mathbf{y}_{k}).
\end{equation}
As in the case of the shock-sign restriction, the numerator of the first term does not depend on $\theta$. In contrast, the probability in the denominator now depends on $\theta$ through the historical decomposition. Intuitively, changing $\theta$ changes the impulse responses of $y_{1t}$ to the two shocks and thus changes the ex ante probability that $|\tilde{H}_{1,1,k}(\theta,\bm{\phi},\varepsilon_{1k})| \geq |\tilde{H}_{1,2,k}(\theta,\bm{\phi},\varepsilon_{2k})|$. The conditional likelihood therefore depends on $\theta$ both through this probability and through the indicator function determining the truncation points of the likelihood. Consequently, the likelihood function is not necessarily flat when it is nonzero.

To illustrate, the left panel of Figure~\ref{fig:likelihoodHD} plots the conditional likelihood evaluated at a random realization of the data satisfying the restrictions using the same data-generating process as above and assuming that $\bm{\phi}$ is known. The probability in the denominator of the conditional likelihood is approximated by drawing 1,000,000 realizations of $\bm{\varepsilon}_{k}$ and computing the proportion of draws satisfying the restrictions at each value of $\theta$. This probability is plotted in the right panel of Figure~\ref{fig:likelihoodHD}. The likelihood is again truncated according to a set-valued mapping from $\bm{\phi}$ and $\mathbf{y}_{k}$ to $\theta$, but an important difference from the case with the shock-sign restriction is that the likelihood is no longer flat within the region where it is nonzero. In particular, the conditional likelihood has a maximum at the value of $\theta$ that minimizes the ex ante probability that the NR are satisfied (within the set of values of $\theta$ that are consistent with the restrictions). The posterior  for $\theta$ induced by a uniform prior will therefore assign greater posterior probability to values of $\theta$ that yield a lower ex ante probability of satisfying the NR.

\begin{figure}[h]
    \center
    \caption{Historical-decomposition Restriction} \label{fig:likelihoodHD}
    \begin{tabular}{ccc}
        \includegraphics[scale=0.5]{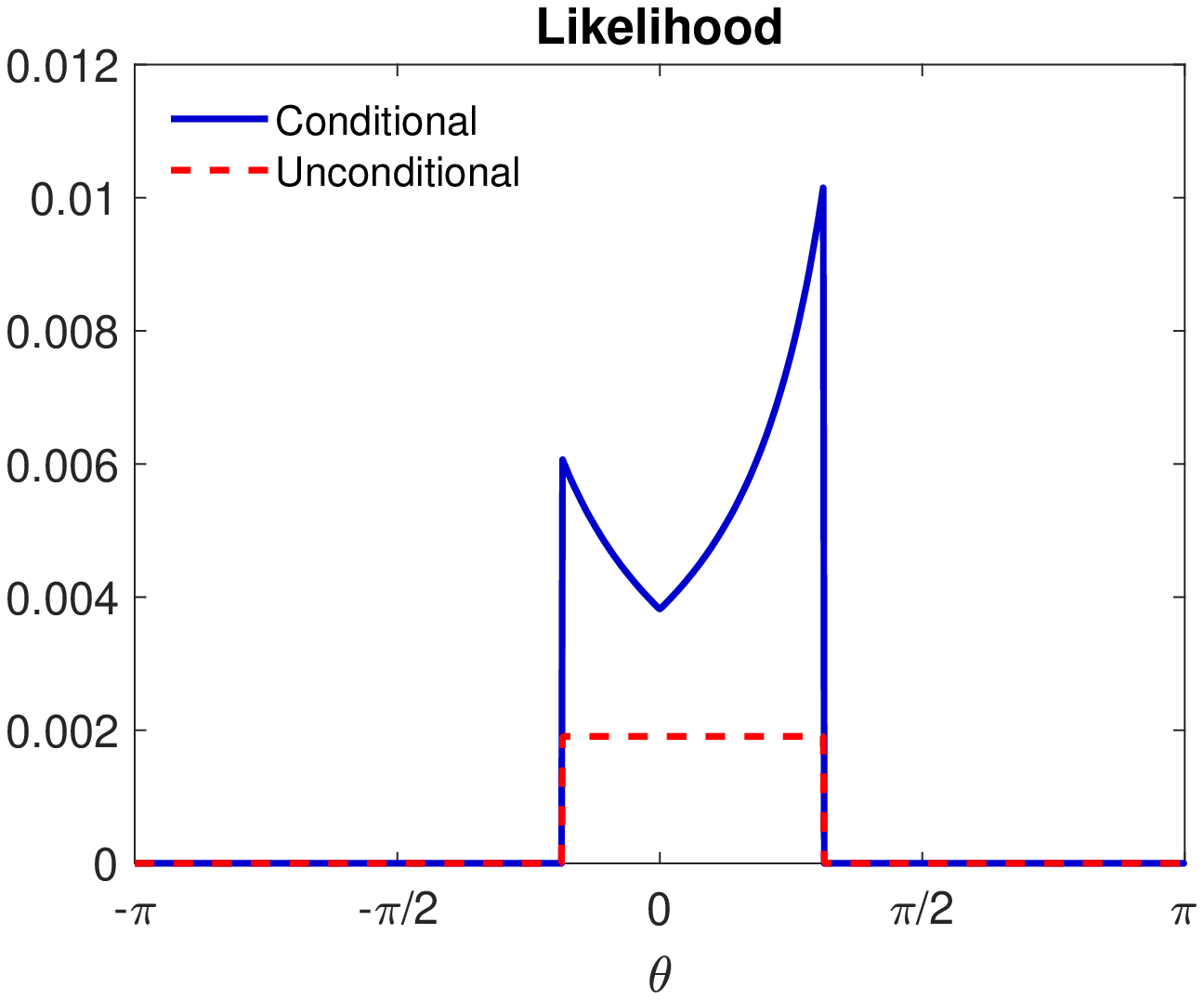} & \phantom{a} & \includegraphics[scale=0.5]{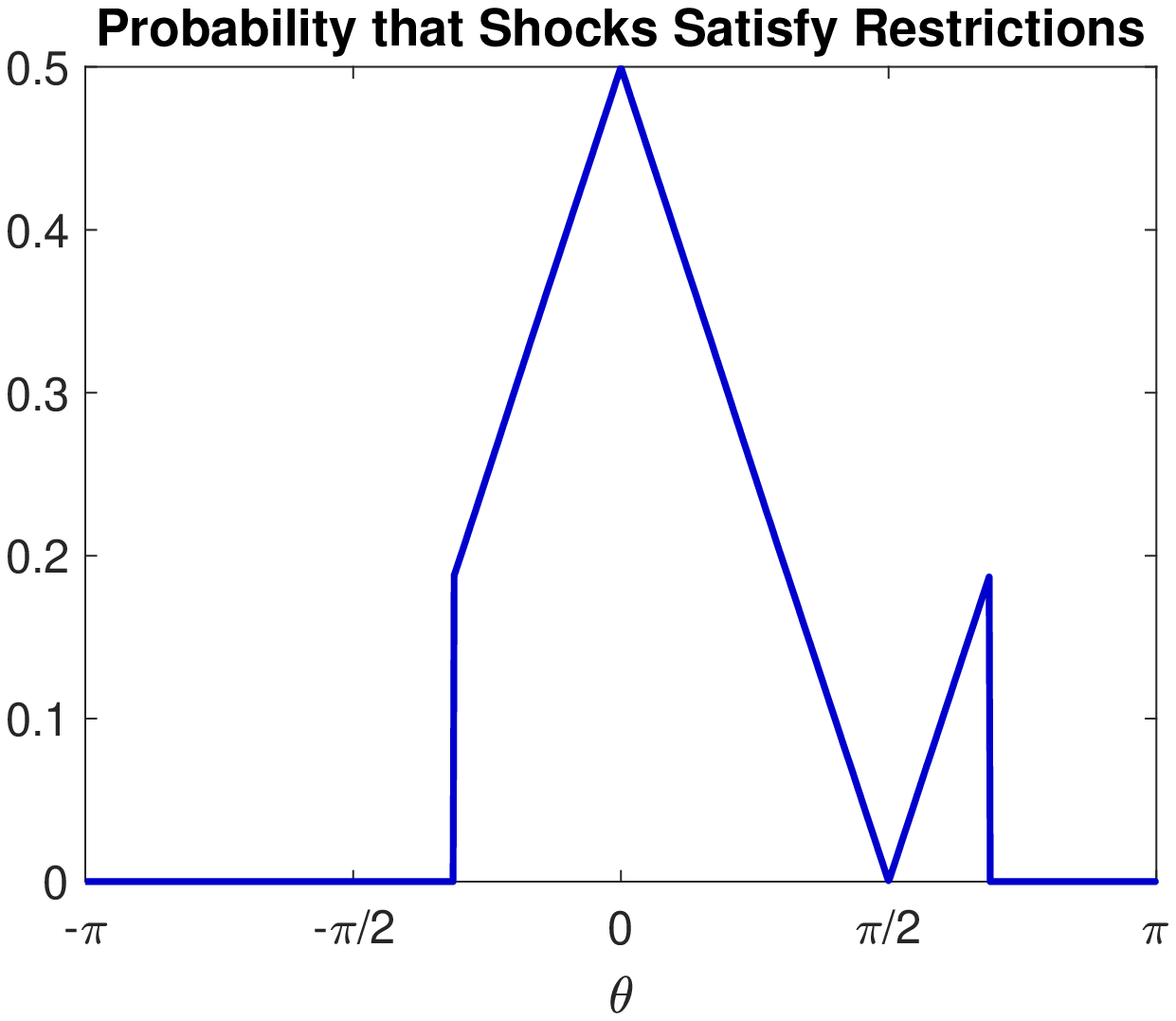} \\
        %\multicolumn{2}{c}{\includegraphics[scale=0.5]{"HistoricalDecompositionHellingerConditionalLikelihood"}} \\
    \end{tabular}
    \footnotesize \parbox[t]{0.65 in}{Notes:}\parbox[t]{5 in}{$T = 3$ and $\bm{\phi}$ is known; $\varepsilon_{1,1}(\bm{\phi},\theta,\mathbf{y}_{k}) \geq 0$ and $|H_{1,1,1}(\bm{\phi},\theta,\mathbf{y}_{k})| \geq |H_{2,1,1}(\bm{\phi},\theta,\mathbf{y}_{k})|$ are the narrative sign restrictions; $\mathrm{Pr}(\tilde{\mathcal{D}}(\theta,\bm{\phi},\bm{\varepsilon}_{k})=1|\theta,\bm{\phi})$ is approximated using 1,000,000 Monte Carlo draws.} \\
\end{figure}

If we view the narrative event as a part of the observables and its probability of occurring depends on the parameter of interest, conditioning on the narrative event implies that we are conditioning on a non-ancillary statistic. When conducting likelihood-based inference, conditioning on a non-ancillary statistic is undesirable, because it represents a loss of information about the parameter of interest. The probability that the shock-sign restriction is satisfied is independent of the parameters, so the event that the restriction is satisfied is ancillary. In the case where there is also a restriction on the historical decomposition, the probability that the NR are satisfied depends on $\theta$, so the event that the NR are satisfied is not ancillary. Conditioning on this non-ancillary event results in the likelihood no longer being flat, but the shape of the likelihood is fully driven by the inverse probability of the conditioning event. That is, the loss of information for $\theta$ can be viewed as distorting the shape of the posterior in the sense that the prior is updated toward values of $\theta$ that make the event that the NR are satisfied less likely ex ante. We therefore advocate forming the likelihood without conditioning on the restrictions holding.

The joint (or unconditional) likelihood of observing the data and the NR holding is obtained by multiplying the conditional likelihood by the probability that the NR are satisfied:
\begin{equation}\label{eq:unlik1}
    p\left(\mathbf{y}^{T}, \tilde{\mathcal{D}}(\theta,\bm{\phi},\bm{\varepsilon}_{k}) = 1 | \theta, \bm{\phi}\right) =
    \prod_{t=1}^{T}(2\pi)^{-\frac{n}{2}}|\bm{\Sigma}|^{-\frac{1}{2}}\exp\left(-\frac{1}{2}\left(\mathbf{y}_{t}'\bm{\Sigma}^{-1}\mathbf{y}_{t}\right)\right)\mathcal{D}(\theta,\bm{\phi},\mathbf{y}_{k}).
\end{equation}
Conditional on being nonzero, the unconditional likelihood is flat with respect to $\theta$. The unconditional likelihood depends on $\theta$ only through the points of truncation. To illustrate, Figure~\ref{fig:likelihoodHD} plots the unconditional likelihood given the same realization of the data used to plot the conditional likelihood. As in the case of the shock-sign restriction, the flat unconditional likelihood implies that posterior inference may be sensitive to the choice of prior. We describe our approach to addressing this posterior sensitivity in Section~\ref{subsec:robustbayes}.

\section{General framework}
\label{sec:framework}

This section describes the general SVAR($p$) and outlines the restrictions that we consider.

\subsection{SVAR($p$)}
\label{subsec:svar}

Let $\mathbf{y}_{t}$ be an $n\times 1$ vector of endogenous variables following the SVAR($p$) process:
\begin{equation}\label{eq:SVARp}
    \mathbf{A}_{0}\mathbf{y}_{t} = \sum_{l=1}^{p}\mathbf{A}_{l}\mathbf{y}_{t-l} + \bm{\varepsilon}_{t}, \quad t=1,...,T,
\end{equation}
where $\mathbf{A}_{0}$ is invertible and $\bm{\varepsilon}_{t}\overset{iid}{\sim} N(\mathbf{0}_{n\times 1},\mathbf{I}_{n})$ are structural shocks. The initial conditions $(\mathbf{y}_{1-p},...,\mathbf{y}_{0})$ are given. We omit exogenous regressors (such as a constant) for simplicity of exposition, but these are straightforward to include. Letting $\mathbf{x}_{t} = (\mathbf{y}_{t-1}',\ldots,\mathbf{y}_{t-p}')'$ and $\mathbf{A}_{+} = (\mathbf{A}_{1},\ldots,\mathbf{A}_{p})$, rewrite the SVAR($p$) as
\begin{equation}\label{eq:SVARpCondensed}
  \mathbf{A}_{0}\mathbf{y}_{t} = \mathbf{A}_{+}\mathbf{x}_{t} + \bm{\varepsilon}_{t}, \quad t=1,...,T.
\end{equation}
$(\mathbf{A}_{0},\mathbf{A}_{+})$ are the structural parameters. The reduced-form VAR($p$) representation is
\begin{equation}\label{eq:VARpCondensed}
  \mathbf{y}_{t} = \mathbf{B}\mathbf{x}_{t} + \mathbf{u}_{t}, \quad t=1,...,T,
\end{equation}
where $\mathbf{B} = (\mathbf{B}_{1},\ldots,\mathbf{B}_{p})$, $\mathbf{B}_{l}=\mathbf{A}_{0}^{-1}\mathbf{A}_{l}$ for $l=1,\ldots,p$, and $\mathbf{u}_{t} = \mathbf{A}_{0}^{-1}\bm{\varepsilon}_{t} \overset{iid}{\sim} N(\mathbf{0}_{n\times 1},\bm{\Sigma})$ with $\bm{\Sigma} = \mathbf{A}_{0}^{-1}(\mathbf{A}_{0}^{-1})'$. $\bm{\phi} = (\mathrm{vec}(\mathbf{B})',\mathrm{vech}(\bm{\Sigma})')' \in \bm{\Phi}$ are the reduced-form parameters. We assume that $\mathbf{B}$ is such that the VAR($p$) can be inverted into an infinite-order vector moving average (VMA($\infty$)) representation.\footnote{The VAR($p$) is invertible into a VMA($\infty$) process when the eigenvalues of the companion matrix lie inside the unit circle. See \cite{Hamilton_1994} or \cite{Kilian_Lutkepohl_2017}.}

As is standard in the literature that considers set-identified SVARs, we reparameterize the model into its orthogonal reduced form (e.g., Arias et al. (2018)\nocite{Arias_Rubio-Ramirez_Waggoner_2018}):
\begin{equation}\label{eq:orthogonalreducedform}
  \mathbf{y}_{t} = \mathbf{B}\mathbf{x}_{t} + \bm{\Sigma}_{tr}\mathbf{Q}\bm{\varepsilon}_{t}, \quad t=1,...,T,
\end{equation}
where $\bm{\Sigma}_{tr}$ is the lower-triangular Cholesky factor of $\bm{\Sigma}$ (i.e. $\bm{\Sigma}_{tr}\bm{\Sigma}_{tr}'=\bm{\Sigma}$) with diagonal elements normalized to be non-negative, $\mathbf{Q}$ is an $n\times n$ orthonormal matrix and $\mathcal{O}(n)$ is the set of all such matrices. The structural and orthogonal reduced-form parameterizations are related through the mapping $\mathbf{B} = \mathbf{A}_{0}^{-1}\mathbf{A}_{+}$, $\bm{\Sigma} = \mathbf{A}_{0}^{-1}(\mathbf{A}_{0}^{-1})'$ and $\mathbf{Q} = \bm{\Sigma}_{tr}^{-1}\mathbf{A}_{0}^{-1}$ with inverse mapping $\mathbf{A}_{0} = \mathbf{Q}'\bm{\Sigma}_{tr}^{-1}$ and $\mathbf{A}_{+} = \mathbf{Q}'\bm{\Sigma}_{tr}^{-1}\mathbf{B}$.

The VMA($\infty$) representation of the model is
\begin{equation}\label{eq:vma}
  \mathbf{y}_{t} = \sum_{h=0}^{\infty}\mathbf{C}_{h}\mathbf{u}_{t-h} = \sum_{h=0}^{\infty}\mathbf{C}_{h}\bm{\Sigma}_{tr}\mathbf{Q}\bm{\varepsilon}_{t}, \quad t=1,...,T,
\end{equation}
where $\mathbf{C}_{h}$ is the $h$th term in $(\mathbf{I}_{n} - \sum_{l=1}^{p}\mathbf{B}_{l}L^{l})^{-1}$ and $L$ is the lag operator. $\mathbf{C}_{h}$ is defined recursively by $\mathbf{C}_{h} = \sum_{l=1}^{\min\{k,p\}}\mathbf{B}_{l}\mathbf{C}_{h-l}$ for $h \geq 1$ with $\mathbf{C}_{0} = \mathbf{I}_{n}$. The $(i,j)$th element of the matrix $\mathbf{C}_{h}\bm{\Sigma}_{tr}\mathbf{Q}$, which we denote by $\eta_{i,j,h}(\bm{\phi},\mathbf{Q})$, is the horizon-$h$ impulse response of the $i$th variable to the $j$th structural shock:
\begin{equation}\label{eq:ir}
  \eta_{i,j,h}(\bm{\phi},\mathbf{Q}) = \mathbf{e}_{i,n}'\mathbf{C}_{h}\bm{\Sigma}_{tr}\mathbf{Q}\mathbf{e}_{j,n} = \mathbf{c}_{i,h}'(\bm{\phi})\mathbf{q}_{j},
\end{equation}
where $\mathbf{c}_{i,h}'(\bm{\phi}) = \mathbf{e}_{i,n}'\mathbf{C}_{h}\bm{\Sigma}_{tr}$ is the $i$th row of $\mathbf{C}_{h}\bm{\Sigma}_{tr}$ and $\mathbf{q}_{j} = \mathbf{Q}\mathbf{e}_{j,n}$ is the $j$th column of $\mathbf{Q}$.

\subsection{Narrative restrictions}
\label{subsec:narrativesignrestrictions}

In the absence of any identifying restrictions, it is well-known that $\mathbf{Q}$ is set-identified. Consequently, functions of $\mathbf{Q}$, such as the impulse responses, are also set-identified. Imposing traditional identifying restrictions on the SVAR is equivalent to restricting $\mathbf{Q}$ to lie in a subspace of $\mathcal{O}(n)$. It is conventional to impose a `sign normalization' on the structural shocks. We normalize the diagonal elements of $\mathbf{A}_{0}$ to be non-negative, so a positive value of $\varepsilon_{it}$ is a positive shock to the $i$th equation in the SVAR at time $t$. The sign normalization implies that $\mathrm{diag}(\mathbf{Q}'\bm{\Sigma}_{tr}^{-1}) \geq \mathbf{0}_{n\times 1}$.

It is common to impose sign restrictions on the impulse responses (e.g., \cite{Uhlig_2005}) or on the structural parameters themselves. For example, the restriction that the horizon-$h$ impulse response of the $i$th variable to the $j$th shock is nonnegative is $c_{i,h}'(\bm{\phi})\mathbf{q}_{j} \geq 0$, which is a linear inequality restriction on a single column of $\mathbf{Q}$ that depends only on the reduced-form parameter $\bm{\phi}$. Restrictions on elements of $\mathbf{A}_{0}$ take a similar form.

In contrast, NR constrain the values of the structural shocks in particular periods. The structural shocks are
\begin{equation}
    \bm{\varepsilon}_{t} = \mathbf{A}_{0}\mathbf{u}_{t} = \mathbf{Q}'\bm{\Sigma}_{tr}^{-1}\mathbf{u}_{t}.
\end{equation}
The shock-sign restriction that the $i$th structural shock at time $k$ is positive is
\begin{equation} \label{eq:structuralshock}
  \varepsilon_{ik}(\bm{\phi},\mathbf{Q},\mathbf{u}_{k}) = \mathbf{e}_{i,n}'\mathbf{Q}'\bm{\Sigma}_{tr}^{-1}\mathbf{u}_{k} = (\bm{\Sigma}_{tr}^{-1}\mathbf{u}_{k})'\mathbf{q}_{i} \geq 0.
\end{equation}
We can treat $\mathbf{u}_{t}$ as observable given $\bm{\phi}$ and the data, so we suppress the dependence of $\mathbf{u}_{t}$ on $\bm{\phi}$ and $(\mathbf{y}_{t}',\mathbf{x}_{t}')'$ for notational convenience. The restriction in (\ref{eq:structuralshock}) is a linear inequality restriction on a single column of $\mathbf{Q}$. In contrast with traditional sign restrictions, the shock-sign restriction depends directly on the data through the reduced-form VAR innovations.

In addition to shock-sign restrictions, AR18 consider restrictions on the historical decomposition, which is the cumulative contribution of the $j$th shock to the observed unexpected change in the $i$th variable between periods $k$ and $k+h$:
\begin{equation}\label{eq:historicaldecomposition}
  H_{i,j,k,k+h}\left(\bm{\phi},\mathbf{Q},\left\{\mathbf{u}_{t}\right\}_{t=k}^{k+h}\right) = \sum_{l=0}^{h} \mathbf{e}_{i,n}'\mathbf{C}_{l}\bm{\Sigma}_{tr}\mathbf{Q}\mathbf{e}_{j,n}\mathbf{e}_{j,n}'\bm{\varepsilon}_{k+h-l} = \sum_{l=0}^{h} \mathbf{c}_{i,l}'(\bm{\phi})\mathbf{q}_{j}\mathbf{q}_{j}'\bm{\Sigma}_{tr}^{-1}\mathbf{u}_{k+h-l}.
\end{equation}
One example of a restriction on the historical decomposition is that the $j$th structural shock was the `most important contributor' to the change in the $i$th variable between periods $k$ and $k+h$, which requires that $|H_{i,j,k,k+h}| \geq \max_{l \neq j} |H_{i,l,k,k+h}|$. Another example is that the $j$th structural shock was the `overwhelming contributor' to the change in the $i$th variable between periods $k$ and $k+h$, which requires that $|H_{i,j,k,k+h}| \geq \sum_{l \neq j} |H_{i,l,k,k+h}|$. From Equation~(\ref{eq:historicaldecomposition}), it is clear that these restrictions are nonlinear inequality constraints that simultaneously constrain every column of $\mathbf{Q}$ and that depend on the realizations of the data in particular periods in addition to the reduced-form parameters.

Other restrictions also naturally fit into this framework. For instance, we can consider restrictions on the relative magnitudes of a particular structural shock in different periods. We refer to these restrictions as `shock-rank restrictions', since they imply a (possibly partial) ordering of the shocks. As an example, one could impose that the $i$th shock in period $k$ was the largest positive realization of this shock in the observed sample. This requires that $\varepsilon_{ik}(\bm{\phi},\mathbf{Q},\mathbf{u}_{k}) \geq \max_{t\neq k}\{\varepsilon_{it}(\bm{\phi},\mathbf{Q},\mathbf{u}_{t})\}$, which can be expressed as a system of $T-1$ linear inequality restrictions on a single column of $\mathbf{Q}$: $(\bm{\Sigma}_{tr}^{-1}(\mathbf{u}_{k}-\mathbf{u}_{t}))'\mathbf{q}_{i} \geq 0$ for $t \neq k$. Alternatively, one could impose that the $i$th shock in period $k$ was the largest-magnitude realization of that shock, or $|\varepsilon_{ik}(\bm{\phi},\mathbf{Q},\mathbf{u}_{k})| \geq \max_{t\neq k}\left\{|\varepsilon_{it}(\bm{\phi},\mathbf{Q},\mathbf{u}_{t})|\right\}$. If $\varepsilon_{ik}(\bm{\phi},\mathbf{Q},\mathbf{u}_{k}) \geq 0$, this would require that $(\bm{\Sigma}_{tr}^{-1}(\mathbf{u}_{k}-\mathbf{u}_{t}))'\mathbf{q}_{i} \geq 0$ and $(\bm{\Sigma}_{tr}^{-1}(\mathbf{u}_{k}+\mathbf{u}_{t}))'\mathbf{q}_{i} \geq 0$ for $t \neq k$, which is a system of $2(T-1)$ linear inequalities constraining $\mathbf{q}_{i}$. These restrictions could also be applied to a subset of the observations rather than the full sample (e.g., $\varepsilon_{ik}(\bm{\phi},\mathbf{Q},\mathbf{u}_{k}) > \varepsilon_{it}(\bm{\phi},\mathbf{Q},\mathbf{u}_{t})$ for some $t \in \{1,\ldots,T\}$).\footnote{Similar to the shock-rank restrictions we describe, \cite{BenZeev_2018} imposes a restriction on the timing of the maximum three-year average of a particular shock, as well as restrictions on the sign and relative magnitudes of this three-year average in specific periods. Restrictions on averages of shocks can also be implemented in the framework we consider.}

The collection of NR can be represented in the general form $N(\bm{\phi},\mathbf{Q},\mathbf{Y}^{T}) \geq \mathbf{0}_{s\times 1}$, where $s$ is the number of restrictions. As an illustration, consider the case where there is a single shock-sign restriction in period $k$, $\varepsilon_{1k}(\bm{\phi},\mathbf{Q},\mathbf{u}_{k}) \geq 0$, as well as the restriction that the first structural shock was the most important contributor to the change in the first variable in period $k$. Then,
\begin{equation}
    N(\bm{\phi},\mathbf{Q},\mathbf{Y}^{T}) =
    \begin{bmatrix}
      (\bm{\Sigma}_{tr}^{-1}\mathbf{u}_{k})'\mathbf{q}_{1} \\
      |\mathbf{e}_{1,n}'\bm{\Sigma}_{tr}\mathbf{q}_{1}\mathbf{q}_{1}'\bm{\Sigma}_{tr}^{-1}\mathbf{u}_{k}| - \max_{j \neq 1}|\mathbf{e}_{1,n}'\bm{\Sigma}_{tr}\mathbf{q}_{j}\mathbf{q}_{j}'\bm{\Sigma}_{tr}^{-1}\mathbf{u}_{k}|
    \end{bmatrix}
    \geq \mathbf{0}_{2\times 1}.
\end{equation}
%In general, the set of values of $\mathbf{Q}$ satisfying the NR and the sign normalization is $\{\mathbf{Q} \in \mathcal{O}(n): N(\bm{\phi},\mathbf{Q},\mathbf{Y}^{T}) \geq \mathbf{0}_{s\times 1}, \mathrm{diag}(\mathbf{Q}'\bm{\Sigma}_{tr}^{-1}) \geq \mathbf{0}_{n\times 1}\}$, so there is a set-valued mapping from $\bm{\phi}$ to $\mathbf{Q}$ that depends on the realization of the data.

Traditional sign and zero restrictions can also be applied alongside NR. We follow AR18 by explicitly allowing for sign restrictions on impulse responses and on elements of $\mathbf{A}_{0}$. We denote such sign restrictions by $S(\bm{\phi},\mathbf{Q}) \geq \mathbf{0}_{\tilde{s}\times 1}$, where $\tilde{s}$ is the number of traditional sign restrictions. It is straightforward to additionally allow for zero restrictions, including `short-run' zero restrictions (as in \cite{Sims_1980}), `long-run' zero restrictions (as in \cite{Blanchard_Quah_1989}), or restrictions arising from external instruments (as in \cite{Mertens_Ravn_2013} and \cite{Stock_Watson_2018}); for example, see GK and Giacomini et al. (2019)\nocite{Giacomini_Kitagawa_Read_2019}.

\subsection{Conditional and unconditional likelihoods}
\label{subsec:likelihoods}

When constructing the posterior  of the SVAR's parameters, AR18 use the likelihood conditional on the NR holding. Define
\begin{align*}
D_N &= D_N (\bm{\phi},\mathbf{Q},\mathbf{Y}^T)  \equiv 1\{N(\bm{\phi},\mathbf{Q},\mathbf{Y}^T) \geq \mathbf{0}_{s\times 1} \}, \\
r(\bm{\phi},\mathbf{Q}) & \equiv \Pr(D_{N}(\bm{\phi}, \mathbf{Q}, \mathbf{Y}^T) = 1|\bm{\phi}, \mathbf{Q}), \\
f(\mathbf{y}^T|\bm{\phi}) & \equiv \prod_{t=1}^{T}(2\pi)^{-\frac{n}{2}}|\bm{\Sigma}|^{-\frac{1}{2}}\exp\left(-\frac{1}{2} \left(\mathbf{y}_{t} - \mathbf{Bx}_t \right)'\bm{\Sigma}^{-1} \left(\mathbf{y}_{t} - \mathbf{Bx}_t \right) \right).
\end{align*}
The likelihood conditional on $D_N = 1$ can be written as
\begin{equation}\label{eq:conditional_likelihood}
p(\mathbf{y}^{T}| D_N = 1, \bm{\phi},\mathbf{Q}) = \frac{f(\mathbf{y}^T|\bm{\phi})}{r(\bm{\phi},\mathbf{Q})} \cdot D_N(\bm{\phi},\mathbf{Q},\mathbf{y}^T).
\end{equation}
$f(\mathbf{y}^T|\bm{\phi})$ is the joint density of the data given $\bm{\phi}$ (i.e., the likelihood function of the reduced-form VAR), which depends only on $\bm{\phi}$ and the data. The indicator function $D_N(\bm{\phi},\mathbf{Q},\mathbf{y}^T)$ is equal to one when the NR are satisfied and is equal to zero otherwise. This determines the truncation points of the likelihood. $r(\bm{\phi},\mathbf{Q})$ is the ex ante probability that the NR are satisfied. This will be a constant when there are only shock-sign or shock-rank restrictions; for example, if there are $s$ shock-sign restrictions, $r(\bm{\phi},\mathbf{Q})  = (1/2)^{s}$. In contrast, when there are restrictions on the historical decomposition, this probability will depend on $\bm{\phi}$ and $\mathbf{Q}$.

Consider the case where $\bm{\phi}$ is known, which will be the case asymptotically because $\bm{\phi}$ is point-identified. When $r(\bm{\phi},\mathbf{Q})$ depends on $\mathbf{Q}$, the conditional likelihood will be maximized at the value of $\mathbf{Q}$ that minimizes $r(\bm{\phi},\mathbf{Q})$ (within the set of values of $\mathbf{Q}$ that satisfy the restrictions). The posterior based on this likelihood will therefore place higher posterior probability on values of $\mathbf{Q}$ that result in a lower ex ante probability that the restrictions are satisfied. As discussed in Section~\ref{subsec:histdecomp}, this is an artefact of conditioning on a non-ancillary event, which represents a loss of information about the parameters.

We therefore advocate constructing the likelihood without conditioning on the NR holding. The unconditional likelihood (the joint distribution of the data and $D_N$) can be expressed as
\begin{align}\
p(\mathbf{y}^{T},D_N=d  | \bm{\phi},\mathbf{Q}) &= \left[ f(\mathbf{y}^T|\bm{\phi}) D_N(\bm{\phi},\mathbf{Q},\mathbf{y}^T)\right]^{d} \cdot \left[ f(\mathbf{y}^T|\bm{\phi}) \left( 1-D_N(\bm{\phi},\mathbf{Q},\mathbf{y}^T) \right)\right]^{1-d} \notag \\
& = f(\mathbf{y}^T|\bm{\phi}) \cdot \left[ D_N(\bm{\phi},\mathbf{Q},\mathbf{y}^T)\right]^{d} \cdot \left[ 1-D_N(\bm{\phi},\mathbf{Q},\mathbf{y}^T) \right]^{1-d}. \label{eq:joint_likelihood}
\end{align}
For any value of $\bm{\phi}$ such that $\mathbf{y}^{T}$ is compatible with the NR, there will be a set of values of $\mathbf{Q}$ that satisfy the restrictions, which depend on the data, but the value of the unconditional likelihood will be the same for all values of $\mathbf{Q}$ within this set. The conditional posterior of $\mathbf{Q}|\bm{\phi},\mathbf{y}^{T}$ will therefore be proportional to the conditional prior for $\mathbf{Q}|\bm{\phi}$ in these regions. Given a fixed number of NR, the likelihood will possess flat regions even with a time-series of infinite length, so posterior inference may be sensitive to the choice of conditional prior for $\mathbf{Q}$, even asymptotically (which is also the case for the conditional likelihood when the restrictions are ancillary). This motivates considering Bayesian inferential procedures that are robust to the choice of unrevisable conditional prior for $\mathbf{Q}$, which we explore in Section~\ref{subsec:robustbayes}.

\subsection{Discussion}
\label{subsec:discussion}

In this section, we briefly discuss the distributional assumptions for the structural shocks and the mechanism that generates the NR.

\subsubsection{Distributional assumptions}

Practitioners may be concerned about the robustness of inference with respect to deviations from the assumption of standard normal shocks. For instance, one could worry that the periods in which the NR are imposed are `unusual' in the sense that the structural shocks in these periods were drawn from a distribution with, say, inflated variance or fat tails. The unconditional likelihood depends on the normality assumption only through $f(\mathbf{y}^T|\bm{\phi})$. By omitting terms in $f(\mathbf{y}^T|\bm{\phi})$ corresponding to the periods in which the NR are imposed, one can conduct inference that is robust to the distributional assumption about the shocks in these particular periods. To illustrate, consider the case where NR are imposed in period $k$ only and assume the likelihood function for $\mathbf{y}^{T}$ takes the form
\begin{equation}
  \tilde{f}(\mathbf{y}^T|\bm{\phi}) = v(\left\{\mathbf{y}_{t} - \mathbf{Bx}_t\right\}_{t\neq k} | \bm{\phi}) w(\mathbf{y}_{k} - \mathbf{Bx}_k),
\end{equation}
where
\begin{equation}
  v(\left\{\mathbf{y}_{t}- \mathbf{Bx}_t\right\}_{t\neq k} | \bm{\phi}) = \prod_{t\neq k}(2\pi)^{-\frac{n}{2}}|\bm{\Sigma}|^{-\frac{1}{2}}\exp\left(-\frac{1}{2} \left(\mathbf{y}_{t} - \mathbf{Bx}_t \right)'\bm{\Sigma}^{-1} \left(\mathbf{y}_{t} - \mathbf{Bx}_t \right) \right)
\end{equation}
and $w(\mathbf{y}_{k} - \mathbf{Bx}_k)$ is an unknown, potentially non-normal, density. Replacing $f(\mathbf{y}^T|\bm{\phi})$ in Equation~(\ref{eq:joint_likelihood}) with $v(\left\{\mathbf{y}_{t} - \mathbf{Bx}_t\right\}_{t\neq k} | \bm{\phi})$ yields an `unconditional partial likelihood' that does not depend on the distribution of $\bm{\varepsilon}_{k}$, but that is still truncated by the NR. This would potentially result in a loss of information relative to a likelihood that correctly specifies the distribution of the shocks in period $k$. However, when NR are imposed in only a few periods, this loss of information is likely to be small. In contrast, the conditional likelihood approach cannot leave fully unspecified the distribution of the restricted structural shocks, because computing $r(\bm{\phi},\mathbf{Q})$ requires specifying this distribution.

Concerns about heteroscedasticity or non-normality may also be alleviated by recognizing that the distributional assumption will become irrelevant asymptotically. The set of values of $\mathbf{Q}$ with non-zero unconditional likelihood depends only on $\bm{\phi}$, which summarizes the second moments of the data, and the realization of the data in the periods in which the NR are imposed. Under regularity assumptions, the likelihood (and thus the posterior) of $\bm{\phi}$ will converge to a point at the true value of $\bm{\phi}$ asymptotically regardless of whether the true data-generating process is a VAR with homoscedastic normal shocks.\footnote{See \cite{Plagborg-Moller_2019} for a discussion of this point in the context of a structural VMA model.} The set of values of $\mathbf{Q}$ with non-zero likelihood will therefore converge asymptotically to the same set regardless of whether the distributional assumption is correct.

\subsubsection{Mechanism generating NR}

Note that we do not explicitly model the mechanism responsible for revealing the information underlying the NR (i.e., whether $D_{N} = 1$ or $D_{N} = 0$) or the mechanism determining the periods in which this information is revealed (e.g., the identity of $k$ in examples above), which is consistent with the papers that impose these restrictions. If the revelation of this information depends on the data, the likelihood will be misspecified. The exact implications of this misspecification for estimation or inference will depend on assumptions about the mechanism revealing the narrative information. Exploring the consequences of such misspecification may be an interesting area for further work. In the bivariate example of Section~\ref{sec:bivariate}, if the identity of $k$ is randomly determined independently of $\bm{\varepsilon}_{1},\ldots,\bm{\varepsilon}_{T}$, we can interpret the current analysis conditional on $k$.

\section{Identification under NR}
\label{sec:identification}

This section formally analyzes identification in the SVAR under NR. Section~\ref{subsec:frequentistidentification} considers whether NR are point- or set-identifying in a frequentist sense. Section~\ref{subsec:frequentistproperties} introduces the notion of a `conditional identified set', which extends the standard notion of an identified set to the setting where the mapping from reduced-form to structural parameters depends on the realization of the data. This provides an interpretation of the mapping induced by the NR. Additionally, we make use of this object when showing the frequentist validity of our robust Bayesian procedure in Section~\ref{sec:fewNR}.

\subsection{Point-identification under NR}
\label{subsec:frequentistidentification}

Denoting the true parameter value by $(\bm{\phi}_0, \textbf{Q}_0)$, point-identification for the parametric model (\ref{eq:joint_likelihood}) requires that there is no other parameter value $(\bm{\phi},\mathbf{Q}) \neq (\bm{\phi}_0,\mathbf{Q}_0)$ that is observationally equivalent to $(\bm{\phi}_0,\mathbf{Q}_0)$.\footnote{$(\bm{\phi},\mathbf{Q}) \neq (\bm{\phi}_0,\mathbf{Q}_0)$ is observationally equivalent to $(\bm{\phi}_0,\mathbf{Q}_0)$ if $p(\mathbf{Y}^{T},D_N=d  | \bm{\phi},\mathbf{Q}) = p(\mathbf{Y}^{T},D_N=d  | \bm{\phi}_0,\mathbf{Q}_0)$ holds for all $\mathbf{Y}^T$ and $d \in \{0,1 \}$.}

To assess the existence or non-existence of observationally equivalent parameter points, we analyze a statistical distance between $p(\mathbf{y}^{T},D_N=d  | \bm{\phi},\mathbf{Q})$ and $p(\mathbf{y}^{T},D_N=d  | \bm{\phi}_0,\mathbf{Q}_0)$ that metrizes observation equivalence. Specifically, in the current setting where the support of the distribution of observables can depend on the parameters, it is convenient to work with the Hellinger distance:
\begin{align}
HD(\bm{\phi},\mathbf{Q}) & \equiv \sum_{d=0,1}\int_{\mathbf{Y}} \left(p^{1/2}(\mathbf{y}^{T},D_N=d  | \bm{\phi},\mathbf{Q}) - p^{1/2}(\mathbf{y}^{T},D_N=d  | \bm{\phi}_0,\mathbf{Q}_0) \right)^2 d \mathbf{y}^T \notag \\
&= 2 \left( 1 - \mathcal{H}(\bm{\phi},\mathbf{Q})\right), \mspace{10mu} \text{where} \notag \\
\mathcal{H}(\bm{\phi},\mathbf{Q}) & \equiv \sum_{d=0,1} \int_{\mathbf{Y}} p^{1/2}(\mathbf{y}^{T},D_N=d  | \bm{\phi},\mathbf{Q}) \cdot p^{1/2}(\mathbf{y}^{T},D_N=d  | \bm{\phi}_0,\mathbf{Q}_0) d \mathbf{y}^T, \label{eq:Hellinger}
\end{align}
and $\mathbf{Y}$ is the sample space for $\mathbf{Y}^{T}$. As is known in the literature on minimum distance estimation (see, for example, Basu, Shioya and Park (2011)\nocite{Basu_Shioya_Park_2011}), $(\bm{\phi},\mathbf{Q})$ and $(\bm{\phi}_0,\mathbf{Q}_0)$ are observationally equivalent if and only if $HD(\bm{\phi},\mathbf{Q}) = 0$ or, equivalently, $\mathcal{H}(\bm{\phi},\mathbf{Q}) = 1$.

We similarly define the Hellinger distance for the conditional likelihood as
\begin{align}
HD_{c}(\bm{\phi},\mathbf{Q}) & \equiv 2 \left( 1 - \mathcal{H}_c(\bm{\phi},\mathbf{Q})\right), \mspace{10mu} \text{where} \notag \\
\mathcal{H}_c(\bm{\phi},\mathbf{Q}) & \equiv \int_{\mathbf{Y}} p^{1/2}(\mathbf{y}^{T}  | D_N=1,\bm{\phi},\mathbf{Q}) \cdot p^{1/2}(\mathbf{y}^{T}  | D_N=1, \bm{\phi}_0,\mathbf{Q}_0) d \mathbf{y}^T. \label{eq:Hellinger_conditional}
\end{align}

The next proposition analyzes the conditions for $\mathcal{H}(\bm{\phi}, \mathbf{Q})=1$ and $\mathcal{H}_c(\bm{\phi}, \mathbf{Q})=1$, and shows that observational equivalence of $(\bm{\phi},\mathbf{Q})$ and $(\bm{\phi}_0,\mathbf{Q}_0)$ boils down to geometric equivalence of the set of reduced-form VAR innovations satisfying the NR.

\begin{proposition} \label{prop:point_identification}
Let $(\bm{\phi}_0, \mathbf{Q}_0)$ be the true parameter value and let $\mathbf{U} \equiv \mathbf{U}(\mathbf{y}^{T};\bm{\phi}) = (\mathbf{u}_{1}',\ldots,\mathbf{u}_{T}')'$ collect the reduced-form VAR innovations. Define
\small
\begin{equation}
\mathcal{Q}^{\ast} \equiv \left\{ \begin{matrix} \mathbf{Q} \in \mathcal{O}(n) : \{ \mathbf{U}:N(\bm{\phi}, \mathbf{Q}, \mathbf{Y}^{T}) \geq \mathbf{0}_{s \times 1} \} = \{ \mathbf{U}:N(\bm{\phi}_0, \mathbf{Q}_0, \mathbf{Y}^{T}) \geq \mathbf{0}_{s \times 1} \} \\
 \text{up to $f(\mathbf{Y}^T| \bm{\phi}_0)$-null set}, \ \mathrm{diag}( \mathbf{Q}'\bm{\Sigma}_{tr}^{-1}) \geq \mathbf{0}_{n \times 1} \end{matrix} \right\}. \notag
\end{equation}
\normalsize
The unconditional likelihood model (\ref{eq:joint_likelihood}) and the conditional likelihood model (\ref{eq:conditional_likelihood}) are globally identified (i.e., there are no observationally equivalent parameter points to $(\bm{\phi}_0, \mathbf{Q}_0)$) if and only if $\mathcal{Q}^{\ast}$ is a singleton. If the parameter of interest is an impulse response to the $j$th structural shock, $\eta_{i,j,h}(\bm{\phi}, \mathbf{Q})$, as defined in (\ref{eq:ir}), then $\eta_{i,j,h}(\bm{\phi}, \mathbf{Q})$ is point-identified if the projection of $\mathcal{Q}^{\ast}$ onto its $j$th column vector is a singleton.
\end{proposition}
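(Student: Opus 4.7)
My plan is to characterize observational equivalence via the Hellinger criterion already in the paper: $\mathcal{H}(\bm{\phi},\mathbf{Q})=1$ holds iff $p(\mathbf{y}^T,D_N=d\mid\bm{\phi},\mathbf{Q}) = p(\mathbf{y}^T,D_N=d\mid\bm{\phi}_0,\mathbf{Q}_0)$ for almost every $\mathbf{y}^T$ and each $d\in\{0,1\}$ (the equality case of Cauchy--Schwarz applied to two probability densities), and the analogous statement for $\mathcal{H}_c$. The proof then reduces, for both likelihoods, to showing that this density equality is equivalent to the pair of conditions (i) $\bm{\phi}=\bm{\phi}_0$ and (ii) the $\mathbf{y}^T$-set $\{D_N(\bm{\phi}_0,\mathbf{Q},\mathbf{y}^T)=1\}$ coincides with $\{D_N(\bm{\phi}_0,\mathbf{Q}_0,\mathbf{y}^T)=1\}$ up to a Lebesgue-null set. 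Since $\mathbf{y}^T \leftrightarrow \mathbf{U}$ given $\bm{\phi}_0$ is an affine bijection that preserves null sets, condition (ii) is exactly the defining condition of $\mathcal{Q}^{\ast}$. Thus the set of $\mathbf{Q}$ observationally equivalent to $\mathbf{Q}_0$ is $\mathcal{Q}^{\ast}$ itself, and global identification is equivalent to $\mathcal{Q}^{\ast}=\{\mathbf{Q}_0\}$.

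For the unconditional likelihood~(\ref{eq:joint_likelihood}), summing the density equality over $d\in\{0,1\}$ cancels the indicator and yields $f(\mathbf{y}^T\mid\bm{\phi})=f(\mathbf{y}^T\mid\bm{\phi}_0)$ almost everywhere, which by point-identification of the reduced-form Gaussian VAR forces $\bm{\phi}=\bm{\phi}_0$; reinserting this and taking $d=1$ gives the indicator equality (ii) on the full-measure set $\{f(\cdot\mid\bm{\phi}_0)>0\}$. For the conditional likelihood~(\ref{eq:conditional_likelihood}), the equality $f(\mathbf{y}^T\mid\bm{\phi})D_N(\bm{\phi},\mathbf{Q},\mathbf{y}^T)/r(\bm{\phi},\mathbf{Q}) = f(\mathbf{y}^T\mid\bm{\phi}_0)D_N(\bm{\phi}_0,\mathbf{Q}_0,\mathbf{y}^T)/r(\bm{\phi}_0,\mathbf{Q}_0)$ first forces the two indicators to agree almost everywhere, since $f>0$ everywhere and any disagreement would put a strictly positive density on one side against zero on the other. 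On the resulting common support, which has positive Lebesgue measure by the assumption that the NR are satisfiable under $(\bm{\phi}_0,\mathbf{Q}_0)$, the equality reads $f(\mathbf{y}^T\mid\bm{\phi})/f(\mathbf{y}^T\mid\bm{\phi}_0) = r(\bm{\phi},\mathbf{Q})/r(\bm{\phi}_0,\mathbf{Q}_0)$. The left side is the exponential of a quadratic in $\mathbf{y}^T$ and can be constant on any open subset only when the two Gaussians are identical, so $\bm{\phi}=\bm{\phi}_0$, whence $r(\bm{\phi},\mathbf{Q})=r(\bm{\phi}_0,\mathbf{Q}_0)$ as well, and (ii) follows.

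For the impulse-response claim, by the first part the set of $\mathbf{Q}$ observationally equivalent to $\mathbf{Q}_0$ is exactly $\mathcal{Q}^{\ast}$, and hence the set of observationally equivalent impulse responses is $\{\eta_{i,j,h}(\bm{\phi}_0,\mathbf{Q}):\mathbf{Q}\in\mathcal{Q}^{\ast}\}$. Because $\eta_{i,j,h}(\bm{\phi}_0,\mathbf{Q})=\mathbf{c}_{i,h}'(\bm{\phi}_0)\mathbf{q}_j$ depends on $\mathbf{Q}$ only through its $j$th column, singleton-ness of the projection of $\mathcal{Q}^{\ast}$ onto $\mathbf{q}_j$ immediately implies that $\eta_{i,j,h}$ is constant across $\mathcal{Q}^{\ast}$, i.e., point-identified.

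The main obstacle I anticipate is the conditional-likelihood step: rigorously extracting $\bm{\phi}=\bm{\phi}_0$ from an equality of Gaussian densities up to a multiplicative constant that is only known to hold on the possibly complicated semialgebraic support $\{D_N(\bm{\phi}_0,\mathbf{Q}_0,\mathbf{y}^T)=1\}$. The cleanest remedy is to invoke the real analyticity of the Gaussian log-ratio to extend constancy from a set of positive Lebesgue measure to all of $\mathbb{R}^{nT}$ and then read off identification of the reduced-form mean and covariance parameters. Everything else amounts to bookkeeping with indicator functions and the affine change of variables $\mathbf{y}^T\mapsto\mathbf{U}$.
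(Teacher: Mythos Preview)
Your proposal is correct, and for the unconditional likelihood it is essentially the paper's argument: bound the Hellinger affinity by $\int f^{1/2}(\cdot\mid\bm{\phi})f^{1/2}(\cdot\mid\bm{\phi}_0)$, use reduced-form identification to force $\bm{\phi}=\bm{\phi}_0$, and then read off that $\mathcal{H}(\bm{\phi}_0,\mathbf{Q})$ equals the $f(\cdot\mid\bm{\phi}_0)$-probability of $\{D_N(\bm{\phi}_0,\mathbf{Q},\cdot)=D_N(\bm{\phi}_0,\mathbf{Q}_0,\cdot)\}$.

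For the conditional likelihood the paper takes a slightly different route. It first sets $\bm{\phi}=\bm{\phi}_0$ by appeal to reduced-form identification and then applies Cauchy--Schwarz \emph{to the indicator random variables} under $f(\cdot\mid\bm{\phi}_0)$: since $E_{\bm{\phi}_0}[D_N(\bm{\phi}_0,\mathbf{Q},\cdot)D_N(\bm{\phi}_0,\mathbf{Q}_0,\cdot)]\le r^{1/2}(\bm{\phi}_0,\mathbf{Q})\,r^{1/2}(\bm{\phi}_0,\mathbf{Q}_0)$ with equality iff the two indicators coincide a.s., the characterization of $\mathcal{Q}^{\ast}$ drops out in one line. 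Your argument instead reverses the order: you first use strict positivity of $f$ to force the indicators to agree, and then invoke real analyticity of the Gaussian log-ratio on the (positive-measure) common support to conclude $\bm{\phi}=\bm{\phi}_0$. Your ordering is more explicit about why $\bm{\phi}\neq\bm{\phi}_0$ is impossible in the conditional model---a step the paper treats somewhat informally---while the paper's Cauchy--Schwarz on indicators is cleaner once $\bm{\phi}=\bm{\phi}_0$ is granted. Both routes are valid; yours is a touch more self-contained, the paper's a touch more elegant. The impulse-response part is handled identically.
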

\begin{proof}
See Appendix B.
\end{proof}
This proposition provides a necessary and sufficient condition for global identification of SVARs by NR. As shown in the proof in Appendix B, $\mathcal{Q}^{\ast}$ defined in this proposition corresponds to the observationally equivalent $\mathbf{Q}$ matrices given $\bm{\phi} = \bm{\phi}_0$, but, importantly, it does not correspond to any flat region of the observed likelihood (the conditional identified set in Definition \ref{def:identifiedset} below).

To illustrate this point, consider the simple bivariate example of Section 2 with the NR (\ref{eq:nsr}), where $\mathbf{y}_t$ itself is the reduced-form error, so $\mathbf{U}$ in Proposition \ref{prop:point_identification} can be set to $\mathbf{y}_k$. Given $\bm{\phi}$, the set of $\mathbf{y}_k \in \mathbb{R}^2$ satisfying the NR is the half-space given by
\begin{equation}
\left\{ \mathbf{y}_k \in \mathbb{R}^2 : (\sigma_{11} \sigma_{22})^{-1} \begin{pmatrix} \sigma_{22} \cos \theta - \sigma_{21} \sin \theta, & \sigma_{11} \sin\theta \end{pmatrix} \mathbf{y}_k \geq 0  \right\}. \label{eq:yk_half_space}
\end{equation}
The condition for point-identification shown in Proposition~\ref{prop:point_identification} is satisfied if no $\theta' \neq \theta$ can generate the half-space of $\mathbf{y}_k$ identical to (\ref{eq:yk_half_space}). Such $\theta'$ cannot exist, since a half-space passing through the origin $(a_1,a_2) \mathbf{y}_k \geq 0$ can be indexed uniquely by the  slope $a_1/a_2$ and (\ref{eq:yk_half_space}) implies the slope $\sigma_{11}^{-1}(\sigma_{22} (\tan \theta)^{-1} - \sigma_{21})$ is a bijective map of $\theta$ on a constrained domain due to the sign normalization. Figure~\ref{fig:hellinger} plots the Hellinger distances in this bivariate example under the shock-sign restriction (\ref{eq:nsr}) and the historical decomposition restriction. For both the conditional and unconditional likelihood, the Hellinger distances are minimized uniquely at the true $\theta$, which is consistent with our point-identification claim for $\theta$.\footnote{Under the restriction on the historical decomposition, a notable difference between the conditional and unconditional likelihood cases is the slope of the Hellinger distance around the minimum. The Hellinger distance of the unconditional likelihood yields a steeper slope than the conditional likelihood. This indicates the loss of information for $\theta$ in the conditional likelihood due to conditioning on the non-ancillary event.}

\begin{figure}[h]
    \center
    \caption{Hellinger Distance} \label{fig:hellinger}
    \begin{tabular}{ccc}
        \includegraphics[scale=0.5]{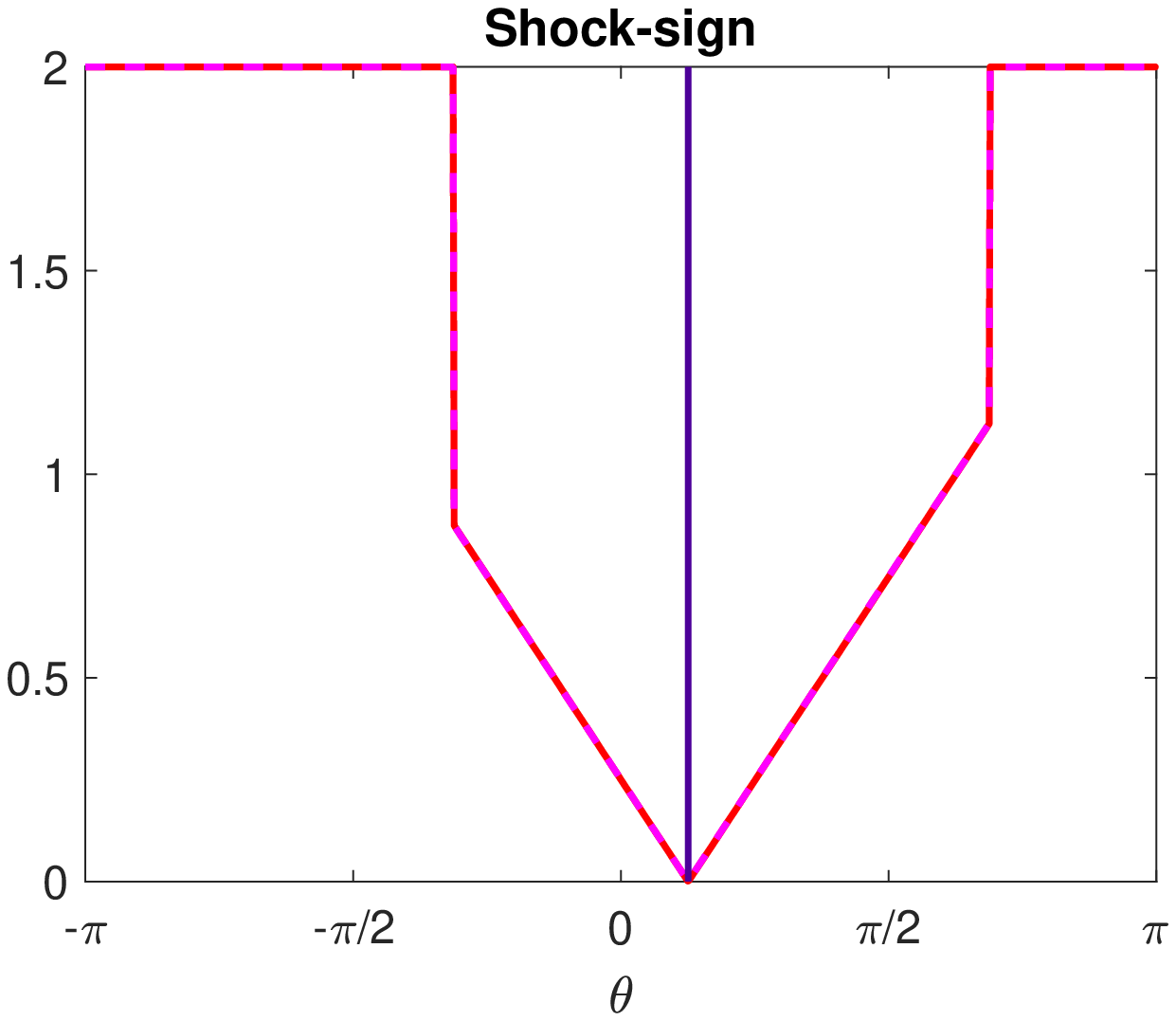} & \phantom{a} & \includegraphics[scale=0.5]{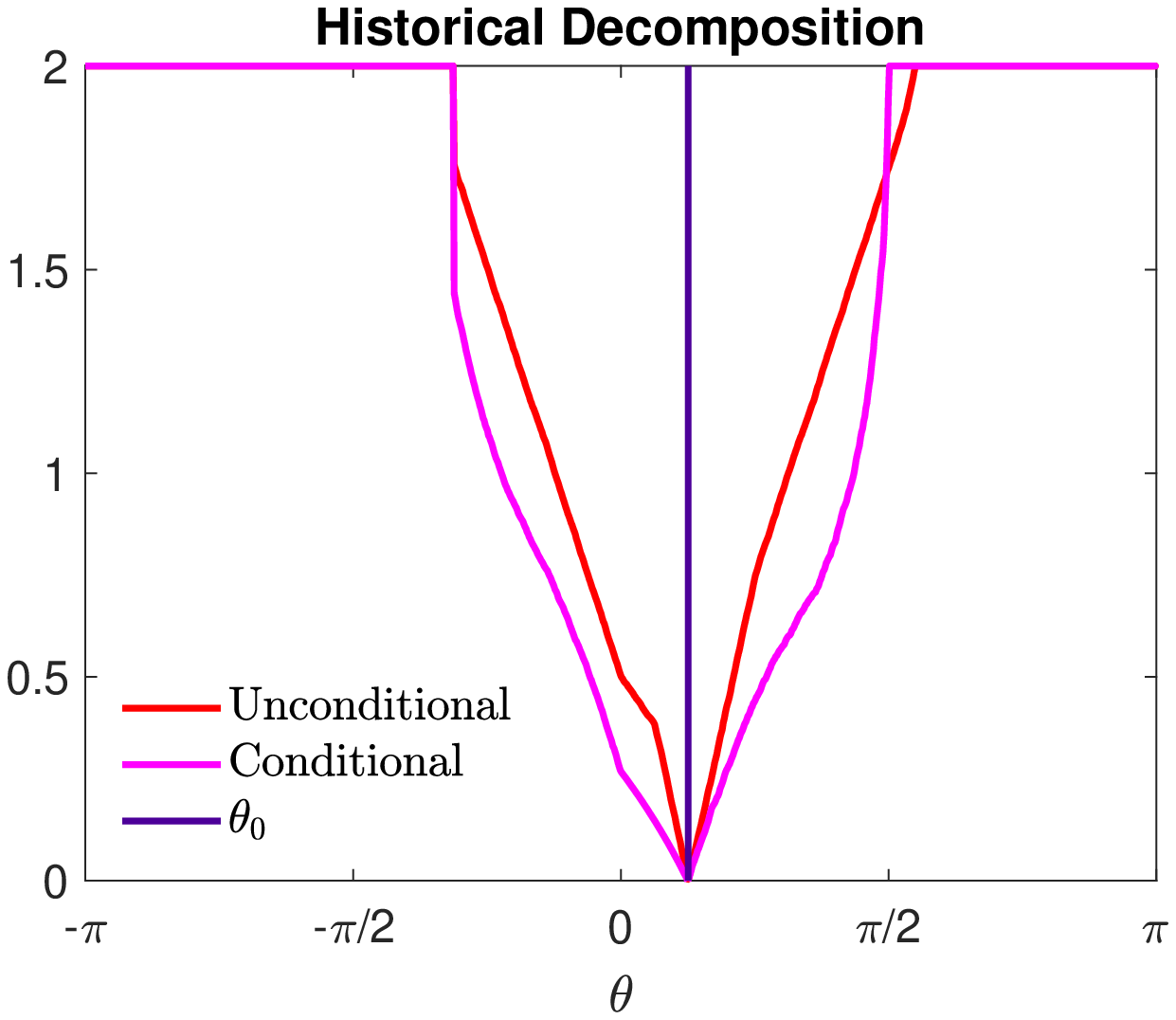} \\
    \end{tabular}
    \footnotesize \parbox[t]{0.65 in}{Notes:}\parbox[t]{5 in}{$T = 3$ and $\bm{\phi}$ is known; Hellinger distances are approximated using Monte Carlo.} \\
\end{figure}

Proposition~\ref{prop:point_identification} also provides conditions under which $(\bm{\phi},\mathbf{Q})$ is not globally identified, but a particular impulse response is. To give an example of this, consider an SVAR with $n > 2$ and with a shock-sign restriction on the first shock in period $k$. Given $\bm{\phi}$, the set of $\mathbf{u}_{k} \in \mathbb{R}^{n}$ satisfying the NR is a half-space defined by $\mathbf{q}_{1}'\bm{\Sigma}_{tr}^{-1}\mathbf{u}_{k} \geq 0$. The set of values of $\mathbf{u}_{k}$ satisfying this inequality is indexed uniquely by $\mathbf{q}_{1}$ given $\bm{\Sigma}_{tr}$ at its true value, so there are no values of $\mathbf{Q}$ that are observationally equivalent to $\mathbf{Q}_{0}$ with $\mathbf{q}_{1} \neq \mathbf{Q}_{0}\mathbf{e}_{1,n}$. Any value for the remaining $n-1$ columns of $\mathbf{Q}$ such that they are orthogonal to $\mathbf{Q}_{0}\mathbf{e}_{1,n}$ will generate the same half-space for $\mathbf{u}_{k}$, so $\mathcal{Q}^{\ast}$ is not a singleton and the SVAR is not globally identified. However, the projection of $\mathcal{Q}^{\ast}$ onto its first column is a singleton, so $\eta_{i,1,h}(\bm{\phi}, \mathbf{Q})$ is globally identified.

Although a single NR can deliver global identification in the frequentist sense, the practical implication of this theoretical claim is not obvious. The observed unconditional likelihood is almost always flat at the maximum, so we cannot obtain a unique maximum likelihood estimator for the structural parameter. As a result, the standard asymptotic approximation of the sampling distribution of the maximum likelihood estimator is not applicable. The SVAR model with NR possesses features of set-identified models from the Bayesian standpoint (i.e., flat regions of the likelihood). However, strictly speaking, it can be classified as a globally identified model in the frequentist sense when the condition of Proposition \ref{prop:point_identification} holds.

\subsection{Conditional identified set}
\label{subsec:frequentistproperties}

It is well-known that traditional sign restrictions deliver set-identification of $\mathbf{Q}$ (or, equivalently, the structural parameters). Given the reduced-form parameter $\bm{\phi}$ -- which is point-identified --  there are multiple observationally equivalent values of $\mathbf{Q}$, in the sense that there exists $\mathbf{Q}$ and $\tilde{\mathbf{Q}} \neq \mathbf{Q}$ such that $p(\mathbf{y}^{T}|\bm{\phi},\mathbf{Q}) = p(\mathbf{y}^{T}|\bm{\phi},\tilde{\mathbf{Q}})$ for every $\mathbf{y}^{T}$ in the sample space. The identified set for $\mathbf{Q}$ given $\bm{\phi}$ contains all such observationally equivalent parameter points, and is defined as
\begin{equation}\label{eq:standardidentifiedset}
  \mathcal{Q}(\bm{\phi}|S) = \left\{\mathbf{Q} \in \mathcal{O}(n): S(\bm{\phi},\mathbf{Q}) \geq \mathbf{0}_{\tilde{s}\times 1}, \mathrm{diag}(\mathbf{Q}'\bm{\Sigma}_{tr}^{-1}) \geq \mathbf{0}_{n\times 1}\right\}.
\end{equation}
The identified set is a set-valued map only of $\bm{\phi}$, which carries all the information about $\mathbf{Q}$ contained in the data.

The complication in applying this definition of the identified set in SVARs when there are NR is that the reduced-form VAR parameters no longer represent all information about $\mathbf{Q}$ contained in the data; by truncating the likelihood, the realizations of the data entering the NR contain additional information about $\mathbf{Q}$. To address this, we introduce a refinement of the definition of an identified set.

\bigskip

\begin{definition}\label{def:identifiedset}
Let $N \equiv N(\bm{\phi},\mathbf{Q},\mathbf{y}^{T}) \geq \mathbf{0}_{s\times 1}$ represent a set of NR in terms of the parameters and the data.

(i) The \textbf{conditional identified set for $\mathbf{Q}$ under NR} is
\begin{equation}
    \mathcal{Q}(\bm{\phi}|\mathbf{y}^{T},N) = \{\mathbf{Q} \in \mathcal{O}(n): N(\bm{\phi},\mathbf{Q},\mathbf{y}^{T}) \geq \mathbf{0}_{s\times 1}\}.
\end{equation}
The conditional identified set for the impulse response $\eta = \eta_{i,j,h}(\bm{\phi},\mathbf{Q})$ under NR is defined by projecting $\mathcal{Q}(\bm{\phi}|\mathbf{y}^{T},N)$ via $\eta_{i,j,h}(\bm{\phi},\mathbf{Q})$:
\begin{equation}
CIS_{\eta}(\bm{\phi}|\mathbf{y}^{T},N) = \{\eta_{i,j,h}(\bm{\phi},\mathbf{Q}) : \mathbf{Q} \in \mathcal{Q}(\bm{\phi}|\mathbf{y}^{T},N) \}.
\end{equation}
(ii) Let $\mathbf{s}: \mathbf{Y} \to \mathbb{R}^S$ be a statistic. We call $\mathbf{s}(\mathbf{Y}^T)$ a \textbf{sufficient statistic for the conditional identified set} $\mathcal{Q}(\bm{\phi}|\mathbf{y}^{T},N)$ if the conditional identified set for $\mathbf{Q}$ depends on the sample $\mathbf{y}^T$ through $\mathbf{s}(\mathbf{y}^T)$; i.e., there exists $\tilde{\mathcal{Q}}(\bm{\phi}|\cdot,N)$ such that
\begin{equation}
\mathcal{Q}(\bm{\phi}|\mathbf{y}^{T},N) = \tilde{\mathcal{Q}}(\bm{\phi}|\mathbf{s}(\mathbf{y}^{T}),N)
\end{equation}
holds for all $\bm{\phi} \in \bm{\Phi}$ and $\mathbf{y}^T \in \mathbf{Y}$.
\end{definition}

\bigskip

Unlike the standard identified set $\mathcal{Q}(\bm{\phi}|S)$, the conditional identified set $\mathcal{Q}(\bm{\phi}|\mathbf{y}^{T},N)$ depends on the sample $\mathbf{y}^T$ because of the aforementioned data-dependent support of the likelihood. In terms of the observed likelihood, however, they share the property that the likelihood is flat on the (conditional) identified set. Hence, given the sample $\mathbf{y}^{T}$ and the reduced-form parameters $\bm{\phi}$, all values of $\mathbf{Q}$ in $\mathcal{Q}(\bm{\phi}|\mathbf{y}^{T},N)$ fit the data equally well and, in this particular sense, they are observationally equivalent.

When the NR concern shocks in only a subset of the time periods in the data, the conditional identified set under these NR depends on the sample only through a few observations entering the NR. The sufficient statistics $\textbf{s}(\mathbf{y}^T)$ defined in Definition \ref{def:identifiedset}(ii) represent such observations. For instance, in the toy example of Section~\ref{subsec:sign}, the conditional identified set depends only on the observations in period $k$, so $\mathbf{s}(\mathbf{y}^T) =\mathbf{y}_k$. If we extend the example of Section~\ref{subsec:sign} to the SVAR($p$), the shock-sign restriction in Equation~(\ref{eq:nsr}) can be expressed as
\begin{equation}
\varepsilon_{1k} = \mathbf{e}_{1,2}^{\prime} \mathbf{A}_0 \mathbf{u}_k=\mathbf{e}_{1,2}^{\prime} \mathbf{Q}^{\prime} \bm{\Sigma}_{tr}^{-1}(\mathbf{y}_k - \mathbf{B} \mathbf{x}_k) \geq 0.
\end{equation}
Hence, the conditional identified set $\mathcal{Q}(\bm{\phi}|\mathbf{y}^{T},N)$ depends on the data only through $(\mathbf{y}_k^{\prime}, \mathbf{x}_k^{\prime})'= (\mathbf{y}_k^{\prime},\mathbf{y}_{k-1}^{\prime}, \cdots, \mathbf{y}_{k-p}^{\prime} )'$, so we can set $\mathbf{s}(\mathbf{y}^T)=(\mathbf{y}_k^{\prime},\mathbf{y}_{k-1}^{\prime}, \cdots, \mathbf{y}_{k-p}^{\prime} )'$.

If the conditional distribution of $\mathbf{Y}^T$ given $\mathbf{s}(\mathbf{Y}^T) = \mathbf{s}(\mathbf{y}^T)$ is nondegenerate, we can consider a frequentist experiment (repeated sampling of $\mathbf{Y}^T$) conditional on the sufficient statistics set to the observed value. In this conditional experiment, we can view the conditional identified set $\mathcal{Q}(\bm{\phi}|\mathbf{y}^{T},N)$ as the standard identified set in set-identified models, since it no longer depends on the data in the conditional experiment where $\mathbf{s}(\mathbf{y}^{T})$ is fixed. This is the reason that we refer to $\mathcal{Q}(\bm{\phi}|\mathbf{y}^{T},N)$ as the conditional identified set. In Section~\ref{sec:fewNR} below, we show the frequentist validity of the robust-Bayes credible region by establishing conditional coverage of the conditional identified set for an impulse response.

\section{Posterior inference under NR}
\label{sec:posteriorinferenceunderNR}

This section presents approaches to conducting posterior inference in SVARs under NR. Section~\ref{subsec:bayesianinference} discusses how to modify the standard Bayesian approach in AR18 to use the unconditional likelihood rather than the conditional likelihood. Section~\ref{subsec:robustbayes} explains how to conduct robust Bayesian inference under NR, which further addresses the issue of posterior sensitivity due to the flat unconditional likelihood. Section~\ref{subsec:numericalimplementation} describes how to numerically implement the robust Bayesian procedure.

\subsection{Standard Bayesian inference}
\label{subsec:bayesianinference}

AR18 propose an algorithm for drawing from the uniform-normal-inverse-Wishart posterior of $(\bm{\phi},\mathbf{Q})$ given a set of traditional sign restrictions and NR. This is the posterior induced by a normal-inverse-Wishart prior over $\bm{\phi}$ and an unconditionally uniform prior over $\mathbf{Q}$. The algorithm proceeds by drawing $\bm{\phi}$ from a normal-inverse-Wishart distribution and $\mathbf{Q}$ from a uniform distribution over $\mathcal{O}(n)$, and checking whether the restrictions are satisfied. If the restrictions are not satisfied, the joint draw is discarded and another draw is made. If the restrictions are satisfied, the ex ante probability that the NR are satisfied at the drawn parameter values is approximated via Monte Carlo simulation. Once the desired number of draws are obtained satisfying the restrictions, the draws are resampled with replacement using as importance weights the inverse of the probability that the NR are satisfied.\footnote{Based on the results in \cite{Arias_Rubio-Ramirez_Waggoner_2018}, AR18 argue that their algorithm draws from a normal-generalized-normal posterior over the SVAR's structural parameters $(\mathbf{A}_{0},\mathbf{A}_{+})$ induced by a conjugate normal-generalized-normal prior, conditional on the restrictions.}

This algorithm essentially draws from the posterior under the unconditional likelihood and then uses importance sampling to transform these draws into draws from the posterior given the conditional likelihood. To draw from the uniform-normal-inverse-Wishart posterior using the unconditional likelihood to construct the posterior, one therefore simply needs to omit the importance-sampling step from this algorithm. Approximating the probability used to construct the importance weights requires Monte Carlo integration, which can be computationally expensive, particularly when the NR constrain the structural shocks in multiple periods. Omitting the importance-sampling step can therefore ease the computational burden of drawing from the posterior. However, as discussed above, standard Bayesian inference under the unconditional likelihood may be sensitive to the choice of conditional prior for $\mathbf{Q}|\bm{\phi}$, because the likelihood possesses flat regions.

By rejecting draws that do not satisfy the restrictions, the algorithm described above places more weight on draws of $\bm{\phi}$ that are less likely to satisfy the restrictions under the uniform distribution over $\mathcal{O}(n)$. As discussed in \cite{Uhlig_2017}, one may instead prefer to use a prior  that is \textit{conditionally} uniform over $\mathbf{Q}|\bm{\phi}$. To draw from the posterior of $(\bm{\phi},\mathbf{Q})$ under the unconditional likelihood given an arbitrary prior over $\bm{\phi}$ and a conditionally uniform prior over $\mathbf{Q}|\bm{\phi}$, one can repeat Step~2 of Algorithm~1 in Section~\ref{subsec:numericalimplementation}.

\subsection{Robust Bayesian inference}
\label{subsec:robustbayes}

This section explains how to conduct robust Bayesian inference about a scalar-valued function of the structural parameters under NR and traditional sign restrictions. The approach can be viewed as performing global sensitivity analysis to assess whether posterior conclusions are robust to the choice of prior on the flat regions of the likelihood. We assume that the object of interest is a particular impulse response $\eta$, although the discussion in this section also applies to any other scalar-valued function of the structural parameters, such as the forecast error variance decomposition or the historical decomposition.

Let $\pi_{\bm{\phi}}$ be a prior over the reduced-form parameter $\bm{\phi} \in \bm{\Phi}$, where $\bm{\Phi}$ is the space of reduced-form parameters such that $\mathcal{Q}(\bm{\phi}|S)$ is non-empty. A joint prior for $(\bm{\phi},\mathbf{Q}) \in \bm{\Phi}\times \mathcal{O}(n)$ can be written as $\pi_{\bm{\phi},\mathbf{Q}} = \pi_{\mathbf{Q}|\bm{\phi}}\pi_{\bm{\phi}}$, where $\pi_{\mathbf{Q}|\bm{\phi}}$ is supported only on $\mathcal{Q}(\bm{\phi}|S)$. When there are only traditional identifying restrictions, $\pi_{\mathbf{Q}|\bm{\phi}}$ is not updated by the data, because the likelihood function is not a function of $\mathbf{Q}$. Posterior inference may therefore be sensitive to the choice of conditional prior, even asymptotically. As discussed above, a similar issue arises under NR. The difference under NR is that $\pi_{\mathbf{Q}|\bm{\phi}}$ is updated by the data through the truncation points of the unconditional likelihood. However, at each value of $\bm{\phi}$, the unconditional likelihood is flat over the set of values of $\mathbf{Q}$ satisfying the NR. Consequently, the conditional posterior for $\mathbf{Q}|\bm{\phi},\mathbf{Y}^{T}$ is proportional to the conditional prior for $\mathbf{Q}|\bm{\phi}$ at each $\bm{\phi}$ whenever the conditional identified set for $\mathbf{Q}$ given $(\bm{\phi}, \mathbf{Y}^T)$ is nonempty.

Rather than specifying a single prior for $\mathbf{Q}|\bm{\phi}$, the robust Bayesian approach of GK considers the class of all priors for $\mathbf{Q}|\bm{\phi}$ that are consistent with the traditional identifying restrictions:
\begin{equation}\label{eq:classofpriors}
  \Pi_{\mathbf{Q}|\bm{\phi}} = \left\{\pi_{\mathbf{Q}|\bm{\phi}}: \pi_{\mathbf{Q}|\bm{\phi}}(\mathcal{Q}(\bm{\phi}|S))=1\right\}.
\end{equation}
Notice that we cannot impose the NR using a particular conditional prior on $\mathbf{Q}|\bm{\phi}$ due to the data-dependent mapping from $\bm{\phi}$ to $\mathbf{Q}$. However, by considering all possible conditional priors for $\mathbf{Q}|\bm{\phi}$ that are consistent with the traditional identifying restrictions, we trace out all possible conditional posteriors for $\mathbf{Q}|\bm{\phi},\mathbf{Y}^{T}$ that are consistent with the traditional identifying restrictions and the NR. This is because the NR truncate the unconditional likelihood function and the traditional identifying restrictions truncate the prior for $\mathbf{Q}|\bm{\phi}$, so the posterior for $\mathbf{Q}|\bm{\phi},\mathbf{Y}^{T}$ is supported only on the values of $\mathbf{Q}$ that satisfy both sets of restrictions.

Given a particular prior for $(\bm{\phi},\mathbf{Q})$ and using the unconditional likelihood, the posterior is
\begin{align}
    \pi_{\bm{\phi},\mathbf{Q}|\mathbf{Y}^{T},D_{N} = 1} &\propto p(\mathbf{Y}^{T},D_{N} = 1|\bm{\phi},\mathbf{Q})\pi_{\mathbf{Q}|\bm{\phi}}\pi_{\bm{\phi}} \notag \\
    &\propto f(\mathbf{Y}^{T}|\bm{\phi})D_N(\bm{\phi},\mathbf{Q},\mathbf{Y}^T)\pi_{\bm{\phi}}\pi_{\mathbf{Q}|\bm{\phi}} \notag \\
    &\propto \pi_{\bm{\phi}|\mathbf{Y}^{T}}\pi_{\mathbf{Q}|\bm{\phi}}D_N(\bm{\phi},\mathbf{Q},\mathbf{Y}^T).
\end{align}
The final expression for the posterior makes it clear that any prior for $\mathbf{Q}|\bm{\phi}$ that is consistent with the traditional identifying restrictions is in effect further truncated by the NR (through the likelihood) once the data are realized. Generating this posterior using every prior within the class of priors for $\mathbf{Q}|\bm{\phi}$ generates a class of posteriors for $(\bm{\phi},\mathbf{Q})$:
\begin{equation}\label{eq:classofposteriors}
  \Pi_{\bm{\phi},\mathbf{Q}|\mathbf{Y}^{T},D_{N} = 1} = \left\{\pi_{\bm{\phi},\mathbf{Q}|\mathbf{Y}^{T},D_{N} = 1} = \pi_{\bm{\phi}|\mathbf{Y}^{T}}\pi_{\mathbf{Q}|\bm{\phi}}D_N(\bm{\phi},\mathbf{Q},\mathbf{Y}^T): \pi_{\mathbf{Q}|\bm{\phi}} \in \Pi_{\mathbf{Q}|\bm{\phi}} \right\}.
\end{equation}
Marginalizing each posterior in this class of posteriors induces a class of posteriors for $\eta$, $\Pi_{\eta|\mathbf{Y}^{T},D_{N} = 1}$. Each prior within the class of priors $\Pi_{\mathbf{Q}|\bm{\phi}}$ therefore induces a posterior for $\eta$. Associated with each of these posteriors are quantities such as the posterior mean, median and other quantiles. For example, as we consider each possible prior within $\Pi_{\mathbf{Q}|\bm{\phi}}$, we can trace out the set of all possible posterior means for $\eta$. This will always be an interval, so we can summarize this `set of posterior means' by its endpoints:
\begin{equation}\label{eq:setofposteriormeans}
  \left[\int_{\bm{\Phi}}l(\bm{\phi},\mathbf{Y}^{T})d\pi_{\bm{\phi}|\mathbf{Y}^{T}},\int_{\bm{\Phi}}u(\bm{\phi},\mathbf{Y}^{T})d\pi_{\bm{\phi}|\mathbf{Y}^{T}}\right],
\end{equation}
where $l(\bm{\phi},\mathbf{Y}^{T}) = \inf \{\eta(\bm{\phi},\mathbf{Q}):\mathbf{Q} \in \mathcal{Q}(\bm{\phi}|\mathbf{Y}^{T},N,S)\}$, $u(\bm{\phi},\mathbf{Y}^{T}) = \sup \{\eta(\bm{\phi},\mathbf{Q}):\mathbf{Q} \in \mathcal{Q}(\bm{\phi}|\mathbf{Y}^{T},N,S)\}$ and
\begin{equation}
  \mathcal{Q}(\bm{\phi}|\mathbf{Y}^{T},N,S) = \left\{\mathcal{Q}(\bm{\phi}|S)  \cap \mathcal{Q}(\bm{\phi}|\mathbf{Y}^{T},N)\right\}
\end{equation}
is the set of values of $\mathbf{Q}$ that are consistent with the traditional identifying restrictions and the NR. In contrast, in GK the set of posterior means is obtained by finding the infimum and supremum of $\eta(\bm{\phi},\mathbf{Q})$ over $\mathcal{Q}(\bm{\phi}|S)$ and averaging these over $\pi_{\bm{\phi}|\mathbf{Y}^{T}}$. The important difference from GK is that the current set of posterior means depends on the data not only through the posterior for $\bm{\phi}$ but also through the set of admissible values of $\mathbf{Q}$ under the NR. As a result, in contrast with GK, we cannot interpret the set of posterior means (\ref{eq:setofposteriormeans}) as a consistent estimator for the identified set for $\eta$ (which is not well-defined, as we discussed above). Nevertheless, the set of posterior means still carries a robust Bayesian interpretation similar to GK in that it clarifies posterior results that are robust to the choice of prior on the non-updated part of the parameter space (i.e., on the flat regions of the likelihood).

As in GK, we can also report a robust credible region with credibility level $\alpha$, which is the shortest interval estimate for $\eta$ such that the posterior probability put on the interval is greater than or equal to $\alpha$ uniformly over the posteriors in $\Pi_{\eta|\mathbf{Y}^{T}, D_{N} = 1}$ (see Proposition~1 of GK). One may also be interested in posterior lower and upper probabilities, which are the infimum and supremum, respectively, of the probability for a hypothesis over all posteriors in the class.

GK provide conditions under which their robust Bayesian approach has a valid frequentist interpretation, in the sense that the robust credible region is an asymptotically valid confidence set for the true identified set. For the same reason as mentioned above, however, frequentist validity of the robust credible region does not immediately extend to the NR case. We provide conditions under which the robust credible region has a valid frequentist interpretation in Section~\ref{sec:fewNR}.

\subsection{Numerical implementation of robust Bayesian approach}
\label{subsec:numericalimplementation}

This section describes a general algorithm to implement our robust Bayesian procedure under NR. GK propose numerical algorithms for conducting robust Bayesian inference in SVARs identified using traditional sign and zero restrictions. Their Algorithm~1 uses a numerical optimization routine to obtain the lower and upper bounds of the identified set at each draw of $\bm{\phi}$. Obtaining the bounds via numerical optimization is not generally applicable under the class of NR considered in AR18, since the constraints on the historical decomposition are not differentiable everywhere in $\mathbf{Q}$. We therefore adapt Algorithm~2 of GK, which approximates the bounds of the identified set at each draw of $\bm{\phi}$ using Monte Carlo simulation.

\bigskip

\noindent \textbf{Algorithm 1.} \textit{Let $N(\bm{\phi},\mathbf{Q},\mathbf{Y}^{T}) \geq \mathbf{0}_{s\times 1}$ be the set of NR and let $S(\bm{\phi},\mathbf{Q}) \geq \mathbf{0}_{\tilde{s}\times 1}$ be the set of traditional sign restrictions (excluding the sign normalization). Assume the object of interest is $\eta_{i,j^{*},h} = c_{i,h}'(\bm{\phi})\mathbf{q}_{j^{*}}$.
\begin{itemize}
  \item \textbf{Step~1}: Specify a prior for $\bm{\phi}$, $\pi_{\bm{\phi}}$, and obtain the posterior $\pi_{\bm{\phi}|\mathbf{Y}^{T}}$.
  \item \textbf{Step~2}: Draw $\bm{\phi}$ from $\pi_{\bm{\phi}|\mathbf{Y}^{T}}$ and check whether $\mathcal{Q}(\bm{\phi}|\mathbf{Y}^{T},N,S)$ is empty using the subroutine below.
  \begin{itemize}
    \item \textbf{Step~2.1}: Draw an $n\times n$ matrix of independent standard normal random variables, $\mathbf{Z}$, and let $\mathbf{Z} = \tilde{\mathbf{Q}}\mathbf{R}$ be the QR decomposition of $\mathbf{Z}$.\footnote{This is the algorithm used by \cite{Rubio-Ramirez_Waggoner_Zha_2010} to draw from the uniform distribution over $\mathcal{O}(n)$, except that we do not normalize the diagonal elements of $\mathbf{R}$ to be positive. This is because we impose a sign normalization based on the diagonal elements of $\mathbf{A}_{0} = \mathbf{Q}'\bm{\Sigma}_{tr}^{-1}$ in Step~2.2.}
    \item \textbf{Step~2.2}: Define
        \begin{equation*}
          \mathbf{Q} = \left[\mathrm{sgn}((\bm{\Sigma}_{tr}^{-1}\mathbf{e}_{1,n})'\tilde{\mathbf{q}}_{1})\frac{\tilde{\mathbf{q}}_{1}}{\lVert \tilde{\mathbf{q}}_{1}\rVert},\ldots, \mathrm{sgn}((\bm{\Sigma}_{tr}^{-1}\mathbf{e}_{n,n})'\tilde{\mathbf{q}}_{n})\frac{\tilde{\mathbf{q}}_{n}}{\lVert \tilde{\mathbf{q}}_{n}\rVert}\right],
        \end{equation*}
        where $\tilde{\mathbf{q}}_{j}$ is the $j$th column of $\tilde{\mathbf{Q}}$.
    \item \textbf{Step~2.3}: Check whether $\mathbf{Q}$ satisfies $\mathbf{N}(\bm{\phi},\mathbf{Q},\mathbf{Y}^{T}) \geq \mathbf{0}_{s\times 1}$ and $S(\bm{\phi},\mathbf{Q}) \geq \mathbf{0}_{\tilde{s}\times 1}$. If so, retain $\mathbf{Q}$ and proceed to Step~3. Otherwise, repeat Steps~2.1 and 2.2 (up to a maximum of $L$ times) until $\mathbf{Q}$ is obtained satisfying the restrictions. If no draws of $\mathbf{Q}$ satisfy the restrictions, approximate $\mathcal{Q}(\bm{\phi}|\mathbf{Y}^{T},N,S)$ as being empty and return to Step~2.
  \end{itemize}
  \item \textbf{Step~3}: Repeat Steps~2.1--2.3 until $K$ draws of $\mathbf{Q}$ are obtained. Let $\{\mathbf{Q}_{k}, k=1,...,K\}$ be the $K$ draws of $\mathbf{Q}$ that satisfy the restrictions and let $\mathbf{q}_{j^{*},k}$ be the $j^{*}$th column of $\mathbf{Q}_{k}$. Approximate $[l(\bm{\phi},\mathbf{Y}^{T}),u(\bm{\phi},\mathbf{Y}^{T})]$ by $[\min_{k}\mathbf{c}_{i,h}'(\bm{\phi})\mathbf{q}_{j^{*},k}$, $\max_{k}\mathbf{c}_{i,h}'(\bm{\phi})\mathbf{q}_{j^{*},k}]$.
  \item \textbf{Step~4}: Repeat Steps 2--3 $M$ times to obtain $[l(\bm{\phi_{m}},\mathbf{Y}^{T}),u(\bm{\phi_{m}},\mathbf{Y}^{T})]$ for $m=1,...,M$. Approximate the set of posterior means using the sample averages of $l(\bm{\phi_{m}},\mathbf{Y}^{T})$ and $u(\bm{\phi_{m}},\mathbf{Y}^{T})$.
  \item \textbf{Step~5}: To obtain an approximation of the smallest robust credible region with credibility $\alpha \in (0,1)$, define $d(\eta,\bm{\phi},\mathbf{Y}^{T}) = \max\{|\eta-l(\bm{\phi},\mathbf{Y}^{T})|,|\eta-u(\bm{\phi},\mathbf{Y}^{T})|\}$ and let $\hat{z}_{\alpha}(\eta)$ be the sample $\alpha$-th quantile of $\{d(\eta,\bm{\phi_{m}},\mathbf{Y}^{T}), m=1,...,M\}$. An approximated smallest robust credible interval for $\eta_{i,j^{*},h}$ is an interval centered at $\arg \min_{\eta} \hat{z}_{\alpha}(\eta)$ with radius $\min_{\eta}\hat{z}_{\alpha}(\eta)$.
\end{itemize}
}

\bigskip

Algorithm~1 approximates $[l(\bm{\phi},\mathbf{Y}^{T}),u(\bm{\phi},\mathbf{Y}^{T})]$ at each draw of $\bm{\phi}$ via Monte Carlo simulation. The approximated set will be too narrow given a finite number of draws of $\mathbf{Q}$, but the approximation error will vanish as the number of draws goes to infinity. The algorithm may be computationally demanding when the restrictions substantially truncate $\mathcal{Q}(\bm{\phi}|\mathbf{Y}^{T},N,S)$, because many draws of $\mathbf{Q}$ from $\mathcal{O}(n)$ may be rejected at each draw of $\bm{\phi}$. However, the same draws of $\mathbf{Q}$ can be used to compute $l(\bm{\phi},\mathbf{Y}^{T})$ and $u(\bm{\phi},\mathbf{Y}^{T})$ for different objects of interest, which cuts down on computation time. For example, the same draws of $\mathbf{Q}$ can be used to compute the impulse responses of all variables to all shocks at all horizons of interest. They can also be used to compute other parameters by replacing $\eta_{i,j^{*},h}$ with some other function, such as the forecast error variance decomposition, an element of $\mathbf{A}_{0}$, the historical decomposition or the structural shocks themselves in particular periods.\footnote{Impulse responses to a unit shock -- rather than a standard-deviation shock -- can be computed as in Algorithm~3 of Giacomini et al. (2019)\nocite{Giacomini_Kitagawa_Read_2019}.} Step~3 is parallelizable, so reductions in computing time are possible by distributing computation across multiple processors. Other algorithms may be computationally more efficient than Algorithm~1 in particular cases. We discuss these in Appendix~\ref{sec:appendixalgorithms}.

\section{Frequentist coverage under a few NR}
\label{sec:fewNR}

In this section, we show that the robust Bayes credible region attains asymptotically valid frequentist coverage in a setting where the number of NR is small relative to the length of the sampled periods in a sense that we make precise in the next assumption. This assumption is empirically relevant given that applications typically impose these restrictions in at most a handful of periods.

\begin{assumption} \label{assump:dim(s)} (fixed-dimensional $\mathbf{s}(\mathbf{Y}^T)$): The conditional identified set under NR has sufficient statistics $\mathbf{s}(\mathbf{Y}^T)$, as defined in Definition \ref{def:identifiedset}(ii), and the dimension of $\mathbf{s}(\mathbf{Y}^T)$ does not depend on $T$.
\end{assumption}

Let $(\bm{\phi}_{0},\mathbf{Q}_{0})$ be the true parameter values. We view the sample $\mathbf{Y}^T$ as being drawn from $p(\mathbf{Y}^T|\bm{\phi}_0)$. Let $p(\mathbf{Y}^T|\bm{\phi}_0, \mathbf{s})$ be the conditional distribution of the sample $\mathbf{Y}^T$ given the sufficient statistics for the conditional identified set $\mathbf{s}= \mathbf{s}(\mathbf{Y}^T)$ at $\bm{\phi}= \bm{\phi}_0$. We denote by $p(\mathbf{s}|\bm{\phi}_0)$ the distribution of the sufficient statistics $\mathbf{s}(\mathbf{Y}^T)$ at $\bm{\phi}=\bm{\phi}_0$. The next assumption assumes that in the conditional experiment given $\mathbf{s}(\mathbf{Y}^T)$, the sampling distribution for the maximum likelihood estimator $\hat{\bm{\phi}} \equiv \arg \max_{\bm{\phi}} p(\mathbf{Y}^T|\bm{\phi})$ centered at $\bm{\phi}_0$ and the posterior for $\bm{\phi}$ centered at $\hat{\bm{\phi}}$ asymptotically coincide.

\begin{assumption} \label{assump:B-vM} (Conditional Bernstein-von Mises property for $\bm{\phi}$): For $p(\mathbf{s}|\bm{\phi}_0)$-almost every $\mathbf{s}$ and $p(\mathbf{Y}^T|\bm{\phi}_0, \mathbf{s})$-almost every sampling sequence $\mathbf{Y}^T$, the posterior  for $\sqrt{T}(\bm{\phi} - \hat{\bm{\phi}})$ asymptotically coincides with the sampling distribution of $\sqrt{T}(\hat{\bm{\phi}} - \bm{\phi}_0)$ with respect to $p(\mathbf{Y}^T|\bm{\phi}_0, \mathbf{s})$, as $T \to \infty$, in the sense stated in Assumption~5(i) in GK.
\end{assumption}

This is a key assumption for establishing the asymptotic frequentist validity of the robust credible region under NR. It holds, for instance, when $\mathbf{s}(\mathbf{y}^T)$ corresponds to one or  a few observations in the whole sample, as we had in the toy example of Section \ref{subsec:sign}. In this case, the influence of $\mathbf{s}(\mathbf{y}^T)$ vanishes in the conditional sampling distribution of $\sqrt{T}(\hat{\bm{\phi}} - \bm{\phi}_0)$ as $T \to \infty$, as the latter asymptotically agrees with the asymptotically normal sampling distribution for the maximum likelihood estimator with variance-covariance matrix given by the inverse of the Fisher information matrix. By the well-known Bernstein-von Mises theorem for regular parametric models, the posterior  for $\sqrt{T}(\bm{\phi} - \hat{\bm{\phi}})$ asymptotically agrees with this sampling distribution.

The last assumption requires convexity and smoothness of the conditional identified set, and is analogous to Assumption~5(ii) of GK for standard set-identified models.

\begin{assumption} \label{assump:as_convex} (Almost-sure convexity and smoothness of the impulse response identified set): Let $\widetilde{CIS}_{\eta}(\bm{\phi}|\mathbf{s}(\mathbf{Y}^T),N)$ be the conditional identified set for $\eta$ with the sufficient statistics $\mathbf{s}(\mathbf{Y}^T)$. For $p(\mathbf{Y}^{T}|\bm{\phi}_0)$-almost every $\mathbf{Y}^T$, $\widetilde{CIS}_{\eta}(\bm{\phi}|\mathbf{s}(\mathbf{y}^T),N)$ is closed and convex, $\widetilde{CIS}_{\eta}(\bm{\phi}|\mathbf{s}(\mathbf{y}^T),N) = [\tilde{\bm{\ell}}(\bm{\phi}, \mathbf{s}(\mathbf{Y}^T)), \tilde{\mathbf{u}}(\bm{\phi}, \mathbf{s}(\mathbf{Y}^T))]$, and its lower and upper bounds are differentiable in $\bm{\phi}$ at $\bm{\phi} = \bm{\phi}_0$ with nonzero derivatives.
\end{assumption}

Propositions~\ref{prop:convexity}--\ref{prop:differentiability} in Appendix~\ref{sec:proofs} provide primitive conditions for Assumption~\ref{assump:as_convex} to hold in the case where there are shock-sign restrictions. Imposing Assumptions~\ref{assump:dim(s)}, \ref{assump:B-vM} and \ref{assump:as_convex}, we obtain the following theorem.

\begin{theorem}\label{thm:coverage} For $\gamma \in (0,1)$, let $\widehat{C}_{\alpha}^{\ast}$ be the volume-minimizing robust credible region for $\eta$ with credibility $\alpha$,\footnote{The volume-minimizing robust credible region $\widehat{C}_{\alpha}^{\ast}$ is defined as a shortest interval among the connected intervals $C_{\alpha}$ satisfying
\begin{equation*}
P_{\mathbf{Y}^T|\mathbf{s},\bm{\phi}}(\widetilde{CIS}_{\eta}(\bm{\phi}_0 |\mathbf{s}(\mathbf{Y}^T),N) \subset C_{\alpha} |\mathbf{s}(\mathbf{Y}^T), \bm{\phi}_0) \geq \alpha.
\end{equation*}
See Proposition 1 in GK for a procedure to compute the volume-minimizing credible region.} which satisfies
\begin{equation}
\inf_{\pi \in  \Pi_{\bm{\phi},\mathbf{Q}|\mathbf{Y}^{T},D_{N} = 1}} \pi( \widehat{C}_{\alpha}^{\ast} ) = \pi_{\bm{\phi}|\mathbf{Y}^{T},D_{N} = 1}(CIS_{\eta}(\bm{\phi}|\mathbf{Y}^{T},N) \subset \hat{C}_{\alpha}^{\ast}|\mathbf{Y}^T, D_{N} = 1) = \alpha.
\end{equation}
Under Assumptions \ref{assump:dim(s)}, \ref{assump:B-vM}, and \ref{assump:as_convex}, $\widehat{C}_{\alpha}^{\ast}$ attains asymptotically valid coverage for the true impulse response, $\eta_0$, conditional on $\mathbf{s}(\mathbf{Y}^T)$.
\begin{multline}
\liminf_{T \to \infty} P_{\mathbf{Y}^T|\mathbf{s},\bm{\phi}}( \eta_0\in \widehat{C}_{\alpha}^{\ast} |\mathbf{s}(\mathbf{Y}^T), \bm{\phi}_0) \geq \\
\lim_{T \to \infty} P_{\mathbf{Y}^T|\mathbf{s},\bm{\phi}}(\widetilde{CIS}_{\eta}(\bm{\phi}_0|\mathbf{s}(\mathbf{Y}^{T}),N ) \subset \widehat{C}_{\alpha}^{\ast} |\mathbf{s}(\mathbf{Y}^T), \bm{\phi}_0) = \alpha.
\end{multline}
Accordingly, $\widehat{C}_{\alpha}^{\ast}$ attains asymptotically valid coverage for $\eta_0$ unconditionally,
\begin{equation}
\liminf_{T \to \infty} P_{\mathbf{Y}^T|\bm{\phi}}( \eta_0\in \widehat{C}_{\alpha}^{\ast} | \bm{\phi}_0) \geq \lim_{T \to \infty} P_{\mathbf{Y}^T|\bm{\phi}}(\widetilde{CIS}_{\eta}(\bm{\phi}_0|\mathbf{s}(\mathbf{Y}^{T}),N ) \subset \widehat{C}_{\alpha}^{\ast} | \bm{\phi}_0) = \alpha.
\end{equation}
\end{theorem}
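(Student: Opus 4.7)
The plan is to reduce the problem to a conditional experiment in which the framework of GK can be invoked, then deduce the unconditional coverage statement by integrating over $\mathbf{s}$. By Assumption~\ref{assump:dim(s)}, fixing $\mathbf{s}(\mathbf{Y}^T) = \mathbf{s}$ at its observed value eliminates the data-dependence of the conditional identified set beyond its dependence on $\bm{\phi}$, so that $\widetilde{CIS}_{\eta}(\bm{\phi}|\mathbf{s},N)$ plays the role of a standard identified set in the conditional experiment indexed by $\bm{\phi}$. Assumption~\ref{assump:as_convex} then gives a representation of this identified set as an interval with smooth endpoints in $\bm{\phi}$, mirroring the structure GK exploit in the set-identified SVAR.

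The first main step is to establish, conditional on $\mathbf{s}$, that the posterior probability assigned to the event $\{\widetilde{CIS}_{\eta}(\bm{\phi}|\mathbf{s},N) \subset \widehat{C}_{\alpha}^{\ast}\}$ asymptotically agrees with the sampling probability of the same event. This would follow by a conditional analogue of Proposition~3 in GK: Assumption~\ref{assump:B-vM} gives a conditional Bernstein--von Mises property for $\sqrt{T}(\bm{\phi}-\hat{\bm{\phi}})$, and the delta-method applied to the differentiable bounds $\tilde{\bm{\ell}}(\bm{\phi},\mathbf{s})$ and $\tilde{\mathbf{u}}(\bm{\phi},\mathbf{s})$ (Assumption~\ref{assump:as_convex}) transfers this BvM to the joint posterior-versus-sampling comparison for $\sqrt{T}\bigl(\tilde{\bm{\ell}}(\bm{\phi},\mathbf{s}) - \tilde{\bm{\ell}}(\hat{\bm{\phi}},\mathbf{s}), \tilde{\mathbf{u}}(\bm{\phi},\mathbf{s}) - \tilde{\mathbf{u}}(\hat{\bm{\phi}},\mathbf{s})\bigr)$. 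Closedness and convexity of the CIS ensure that $\{CIS_{\eta}\subset C\}$ reduces to the two inequalities on the endpoints, so the event has well-defined limits under both the posterior and the sampling distribution. Because $\widehat{C}_{\alpha}^{\ast}$ is by definition the shortest interval achieving posterior probability $\alpha$ for this event, its conditional coverage of the CIS equals $\alpha$ in the limit.

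The second step is to pass from conditional coverage of the CIS to conditional coverage of the true impulse response. This is immediate: the true parameter $(\bm{\phi}_0,\mathbf{Q}_0)$ produces the data, so $D_N(\bm{\phi}_0,\mathbf{Q}_0,\mathbf{Y}^T)=1$ holds on the event $\{D_N=1\}$, implying $\mathbf{Q}_0 \in \mathcal{Q}(\bm{\phi}_0|\mathbf{Y}^T,N)$ and hence $\eta_0 \in \widetilde{CIS}_{\eta}(\bm{\phi}_0|\mathbf{s}(\mathbf{Y}^T),N)$. Therefore $\{\widetilde{CIS}_{\eta}(\bm{\phi}_0|\mathbf{s}(\mathbf{Y}^T),N)\subset \widehat{C}_{\alpha}^{\ast}\} \subset \{\eta_0 \in \widehat{C}_{\alpha}^{\ast}\}$, which yields the claimed lower bound inequality. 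The third step is to integrate the conditional coverage statement over the distribution of $\mathbf{s}$ using Fatou's lemma (applied to the $p(\mathbf{s}|\bm{\phi}_0)$-almost-sure conditional statement) to obtain the unconditional coverage result.

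The main obstacle is the first step, specifically the conditional Bernstein--von Mises argument under truncation by $D_N$. One must verify that the joint posterior $\pi_{\bm{\phi},\mathbf{Q}|\mathbf{Y}^T,D_N=1}$ when marginalized over $\mathbf{Q}$ still satisfies a BvM for $\bm{\phi}$ given $\mathbf{s}$: intuitively this holds because the unconditional likelihood factors as $f(\mathbf{Y}^T|\bm{\phi})\cdot D_N(\bm{\phi},\mathbf{Q},\mathbf{Y}^T)$, and integrating out $\mathbf{Q}$ against a prior supported on the (data-dependent but $\mathbf{s}$-measurable) admissible set leaves the $\bm{\phi}$-marginal proportional to $f(\mathbf{Y}^T|\bm{\phi})\pi_{\bm{\phi}}$ times a factor that depends on $\bm{\phi}$ only through the admissible-set volume, which is smooth in $\bm{\phi}$ at $\bm{\phi}_0$. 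A careful statement of this reduction, combined with Assumption~\ref{assump:B-vM}, delivers the required BvM and closes the argument.
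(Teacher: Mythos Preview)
Your proposal is correct and follows essentially the same route as the paper: the paper's proof first observes that $\eta_0 \in \widetilde{CIS}_{\eta}(\bm{\phi}_0|\mathbf{s}(\mathbf{Y}^T),N)$ for every sample (your Step~2), then obtains asymptotic coverage of the conditional identified set in the conditional experiment by invoking Proposition~2 of GK directly under Assumptions~\ref{assump:B-vM} and~\ref{assump:as_convex} (your Step~1, which you spell out via the delta method on the endpoints). Two minor remarks: the paper cites Proposition~2 rather than Proposition~3 of GK, and your ``main obstacle'' paragraph is largely moot because Assumption~\ref{assump:B-vM} \emph{assumes} the conditional Bernstein--von Mises property for $\bm{\phi}$ outright, so no separate verification of the marginal posterior under truncation is needed in the proof.
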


\begin{proof}
See Appendix B.
\end{proof}

This theorem shows that the robust credible region of GK applied to the SVAR model with NR attains asymptotically valid frequentist coverage for the true impulse response as well as the conditional impulse-response identified set. Even if the point-identification condition of Proposition~\ref{prop:point_identification} holds for the impulse response, it is not obvious if the standard Bayesian credible region can attain frequentist coverage. This is because the Bernstein-von Mises theorem does not seem to hold for the impulse response due to the non-standard features of models with NR.

One could also consider asymptotics under an increasing number of restrictions. We conjecture that, under certain assumptions about how the NR are generated, the class of posteriors for $\eta$ will converge to a point mass at its true value. An implication is that the posterior mean under \textit{any} conditional prior for $\mathbf{Q}$ that places probability one on the identified set would be consistent for the true value. This result would be an interesting contrast to the case under traditional set-identifying restrictions, where the location of the posterior mean within the identified set is determined purely by the conditional prior and the posterior quantiles lie strictly within the identified set (e.g., \cite{Moon_Schorfheide_2012}). We do not analyze this case here, since the assumption that there is a fixed number of restrictions seems to be of primary interest. However, Appendix~\ref{sec:increasingrestrictions} provides numerical evidence in support of our conjecture. We leave formal investigation of this conjecture for future work.

\section{Empirical application: the dynamic effects of a monetary policy shock}
\label{sec:empirical}

AR18 estimate the effects of monetary policy shocks on the US economy using a combination of sign restrictions on impulse responses and NR. The reduced-form VAR is the same as that used in \cite{Uhlig_2005}. The model's endogenous variables are real GDP, the GDP deflator, a commodity price index, total reserves, non-borrowed reserves (all in natural logarithms) and the federal funds rate; see Arias, Caldara and Rubio-Ram\'{i}rez (2019)\nocite{Arias_Caldara_Rubio-Ramirez_2019} for details on the variables. The data are monthly and run from January 1965 to November 2007. The VAR includes 12~lags and we include a constant.

As NR, AR18 impose that the monetary policy shock in October 1979 was positive and that it was the overwhelming contributor to the unexpected change in the federal funds rate in that month. This was the month in which the Federal Reserve markedly and unexpectedly increased the federal funds rate following the appointment of Paul Volcker as chairman of the Federal Reserve, and is widely considered to be an example of a positive monetary policy shock (e.g., \cite{Romer_Romer_1989}). The traditional sign restrictions considered in \cite{Uhlig_2005} are also imposed. Specifically, the response of the federal funds rate is restricted to be non-negative for $h=0,1,\ldots,5$ and the responses of the GDP deflator, the commodity price index and nonborrowed reserves are restricted to be nonpositive for $h=0,1,\ldots,5$.

We assume a Jeffreys' (improper) prior over the reduced-form parameters, $\pi_{\bm{\phi}} = \pi_{\mathbf{B},\bm{\Sigma}} \propto |\bm{\Sigma}|^{-\frac{n+1}{2}}$, which is truncated so that the VAR is stable. The posterior for the reduced-form parameters, $\pi_{\bm{\phi}|\mathbf{Y}^{T}}$, is then a normal-inverse-Wishart distribution, from which it is straightforward to obtain independent draws (for example, see \cite{DelNegro_Schorfheide_2011}). We obtain 1,000 draws from the posterior of $\bm{\phi}$ such that the VAR is stable and $\mathcal{Q}(\bm{\phi}|\mathbf{Y}^{T},N,S)$ is non-empty. We use Algorithm~1 with $K = 10,000$ draws of $\mathbf{Q}$ at each draw of $\bm{\phi}$ to approximate $l(\bm{\phi},\mathbf{Y}^{T})$ and $u(\bm{\phi},\mathbf{Y}^{T})$. If we cannot obtain a draw of $\mathbf{Q}$ satisfying the restrictions after 100,000 draws of $\mathbf{Q}$, we approximate $\mathcal{Q}(\bm{\phi}|\mathbf{Y}^{T},N,S)$ as being empty at that draw of $\bm{\phi}$.

We explore the sensitivity of posterior inference to the choice of prior for $\mathbf{Q}|\bm{\phi}$ when the unconditional likelihood is used to construct the posterior. For brevity, we report only the impulse responses of the federal funds rate and real GDP to a positive standard-deviation monetary policy shock (Figure~\ref{fig:MP_Oct79}). As a point of comparison, we report results obtained using a conditionally uniform prior for $\mathbf{Q}|\bm{\phi}$. Under this prior, the 68~per cent highest posterior density credible intervals for the response of real GDP exclude zero at horizons greater than a year or so.\footnote{The results are not directly comparable to those presented in Figure~6 of AR18. First, we present responses to a standard-deviation shock, whereas AR18 describe their responses as being to a 25~basis point shock (although, from close inspection of their Figure~6, it is evident that this normalization is not imposed correctly, because the impact response of the federal funds rate fans out around zero). Second, we use a prior for $\mathbf{Q}$ that is conditionally uniform given $\bm{\phi}$, whereas AR18 use a prior that is unconditionally uniform.} In contrast, the 68~per cent robust credible intervals include zero at all horizons. Under the single prior, the posterior probability that the output response is negative two years after the shock is 95~per cent. In contrast, the posterior lower probability of this event -- the smallest probability over the class of posteriors generated by the class of priors -- is only 54~per cent. The results suggest that posterior inference about the effect of monetary policy on output can be sensitive to the choice of (unrevisable) prior for $\mathbf{Q}|\bm{\phi}$.

\begin{figure}[h]
    \center
    \caption{Impulse Responses to a Monetary Policy Shock} \label{fig:MP_Oct79}
    \begin{tabular}{ccc}
        \includegraphics[scale=0.5]{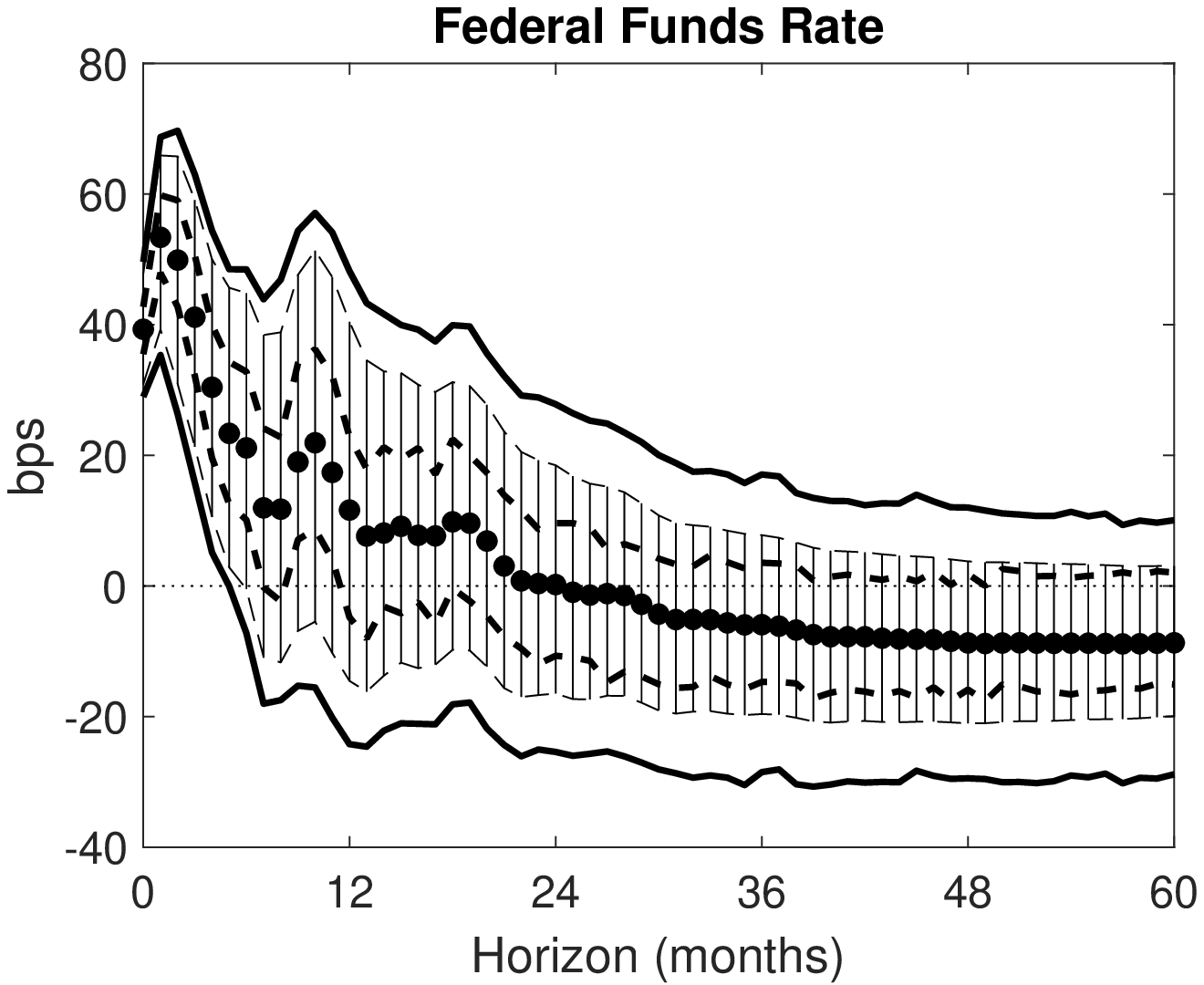} & \phantom{a} & \includegraphics[scale=0.5]{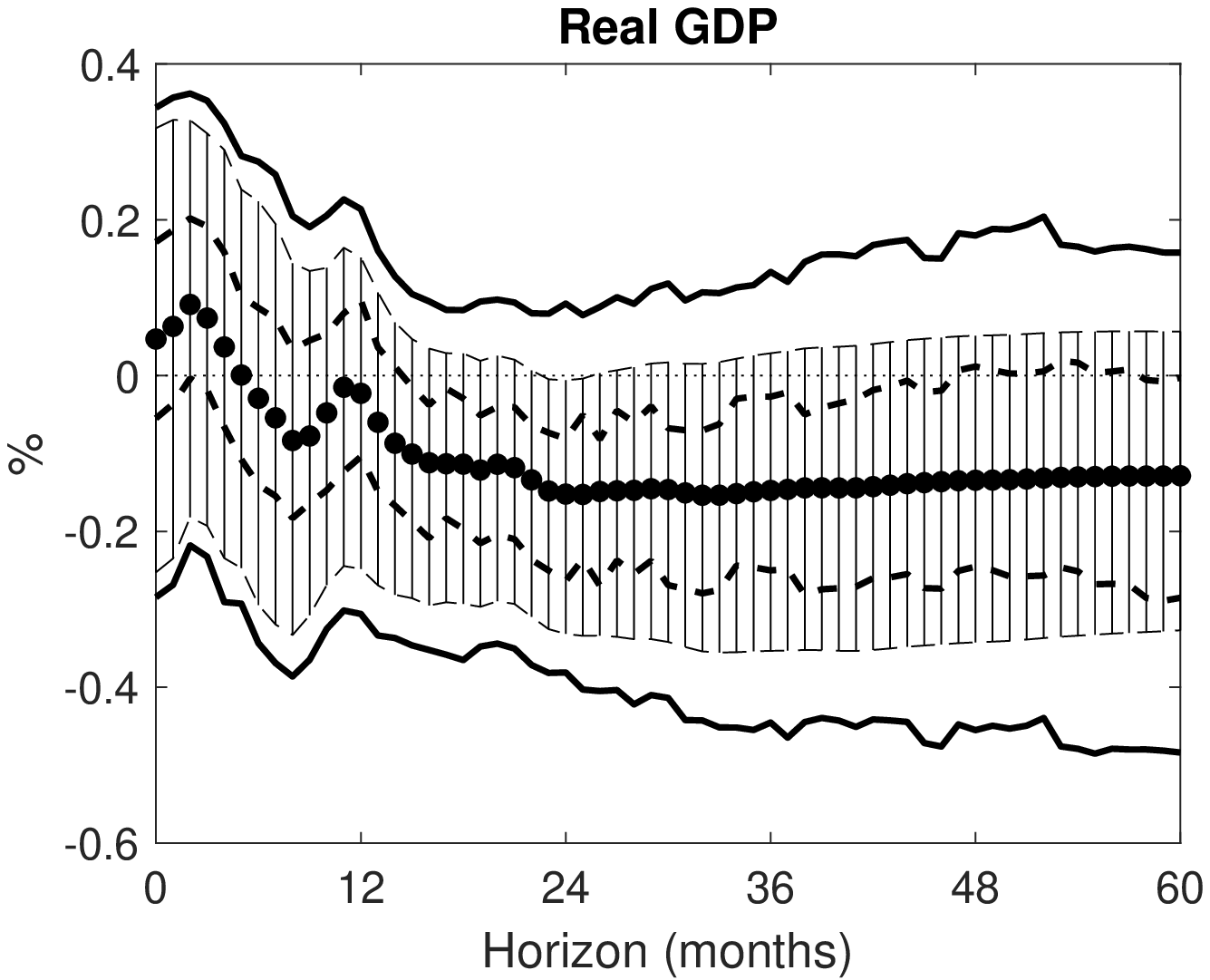} \\
    \end{tabular}
    \footnotesize \parbox[t]{0.65 in}{Notes:}\parbox[t]{5 in}{Circles and dashed lines are, respectively, posterior means and 68~per cent (pointwise) highest posterior density intervals under the uniform prior for $\mathbf{Q}|\bm{\phi}$; vertical bars are sets of posterior means and solid lines are 68~per cent (pointwise) robust credible regions obtained using Algorithm~1 with 10,000 draws from $\mathcal{Q}(\bm{\phi}|\mathbf{Y}^{T},N,S)$; results are based on 1,000 draws from the posterior of $\bm{\phi}$ with nonempty $\mathcal{Q}(\bm{\phi}|\mathbf{Y}^{T},N,S)$; impulse responses are to a standard-deviation shock.} \\
\end{figure}

AR18 also consider an alternative set of restrictions. Specifically, they impose that the monetary policy shock was: positive in April 1974, October 1979, December 1988 and February 1994; negative in December 1990, October 1998, April 2001 and November 2002; and the most important contributor to the observed unexpected change in the federal funds rate in these months. The choice of these dates is based on a synthesis of information from different sources, including the chronology of monetary policy actions from \cite{Romer_Romer_1989}, an updated series of the monetary policy shocks constructed using Greenbook forecasts in \cite{Romer_Romer_2004}, the high-frequency monetary policy surprises from G\"{u}rkaynak, Sack and Swanson (2005)\nocite{Gurkaynak_Sack_Swanson_2005}, and minutes from Federal Open Markets Committee meetings. Under this extended set of restrictions, the set of posterior means and the robust credible interval are tightened noticeably, particularly at shorter horizons (Figure~\ref{fig:MP_Oct79_ShockRank}). The posterior lower probability of a negative output response two years after the shock is now 80~per cent, compared with 54~per cent under the October 1979 restrictions.

Finally, we investigate how posterior inference about the output response is affected by replacing AR18's extended set of restrictions with a shock-rank restriction. Specifically, we estimate the set of output responses that are consistent with the restriction that the monetary policy shock in October 1979 was the largest positive realization of the monetary policy shock in the sample period.\footnote{The large number of inequality constraints and tight conditional identified set induced by the shock-rank restriction poses computational challenges when using Algorithm~1. Accordingly, we use an alternative algorithm to obtain the results. The algorithm adapts an algorithm in \cite{Amir-Ahmadi_Drautzburg_2021} and is described in Appendix~\ref{sec:appendixalgorithms}.} This restriction appears plausible given that the change in the federal funds rate in October 1979 was more positive than the change in the federal funds rate in the other periods identified by AR18 as containing notable monetary policy shocks (Table~\ref{tab:dffr}). The shock-rank restriction somewhat shrinks the set of posterior means and robust credible regions relative to those obtained under the restrictions on the historical decomposition. Nevertheless, the two sets of restrictions lead to similar (robust) posterior inferences about the output response. The 68~per cent robust credible intervals include zero at all horizons under both sets of restrictions. The posterior lower probability that output falls two years after the shock is 73~per cent under the shock-rank restriction, compared with 80~per cent under the restriction on the historical decomposition.

\begin{table}[h]
    \small
    \centering
    \setlength{\tabcolsep}{10pt}
    \begin{threeparttable}
        \captionsetup{justification=centering}
        \caption{Monthly Change in Federal Funds Rate (ppt)} \label{tab:dffr}
        \center
        \begin{tabular}{*{8}{c}}
            \toprule
            Oct 79      &       Apr 74  &       Dec 88  &       Feb 94  &       Dec 90      &       Oct 98  &       Apr 01  &       Nov 02  \\
            2.34        &       1.16    &       0.41    &       0.20     &       --0.50  &       --0.44  &       --0.51  &       --0.41  \\
            \bottomrule
        \end{tabular}
        \begin{tablenotes}[flushleft]
            \footnotesize\item Source: FRED
        \end{tablenotes}
    \end{threeparttable}
\end{table}

\begin{figure}[h]
    \center
    \caption{Impulse Responses to a Monetary Policy Shock -- Extended Restrictions vs Shock-rank Restriction} \label{fig:MP_Oct79_ShockRank}
    \begin{tabular}{ccc}
        \includegraphics[scale=0.5]{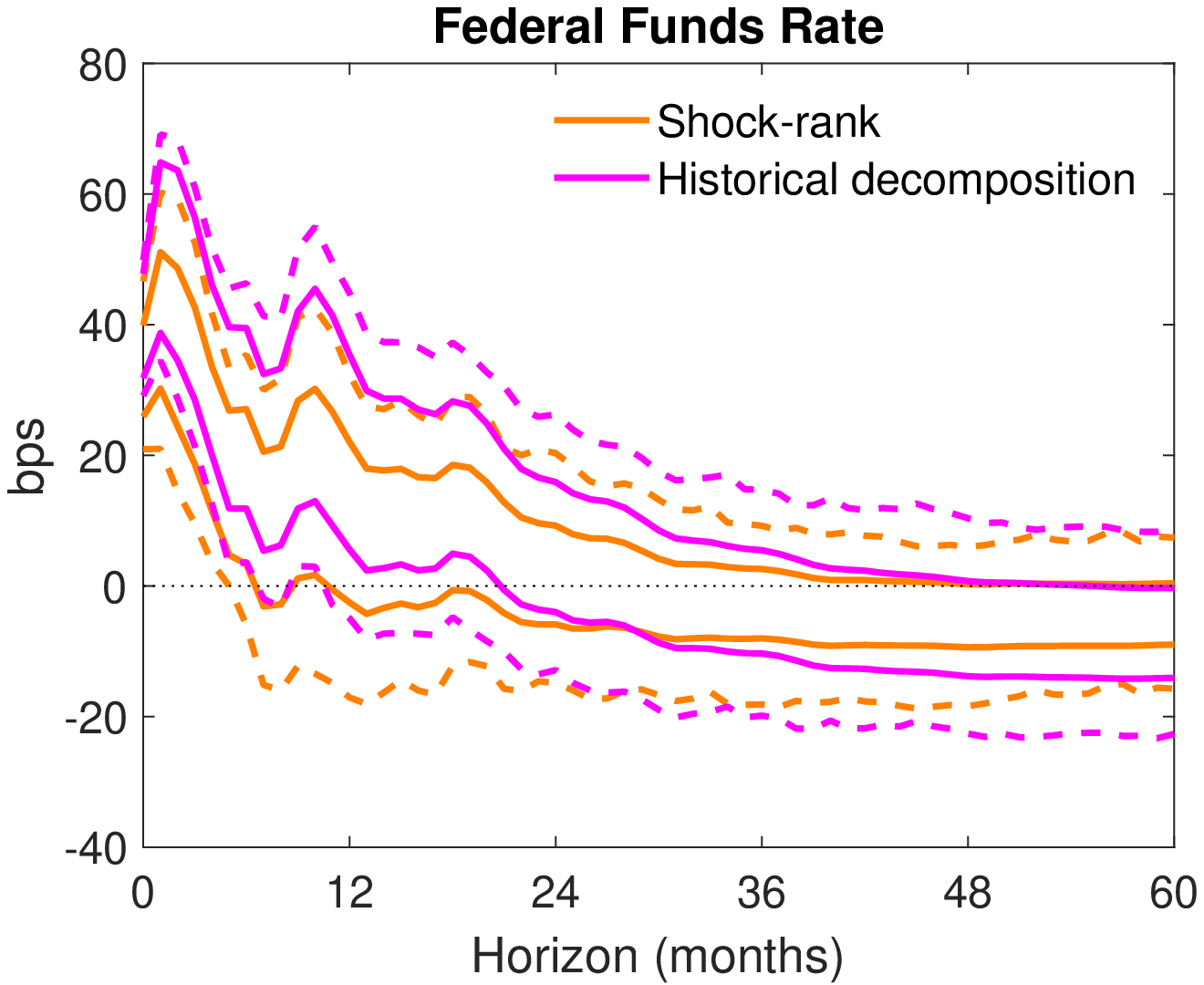} & \phantom{a} & \includegraphics[scale=0.5]{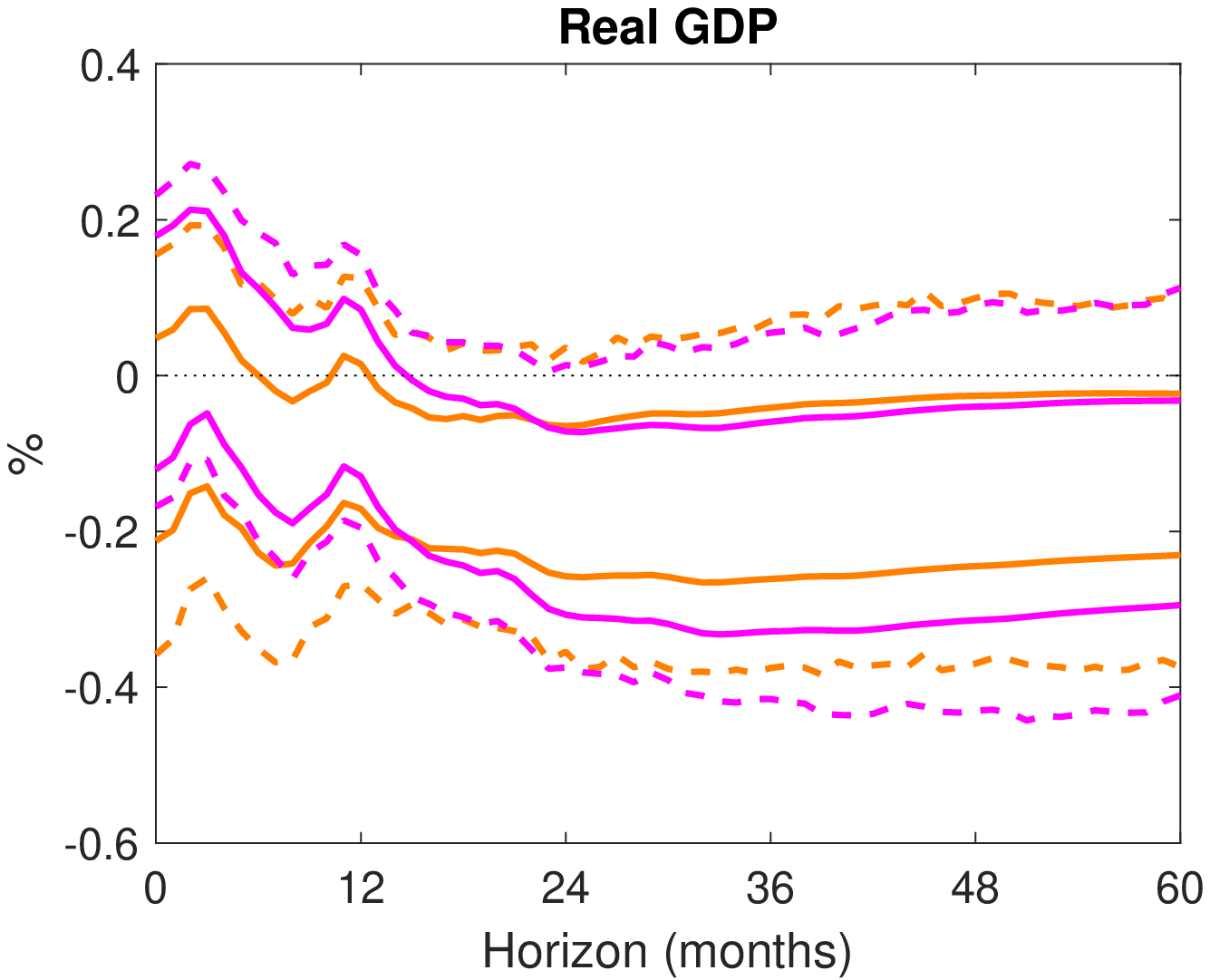} \\
    \end{tabular}
    \footnotesize \parbox[t]{0.65 in}{Notes:}\parbox[t]{5 in}{Solid lines represent set of posterior means and dashed lines represent 68~per cent (pointwise) robust credible regions; results are based on 1,000 draws from the posterior of $\bm{\phi}$ with nonempty $\mathcal{Q}(\bm{\phi}|\mathbf{Y}^{T},N,S)$; results under shock-rank restriction are obtained using Algorithm~D.1; results under restrictions on the historical decomposition are obtained using Algorithm~1 with 1,000 draws from $\mathcal{Q}(\bm{\phi}|\mathbf{Y}^{T},N,S)$; impulse responses are to a standard-deviation shock.} \\
\end{figure}

In general, $\mathcal{Q}(\bm{\phi}|\mathbf{Y}^{T},N,S)$ may be empty at particular values of $\bm{\phi}$. The proportion of draws of $\bm{\phi}$ where $\mathcal{Q}(\bm{\phi}|\mathbf{Y}^{T},N,S)$ is empty can therefore be used to assess the plausibility of the restrictions (see GK). Under the October 1979 restrictions, the posterior plausibility of the restrictions is one (i.e., every draw of $\bm{\phi}$ has a nonempty conditional identified set). In contrast, the posterior plausibility under AR18's extended set of restrictions is 53~per cent, while it is only 17~per cent under the shock-rank restriction.

\section{Conclusion}
\label{sec:conclusion}

Directly restricting the values of structural shocks to be consistent with historical narratives offers a potentially useful approach to disciplining SVARs, but raises novel issues related to identification and inference. These restrictions generate a set-valued mapping from the model's reduced-form parameters to its structural parameters that depends on the realization of the data entering the restrictions. This means that these restrictions do not fit neatly into the existing framework for analyzing identification in SVARs. In particular, we show that these restrictions may be point-identifying in a frequentist sense. We also highlight issues associated with existing standard Bayesian approaches to estimation and inference. Conditioning on the restrictions holding may result in the posterior placing more weight on parameters that yield a lower ex ante probability that the restrictions are satisfied. We therefore advocate using the unconditional likelihood when constructing the posterior. However, the observed unconditional likelihood will almost always possess flat regions, which implies that a component of the prior will not be updated by the data. Posterior inference may therefore be sensitive to the choice of prior. To address this, we provide robust Bayesian tools to assess or eliminate the sensitivity of posterior inference to the choice of prior. We also provide conditions under which these tools have a valid frequentist interpretation, so our approach should appeal to both Bayesians and frequentists.

While we focus on SVARs in the paper, our analysis could be extended to other settings. For example, \cite{Plagborg-Moller_Wolf_2020a} explain how to impose traditional SVAR identifying restrictions in the local projection framework under the assumption that the structural shocks are invertible. In Appendix~\ref{sec:localprojection} we briefly discuss how NR could also be imposed within the local projection framework, but we leave a formal analysis of this problem to future research.

\newpage

\begin{appendices}

\numberwithin{equation}{section}
\setcounter{equation}{0}
\numberwithin{figure}{section}
\setcounter{figure}{0}

\section{Bivariate example derivations}
\label{sec:appendixA}

\noindent\textbf{Set of values of $\theta$ under shock-sign restriction.} This section derives analytical expressions for the set of values of $\theta$ consistent with the shock-sign restriction in the bivariate example of Section~\ref{sec:bivariate}. Throughout, we assume that $\theta \in [-\pi,\pi]$.

Under the shock-sign restriction $\varepsilon_{1k} \geq 0$ and the sign normalization $\mathrm{diag}(\mathbf{A}_{0}) \geq \mathbf{0}_{2\times 1}$, $\theta$ is restricted to lie in the set
\begin{multline}
  \theta \in \left\{\theta: \sigma_{21}\sin\theta \leq \sigma_{22}\cos\theta, \cos\theta \geq 0, \sigma_{22}y_{1k}\cos\theta \geq (\sigma_{21}y_{1k}-\sigma_{11}y_{2k})\sin\theta \right\} \\
  \cup \left\{\theta: \sigma_{21}\sin\theta \leq \sigma_{22}\cos\theta, \cos\theta \leq 0, \sigma_{22}y_{1k}\cos\theta \geq (\sigma_{21}y_{1k}-\sigma_{11}y_{2k})\sin\theta \right\}.
\end{multline}

Consider the case where $\sigma_{21} < 0$ and $\sigma_{21}y_{1k}-\sigma_{11}y_{2k} < 0$. Then $\theta$ is restricted to the set
\begin{multline}
  \theta \in \left\{\theta: \tan\theta \geq \frac{\sigma_{22}}{\sigma_{21}}, \cos\theta >0 , \frac{\sigma_{22}y_{1k}}{\sigma_{21}y_{1k}-\sigma_{11}y_{2k}} \leq \tan\theta \right\} \cup \left\{\frac{\pi}{2}\right\} \\
  \cup \left\{\theta: \tan\theta \leq \frac{\sigma_{22}}{\sigma_{21}}, \cos\theta < 0, \frac{\sigma_{22}y_{1k}}{\sigma_{21}y_{1k}-\sigma_{11}y_{2k}} \geq \tan\theta\right\}.
\end{multline}
The inequalities in the first set hold if and only if $\tan\theta \geq \max\left\{\frac{\sigma_{22}}{\sigma_{21}},\frac{\sigma_{22}y_{1k}}{\sigma_{21}y_{1k}-\sigma_{11}y_{2k}}\right\}$ and $\theta \in (-\frac{\pi}{2},\frac{\pi}{2})$, which implies that
\begin{equation}\label{eq:a1}
  \arctan\left(\max\left\{\frac{\sigma_{22}}{\sigma_{21}},\frac{\sigma_{22}y_{1k}}{\sigma_{21}y_{1k}-\sigma_{11}y_{2k}}\right\}\right)\leq \theta < \frac{\pi}{2}.
\end{equation}
The inequalities on the second line hold if and only if $\tan\theta \leq \min\left\{\frac{\sigma_{22}}{\sigma_{21}}, \frac{\sigma_{22}y_{1k}}{\sigma_{21}y_{1k}-\sigma_{11}y_{2k}}\right\}$ and $\theta \in [-\pi,-\frac{\pi}{2}) \cup (\frac{\pi}{2},\pi]$. Since $\sigma_{21} < 0$, $\tan\theta$ must be negative, which implies that $\theta \in (\frac{\pi}{2},\pi]$. It follows that
\begin{equation}\label{eq:a2}
  \frac{\pi}{2} < \theta \leq \pi + \arctan\left(\min\left\{\frac{\sigma_{22}}{\sigma_{21}},\frac{\sigma_{22}y_{1k}}{\sigma_{21}y_{1k}-\sigma_{11}y_{2k}}\right\}\right).
\end{equation}
Taking the union of (\ref{eq:a1}), (\ref{eq:a2}) and $\left\{\frac{\pi}{2}\right\}$ implies that
\begin{equation}\label{eq:thetaset1}
  \theta \in \left[\arctan\left(\max\left\{\frac{\sigma_{22}}{\sigma_{21}},\frac{\sigma_{22}y_{1k}}{\sigma_{21}y_{1k}-\sigma_{11}y_{2k}}\right\}\right), \pi + \arctan\left(\min\left\{\frac{\sigma_{22}}{\sigma_{21}},\frac{\sigma_{22}y_{1k}}{\sigma_{21}y_{1k}-\sigma_{11}y_{2k}}\right\}\right)\right].
\end{equation}

Next, consider the case where $\sigma_{21} < 0$ and $\sigma_{21}y_{1k}-\sigma_{11}y_{2k} > 0$. Then $\theta$ is restricted to the set
\begin{multline}
  \theta \in \left\{\theta: \tan\theta \geq \frac{\sigma_{22}}{\sigma_{21}}, \cos\theta >0 , \frac{\sigma_{22}y_{1k}}{\sigma_{21}y_{1k}-\sigma_{11}y_{2k}} \geq \tan\theta \right\} \\
  \cup \left\{\theta: \tan\theta \leq \frac{\sigma_{22}}{\sigma_{21}}, \cos\theta < 0, \frac{\sigma_{22}y_{1k}}{\sigma_{21}y_{1k}-\sigma_{11}y_{2k}} \leq \tan\theta\right\}.
\end{multline}
If $y_{1k} > 0$ or if $y_{1k} < 0$ and $\frac{\sigma_{22}}{\sigma_{21}} < \frac{\sigma_{22}y_{1k}}{\sigma_{21}y_{1k}-\sigma_{11}y_{2k}}$, the second set of inequalities is not satisfied for any $\theta$, while the first set of inequalities is satisfied for
\begin{equation}\label{eq:thetaset}
  \theta \in \left[\arctan\left(\frac{\sigma_{22}}{\sigma_{21}}\right),\arctan\left(\frac{\sigma_{22}y_{1k}}{\sigma_{21}y_{1k}-\sigma_{11}y_{2k}}\right)\right].
\end{equation}
If $y_{1k} < 0$ and $\frac{\sigma_{22}}{\sigma_{21}} > \frac{\sigma_{22}y_{1k}}{\sigma_{21}y_{1k}-\sigma_{11}y_{2k}}$, the first set of inequalities has no solution and the second set is satisfied for
\begin{equation}
  \theta \in \left[\pi + \arctan\left(\frac{\sigma_{22}y_{1k}}{\sigma_{21}y_{1k}-\sigma_{11}y_{2k}}\right),\pi + \arctan\left(\frac{\sigma_{22}}{\sigma_{21}}\right)\right].
\end{equation}

In the case where $\sigma_{21} > 0$ and $\sigma_{21}y_{1k}-\sigma_{11}y_{2k} < 0$, $\theta$ is restricted to the set
\begin{multline}
  \theta \in \left\{\theta: \tan\theta \leq \frac{\sigma_{22}}{\sigma_{21}}, \cos\theta >0 , \frac{\sigma_{22}y_{1k}}{\sigma_{21}y_{1k}-\sigma_{11}y_{2k}} \leq \tan\theta \right\} \\
  \cup \left\{\theta: \tan\theta \geq \frac{\sigma_{22}}{\sigma_{21}}, \cos\theta < 0, \frac{\sigma_{22}y_{1k}}{\sigma_{21}y_{1k}-\sigma_{11}y_{2k}} \geq \tan\theta\right\}.
\end{multline}
If $y_{1k} > 0$ or if $y_{1k} < 0$ and $\frac{\sigma_{22}}{\sigma_{21}} > \frac{\sigma_{22}y_{1k}}{\sigma_{21}y_{1k}-\sigma_{11}y_{2k}}$, the second set of inequalities has no solution, while the first is satisfied for
\begin{equation}
  \theta \in \left[\arctan\left(\frac{\sigma_{22}y_{1k}}{\sigma_{21}y_{1k}-\sigma_{11}y_{2k}}\right),\arctan\left(\frac{\sigma_{22}}{\sigma_{21}}\right)\right].
\end{equation}
If $y_{1k} < 0$ and $\frac{\sigma_{22}}{\sigma_{21}} < \frac{\sigma_{22}y_{1k}}{\sigma_{21}y_{1k}-\sigma_{11}y_{2k}}$, the first set of inequalities has no solution and the second set is satisfied for
\begin{equation}
  \theta \in \left[-\pi + \arctan\left(\frac{\sigma_{22}}{\sigma_{21}}\right),-\pi + \arctan\left(\frac{\sigma_{22}y_{1k}}{\sigma_{21}y_{1k}-\sigma_{11}y_{2k}}\right)\right].
\end{equation}

Finally, in the case where $\sigma_{21} > 0$ and $\sigma_{21}y_{1k}-\sigma_{11}y_{2k} > 0$, $\theta$ is restricted to the set
\begin{multline}
  \theta \in \left\{\theta: \tan\theta \leq \frac{\sigma_{22}}{\sigma_{21}}, \cos\theta >0 , \frac{\sigma_{22}y_{1k}}{\sigma_{21}y_{1k}-\sigma_{11}y_{2k}} \geq \tan\theta \right\} \cup \left\{-\frac{\pi}{2}\right\} \\
  \cup \left\{\theta: \tan\theta \geq \frac{\sigma_{22}}{\sigma_{21}}, \cos\theta < 0, \frac{\sigma_{22}y_{1k}}{\sigma_{21}y_{1k}-\sigma_{11}y_{2k}} \leq \tan\theta\right\}.
\end{multline}
The first set of inequalities holds if and only if $\tan\theta \leq \min\left\{\frac{\sigma_{22}}{\sigma_{21}},\frac{\sigma_{22}y_{1k}}{\sigma_{21}y_{1k}-\sigma_{11}y_{2k}}\right\}$ and $\theta \in (-\frac{\pi}{2},\frac{\pi}{2})$, which implies that
\begin{equation}\label{eq:a3}
  -\frac{\pi}{2} < \theta \leq \arctan\left(\min\left\{\frac{\sigma_{22}}{\sigma_{21}},\frac{\sigma_{22}y_{1k}}{\sigma_{21}y_{1k}-\sigma_{11}y_{2k}}\right\}\right).
\end{equation}
The second set of inequalities holds if and only if $\tan\theta \geq \max\left\{\frac{\sigma_{22}}{\sigma_{21}},\frac{\sigma_{22}y_{1k}}{\sigma_{21}y_{1k}-\sigma_{11}y_{2k}}\right\}$ and $\theta \in [-\pi,-\frac{\pi}{2}) \cup (\frac{\pi}{2},\pi]$. Since $\sigma_{21} > 0$, $\tan\theta$ must be positive, which implies that $\theta \in [-\pi,-\frac{\pi}{2})$. It follows that
\begin{equation}\label{eq:a4}
  -\pi + \arctan\left(\max\left\{\frac{\sigma_{22}}{\sigma_{21}},\frac{\sigma_{22}y_{1k}}{\sigma_{21}y_{1k}-\sigma_{11}y_{2k}}\right\}\right) \leq \theta < -\frac{\pi}{2}.
\end{equation}
Taking the union of (\ref{eq:a3}), (\ref{eq:a4}) and $\left\{-\frac{\pi}{2}\right\}$ implies that
\begin{equation}
  \theta \in \left[-\pi + \arctan\left(\max\left\{\frac{\sigma_{22}}{\sigma_{21}},\frac{\sigma_{22}y_{1k}}{\sigma_{21}y_{1k}-\sigma_{11}y_{2k}}\right\}\right), \arctan\left(\min\left\{\frac{\sigma_{22}}{\sigma_{21}},\frac{\sigma_{22}y_{1k}}{\sigma_{21}y_{1k}-\sigma_{11}y_{2k}}\right\}\right)\right].
\end{equation}

\bigskip

\noindent\textbf{Set of values of $\eta$ under shock-sign restriction.} Here we derive the expression for the set of impulse responses $\eta \equiv \sigma_{11}\cos\theta$ consistent with the shock-sign restriction (i.e., (\ref{eq:impulseresponseset}) in Section~\ref{sec:bivariate}).

In the absence of restrictions, the set of admissible values for the matrix of contemporaneous impulse responses is
\begin{multline}
    \mathbf{A}_{0}^{-1} \in \left\{
    \begin{bmatrix}
      \sigma_{11}\cos\theta & -\sigma_{11}\sin\theta \\
      \sigma_{21}\cos\theta + \sigma_{22}\sin\theta & \sigma_{22}\cos\theta - \sigma_{21}\sin\theta
    \end{bmatrix}
     : \theta \in [-\pi,\pi]\right\} \\
     \cup \left\{
    \begin{bmatrix}
      \sigma_{11}\cos\theta & \sigma_{11}\sin\theta \\
      \sigma_{21}\cos\theta + \sigma_{22}\sin\theta & \sigma_{21}\sin\theta - \sigma_{22}\cos\theta
    \end{bmatrix}
    : \theta \in [-\pi,\pi]\right\}.
\end{multline}

Assume that $\sigma_{21} < 0$, $\sigma_{21}y_{1k}-\sigma_{11}y_{2k} > 0$ and $y_{1k} > 0$. Within the interval for $\theta$ defined in (\ref{eq:thetaset}), $\eta$ is maximized at $\theta = 0$, so $\eta_{ub} = \sigma_{11}$. The lower bound $\eta_{lb}$ occurs at one of the endpoints of the interval for $\theta$, so it satisfies
\begin{align}
  \eta_{lb} &= \min\left\{\sigma_{11}\cos\left(\arctan\left(\frac{\sigma_{22}}{\sigma_{21}}\right)\right),\sigma_{11}\cos\left(\arctan\left(\frac{\sigma_{22}y_{1k}}{\sigma_{21}y_{1k}-\sigma_{11}y_{2k}}\right)\right)\right\} \notag \\
  &= \min\left\{\sigma_{11}\cos\left(-\arctan\left(\frac{\sigma_{22}}{\sigma_{21}}\right)\right),\sigma_{11}\cos\left(\arctan\left(\frac{\sigma_{22}y_{1k}}{\sigma_{21}y_{1k}-\sigma_{11}y_{2k}}\right)\right)\right\} \notag \\
  &= \min\left\{\sigma_{11}\cos\left(\arctan\left(-\frac{\sigma_{22}}{\sigma_{21}}\right)\right),\sigma_{11}\cos\left(\arctan\left(\frac{\sigma_{22}y_{1k}}{\sigma_{21}y_{1k}-\sigma_{11}y_{2k}}\right)\right)\right\} \notag \\
  &= \sigma_{11}\cos\left(\max\left\{\arctan\left(-\frac{\sigma_{22}}{\sigma_{21}}\right),\arctan\left(\frac{\sigma_{22}y_{1k}}{\sigma_{21}y_{1k}-\sigma_{11}y_{2k}}\right)\right\}\right) \notag \\
  &= \sigma_{11}\cos\left(\arctan\left(\max\left\{-\frac{\sigma_{22}}{\sigma_{21}},\frac{\sigma_{22}y_{1k}}{\sigma_{21}y_{1k}-\sigma_{11}y_{2k}}\right\}\right)\right).
\end{align}
The second line follows from the fact that $\cos(.)$ is an even function and the third line follows from the fact that $\arctan(.)$ is an odd function. The arguments entering the $\cos(.)$ functions on the third line are both in the interval $[0,\frac{\pi}{2})$, so the fourth line follows from the fact that $\cos(.)$ is a decreasing function over this domain. The final line follow from the fact that $\arctan(.)$ is an increasing function.

\bigskip

\noindent\textbf{Restriction on the historical decomposition.} Under the restrictions that the first structural shock is positive in period $k$ and was the most important (or overwhelming) contributor to the change in the first variable, $\theta$ is restricted to lie in the set
\begin{multline}
  \theta \in \Big\{\theta: \sigma_{21}\sin\theta \leq \sigma_{22}\cos\theta, \cos\theta \geq 0, \sigma_{22}y_{1k}\cos\theta \geq (\sigma_{21}y_{1k}-\sigma_{11}y_{2k})\sin\theta, \\
    |\sigma_{22}y_{1k}\cos^{2} \theta + (\sigma_{11}y_{2k} - \sigma_{21}y_{1k})\cos\theta \sin \theta| \geq |\sigma_{22}y_{1k}\sin^{2}\theta + (\sigma_{21}y_{1k}-\sigma_{11}y_{2k})\cos\theta\sin\theta| \Big\} \\
    \cup \Big\{\theta: \sigma_{21}\sin\theta \leq \sigma_{22}\cos\theta, \cos\theta \leq 0, \sigma_{22}y_{1k}\cos\theta \geq (\sigma_{21}y_{1k}-\sigma_{11}y_{2k})\sin\theta, \\
    |\sigma_{22}y_{1k}\cos^{2} \theta + (\sigma_{11}y_{2k} - \sigma_{21}y_{1k})\cos\theta \sin \theta| \geq |\sigma_{22}y_{1k}\sin^{2}\theta + (\sigma_{21}y_{1k}-\sigma_{11}y_{2k})\cos\theta\sin\theta| \Big\}.
\end{multline}
As in the case of the shock-sign restriction, this set also depends on the data $\mathbf{y}_{k}$ independently of the reduced-form parameters.

\numberwithin{equation}{section}
\setcounter{equation}{0}
\section{Omitted proofs}
\label{sec:proofs}
\noindent\textbf{Proof of Proposition~\ref{prop:point_identification}.}
\begin{proof}
$\mathcal{H}(\bm{\phi}, \mathbf{Q})$ can be written as
\begin{align*}
\mathcal{H}(\bm{\phi}, \mathbf{Q}) = & \int_{\mathbf{Y}} f^{1/2}(\mathbf{y}^T|\bm{\phi})f^{1/2}(\mathbf{y}^T|\bm{\phi}_0) \cdot  D_N(\bm{\phi},\mathbf{Q},\mathbf{y}^T) D_N(\bm{\phi}_0,\mathbf{Q}_0,\mathbf{y}^T) d \mathbf{y}^T \\
&+ \int_{\mathbf{Y}} f^{1/2}(\mathbf{y}^T|\bm{\phi})f^{1/2}(\mathbf{y}^T|\bm{\phi}_0) \cdot (1- D_N(\bm{\phi},\mathbf{Q},\mathbf{y}^T)) (1-D_N(\bm{\phi}_0,\mathbf{Q}_0,\mathbf{y}^T)) d \mathbf{y}^T.
\end{align*}
Note that the likelihood for the reduced-form parameters $f(\mathbf{y}^T|\bm{\phi})$ point-identifies $\bm{\phi}$, so $f(\cdot|\bm{\phi})=f(\cdot|\bm{\phi}_0)$ holds only at $\bm{\phi} = \bm{\phi}_0$. Hence, we set $\bm{\phi}=\bm{\phi}_0$ and consider $\mathcal{H}(\bm{\phi}_0, \mathbf{Q})$,
\begin{equation*}
\mathcal{H}(\bm{\phi}_0, \mathbf{Q}) =\int_{\{ \mathbf{y}^T:D_N(\bm{\phi}_0,\mathbf{Q}, \mathbf{y}^T) = D_N(\bm{\phi}_0,\mathbf{Q}_0, \mathbf{y}^T) \} } f(\mathbf{y}^T| \bm{\phi}_0) d \mathbf{y}^T.
\end{equation*}
Hence, $\mathcal{H}(\bm{\phi}_0, \mathbf{Q}) = 1$ if and only if $D_N(\bm{\phi}_0,\mathbf{Q}, \mathbf{y}^T) = D_N(\bm{\phi}_0,\mathbf{Q}_0, \mathbf{y}^T)$ holds $f(\mathbf{Y}^T| \bm{\phi}_0)$-a.s. In terms of the reduced-form residuals entering the NR, the latter condition is equivalent to $\{\mathbf{U}: N(\bm{\phi}_0, \mathbf{Q}, \mathbf{Y}^{T}) \geq \mathbf{0}_{s \times 1} \} = \{ \mathbf{U}: N(\bm{\phi}_0, \mathbf{Q}_0, \mathbf{Y}^{T}) \geq \mathbf{0}_{s \times 1} \}$ up to $f(\mathbf{Y}^T| \bm{\phi}_0)$-null set. Hence, $\mathcal{Q}^{\ast}$ defined in the proposition collects observationally equivalent values of $\mathbf{Q}$ at $\bm{\phi} = \bm{\phi}_0$ in terms of the unconditional likelihood.

Next, consider the conditional likelihood and consider
\begin{align*}
\mathcal{H}_c(\bm{\phi}_0, \mathbf{Q}) & = \frac{1}{r^{1/2}(\bm{\phi}, \mathbf{Q})r^{1/2} (\bm{\phi}_0, \mathbf{Q}_0) } \int_{\mathbf{Y}} f(\mathbf{y}^T|\bm{\phi}_0) \cdot  D_N(\bm{\phi},\mathbf{Q},\mathbf{y}^T) D_N(\bm{\phi}_0,\mathbf{Q}_0,\mathbf{y}^T) d \mathbf{y}^T \\
& = \frac{ E_{\mathbf{Y}^T| \bm{\phi}_0} \left[ D_N(\bm{\phi}_0,\mathbf{Q},\mathbf{Y}^T) D_N(\bm{\phi}_0,\mathbf{Q}_0,\mathbf{Y}^T)\right] } {r^{1/2}(\bm{\phi}, \mathbf{Q})r^{1/2} (\bm{\phi}_0, \mathbf{Q}_0) } \notag \\
& \leq 1,
\end{align*}
where the inequality follows by the Cauchy-Schwartz inequality, and it holds with equality if and only if $D_N(\bm{\phi}_0,\mathbf{Q},\mathbf{Y}^T) = D_N(\bm{\phi}_0,\mathbf{Q}_0,\mathbf{Y}^T)$ holds $f(\mathbf{Y}^T| \bm{\phi}_0)$-a.s. Hence, by repeating the argument for the unconditional likelihood case, we conclude that $\mathcal{Q}^{\ast}$ consists of observationally equivalent values of $\mathbf{Q}$ at $\bm{\phi} = \bm{\phi}_0$ in terms of the conditional likelihood.
\end{proof}

\bigskip

\noindent\textbf{Proof of Theorem~\ref{thm:coverage}.} Since $(\bm{\phi}_{0},\mathbf{Q}_{0})$ satisfies the imposed NR $N(\bm{\phi}_{0},\mathbf{Q}_{0},\mathbf{y}^T) \geq \mathbf{0}_{s\times 1}$ and the other sign restrictions (if any imposed), $\eta_0 \in \widetilde{CIS}_{\eta}(\bm{\phi}_0|\mathbf{s}(\mathbf{y}^{T}),N)$ holds for any $\mathbf{y}^T$. Hence, for all $T$,
\begin{equation}
P_{\mathbf{Y}^T|\mathbf{s},\bm{\phi}}( \eta_0\in \widehat{C}_{\alpha}^{\ast} | \mathbf{s}(\mathbf{Y}^T), \bm{\phi}_0) \geq P_{\mathbf{Y}^T|\bm{\phi}}( \widetilde{CIS}_{\eta} (\bm{\phi}_0|\mathbf{s}(\mathbf{Y}^{T}),N) \subset \widehat{C}_{\alpha}^{\ast} |\mathbf{s}(\mathbf{Y}^T), \bm{\phi}_0).
\end{equation}
Hence, to prove the claim, it suffices to focus on the asymptotic behavior of the coverage probability for the conditional identified set shown in the right-hand side.

Under Assumption \ref{assump:B-vM} and \ref{assump:as_convex}, the asymptotically correct coverage for the conditional identified set can be obtained by applying Proposition~2 in GK. \qedsymbol

\bigskip

\noindent\textbf{Primitive Conditions for Assumption~\ref{assump:as_convex}.} In what follows, we present sufficient conditions for convexity, continuity and differentiability (both in $\bm{\phi}$) of the conditional impulse-response identified set under the assumption that there is a fixed number of shock-sign restrictions constraining the first structural shock only (possibly in multiple periods).

\bigskip

\begin{proposition}\label{prop:convexity}
\textbf{Convexity.} Let the parameter of interest be $\eta_{i,1,h}$, the impulse response of the $i$th variable at the $h$th horizon to the first structural shock. Assume that there are shock-sign restrictions on $\varepsilon_{1,t}$ for $t=t_{1},\ldots,t_{K}$, so $N(\bm{\phi},\mathbf{Q},\mathbf{Y}^{T}) = (\bm{\Sigma}_{tr}^{-1}\mathbf{u}_{t_{1}}, \ldots, \bm{\Sigma}_{tr}^{-1}\mathbf{u}_{t_{K}})'\mathbf{q}_{1} \geq \mathbf{0}_{K\times 1}$. Then the set of values of $\eta_{i,1,h}$ satisfying the shock-sign restrictions and sign normalization, $\{\eta_{i,1,h}(\bm{\phi},\mathbf{Q}) = \mathbf{c}_{i,h}(\bm{\phi})\mathbf{q}_{1}: N(\bm{\phi},\mathbf{Q},\mathbf{Y}^{T}) \geq \mathbf{0}_{K\times 1}, \mathrm{diag}(\mathbf{Q}'\bm{\Sigma}_{tr}^{-1}) \geq \mathbf{0}_{n\times 1}, \mathbf{Q} \in \mathcal{O}(n)\}$ is convex for all $i$ and $h$ if there exists a unit-length vector $\mathbf{q} \in \mathbb{R}^{n}$ satisfying
\begin{equation}\label{eq:propconvexityinequality}
\begin{bmatrix}
 (\bm{\Sigma}_{tr}^{-1}\mathbf{u}_{t_{1}}, \ldots, \bm{\Sigma}_{tr}^{-1}\mathbf{u}_{t_{K}})' \\
 (\bm{\Sigma}_{tr}^{-1}\mathbf{e}_{1,n})'
 \end{bmatrix}
 \mathbf{q} \geq \mathbf{0}_{(K+1)\times 1}.
\end{equation}
\end{proposition}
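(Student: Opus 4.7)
The plan is to recast the set of admissible impulse responses as the continuous image of a constrained subset of the unit sphere in $\mathbb{R}^{n}$, show this subset is path-connected under the hypothesis, and then conclude convexity from the fact that a path-connected subset of $\mathbb{R}$ is an interval. First I would note that the $K$ shock-sign restrictions and the first sign-normalization $(\bm{\Sigma}_{tr}^{-1}\mathbf{e}_{1,n})'\mathbf{q}_{1} \geq 0$ depend only on the first column $\mathbf{q}_{1} = \mathbf{Q}\mathbf{e}_{1,n}$, and that, for any unit $\mathbf{q}_{1}$, the remaining $n-1$ columns of an orthonormal $\mathbf{Q}$ can be selected as any orthonormal basis of the orthogonal complement of $\mathbf{q}_{1}$ and then sign-flipped column-by-column to meet the other diagonal sign-normalizations, without altering $\mathbf{q}_{1}$ or $\eta_{i,1,h}$. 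Hence the feasible set of first columns is exactly
\[
\mathcal{Q}_{1} \;=\; \bigl\{\mathbf{q}\in S^{n-1} : \mathbf{A}\mathbf{q}\geq \mathbf{0}_{(K+1)\times 1}\bigr\},
\quad
\mathbf{A} \;=\;
\begin{bmatrix}
(\bm{\Sigma}_{tr}^{-1}\mathbf{u}_{t_{1}}, \ldots, \bm{\Sigma}_{tr}^{-1}\mathbf{u}_{t_{K}})' \\
(\bm{\Sigma}_{tr}^{-1}\mathbf{e}_{1,n})'
\end{bmatrix},
\]
and the target set is the image of $\mathcal{Q}_{1}$ under the continuous linear functional $\mathbf{q}\mapsto \mathbf{c}_{i,h}'(\bm{\phi})\mathbf{q}$.

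Second, I would establish path-connectedness of $\mathcal{Q}_{1}$. The polyhedral cone $\{\mathbf{q}\in\mathbb{R}^{n}:\mathbf{A}\mathbf{q}\geq \mathbf{0}_{(K+1)\times 1}\}$ is convex and closed under nonnegative scaling, so for any $\mathbf{q},\mathbf{q}'\in\mathcal{Q}_{1}$ with $\mathbf{q}'\neq -\mathbf{q}$, the segment $(1-t)\mathbf{q}+t\mathbf{q}'$ lies in the cone and is nonzero for every $t\in[0,1]$; its radial projection $t\mapsto ((1-t)\mathbf{q}+t\mathbf{q}')/\lVert(1-t)\mathbf{q}+t\mathbf{q}'\rVert$ is therefore a continuous path in $\mathcal{Q}_{1}$ joining $\mathbf{q}$ to $\mathbf{q}'$. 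For an antipodal pair $\mathbf{q}'=-\mathbf{q}$ (which forces $\mathbf{A}\mathbf{q}=\mathbf{0}$), the witness $\mathbf{q}^{\ast}\in\mathcal{Q}_{1}$ supplied by the hypothesis serves as a detour: provided $\mathbf{q}^{\ast}\notin \mathrm{span}(\mathbf{q})$, the two-leg path $\mathbf{q}\to \mathbf{q}^{\ast}\to -\mathbf{q}$ consists of two non-antipodal segments to which the previous construction applies, so the concatenation lies entirely in $\mathcal{Q}_{1}$.

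Finally, the continuous linear image of a path-connected set is path-connected, so $\{\mathbf{c}_{i,h}'(\bm{\phi})\mathbf{q}:\mathbf{q}\in \mathcal{Q}_{1}\}$ is a connected subset of $\mathbb{R}$, hence an interval, hence convex. The main obstacle is the degenerate subcase in which every vector in $\mathcal{Q}_{1}$ lies on a single line through the origin: then $\mathcal{Q}_{1}$ may reduce to the two antipodal points $\{\mathbf{q},-\mathbf{q}\}$ and its image to $\{v,-v\}$, which is convex only if $v=0$. I would handle this either by reading the hypothesis as requiring $\mathbf{A}\mathbf{q}^{\ast}$ to have at least one strict inequality (so that $\mathcal{Q}_{1}$ has nonempty relative interior in $S^{n-1}$ and cannot collapse onto a line), or by noting that the associated lineality of the constraint matrix $\mathbf{A}$ is a measure-zero coincidence in the data $(\bm{\phi},\mathbf{u}_{t_{1}},\ldots,\mathbf{u}_{t_{K}})$ that can be ruled out under standard genericity on the reduced-form innovations.
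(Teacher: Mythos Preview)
Your approach is essentially the same as the paper's: reduce to the feasible set of first columns $\mathcal{Q}_{1}=S^{n-1}\cap\{\mathbf{q}:\mathbf{A}\mathbf{q}\geq \mathbf{0}\}$, argue that this set is path-connected, and conclude that its image under the continuous linear functional $\mathbf{q}\mapsto \mathbf{c}_{i,h}'(\bm{\phi})\mathbf{q}$ is an interval. The paper's proof is in fact terser than yours---it simply asserts that the intersection of the $K+1$ half-spaces with the unit sphere is path-connected and moves on, without the radial-projection construction or the discussion of completing $\mathbf{q}_{1}$ to a full $\mathbf{Q}$ satisfying the remaining sign normalizations.

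Your treatment is more careful on two points the paper leaves implicit. First, you spell out why the other diagonal sign normalizations impose no constraint on $\mathbf{q}_{1}$. Second, you notice the antipodal degeneracy: if the cone $\{\mathbf{A}\mathbf{q}\geq\mathbf{0}\}$ collapses to a line, $\mathcal{Q}_{1}$ can be the two-point set $\{\mathbf{q},-\mathbf{q}\}$ and its image $\{v,-v\}$ need not be convex. The paper's proof does not address this case. Your two proposed fixes---reading the hypothesis as requiring a strictly feasible $\mathbf{q}^{\ast}$, or invoking genericity of the innovations $\mathbf{u}_{t_{k}}$---are both in the spirit of the paper; note that the companion Proposition on continuity does impose the strict-inequality version, and the differentiability proposition explicitly appeals to almost-sure linear independence of the constraint vectors, so either resolution is consistent with how the authors handle neighboring results.
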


\bigskip

\noindent\textbf{Proof of Proposition~\ref{prop:convexity}.} If there exists a unit-length vector $\mathbf{q}$ satisfying the inequality in (\ref{eq:propconvexityinequality}), it must lie within the intersection of the $K$ half-spaces defined by the inequalities $(\bm{\Sigma}_{tr}^{-1}\mathbf{u}_{t_{k}})'\mathbf{q} \geq 0$, $k=1,\ldots,K$, the half-space defined by the sign normalization, $(\bm{\Sigma}_{tr}^{-1}\mathbf{e}_{1,n})'\mathbf{q} \geq 0$, and the unit sphere in $\mathbb{R}^{n}$. The intersection of these $K+1$ half-spaces and the unit sphere is a path-connected set. Since $\eta_{i,1,h}(\bm{\phi},\mathbf{Q})$ is a continuous function of $\mathbf{q}_{1}$, the set of values of $\eta_{i,1,h}$ satisfying the restrictions is an interval and is thus convex, because the set of a continuous function with a path-connected domain is always an interval. $\qedsymbol$

\bigskip

\begin{proposition}\label{prop:continuity}
\textbf{Continuity.} Let the parameter of interest and restrictions be as in Proposition~\ref{prop:convexity}, and assume that the conditions in the proposition are satisfied. If there exists a unit-length vector $\mathbf{q} \in \mathbb{R}^{n}$ such that, at $\bm{\phi} = \bm{\phi}_{0}$,
\begin{equation}\label{eq:propcontinuityinequality}
\begin{bmatrix}
 (\bm{\Sigma}_{tr}^{-1}\mathbf{u}_{t_{1}}, \ldots, \bm{\Sigma}_{tr}^{-1}\mathbf{u}_{t_{K}})' \\
 (\bm{\Sigma}_{tr}^{-1}\mathbf{e}_{1,n})'
 \end{bmatrix}
 \mathbf{q} >> \mathbf{0}_{(K+1)\times 1},
\end{equation}
then $u(\bm{\phi},\mathbf{Y}^{T})$ and $l(\bm{\phi},\mathbf{Y}^{T})$ are continuous at $\bm{\phi} = \bm{\phi}_{0}$ for all $i$ and $h$.\footnote{For a vector $\mathbf{x} = (x_{1},\ldots,x_{m})'$, $\mathbf{x} >> \mathbf{0}_{m\times 1}$ means that $x_{i} > 0$ for all $i=1,\ldots,m$.}
\end{proposition}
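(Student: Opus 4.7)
The plan is to invoke Berge's Maximum Theorem: once the feasibility correspondence is shown to be continuous at $\bm{\phi}_0$, continuity of $u(\bm{\phi},\mathbf{Y}^{T})$ and $l(\bm{\phi},\mathbf{Y}^{T})$ follows automatically. The strict inequality hypothesis in (\ref{eq:propcontinuityinequality}) serves as a Slater-type constraint qualification that drives the argument.

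First, I would reduce the problem to optimization over the unit sphere. Because both the objective $\eta_{i,1,h}(\bm{\phi},\mathbf{Q}) = \mathbf{c}_{i,h}'(\bm{\phi})\mathbf{q}_{1}$ and the NR depend on $\mathbf{Q}$ only through $\mathbf{q}_{1}$, and because every unit-length vector can be completed to an orthonormal matrix, the feasible set reduces to
\begin{equation}
C(\bm{\phi}) = \{\mathbf{q} \in S^{n-1} : A(\bm{\phi})\mathbf{q} \geq \mathbf{0}_{(K+1)\times 1}\},
\end{equation}
where $A(\bm{\phi})$ is the $(K+1)\times n$ matrix with rows $(\bm{\Sigma}_{tr}^{-1}\mathbf{u}_{t_{k}})'$ for $k=1,\ldots,K$ and $(\bm{\Sigma}_{tr}^{-1}\mathbf{e}_{1,n})'$. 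The map $\bm{\phi}\mapsto A(\bm{\phi})$ is continuous, since $\bm{\Sigma}_{tr}^{-1}$ is a smooth function of $\bm{\Sigma}$ and $\mathbf{u}_{t_{k}} = \mathbf{y}_{t_{k}} - \mathbf{B}\mathbf{x}_{t_{k}}$ is affine in $\mathbf{B}$; similarly, $\mathbf{c}_{i,h}(\bm{\phi})$ is smooth, so the objective is jointly continuous in $(\bm{\phi},\mathbf{q})$. The sphere $S^{n-1}$ is compact, and $C(\bm{\phi})$ is the intersection of $S^{n-1}$ with finitely many closed half-spaces whose normals vary continuously in $\bm{\phi}$; this immediately gives upper hemicontinuity of $C(\cdot)$ at $\bm{\phi}_{0}$ (any limit of a feasible sequence inherits all inequality constraints by passage to the limit).

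The main work is lower hemicontinuity of $C(\cdot)$ at $\bm{\phi}_{0}$. Fix $\mathbf{q}\in C(\bm{\phi}_{0})$, a sequence $\bm{\phi}_{m}\to\bm{\phi}_{0}$, and let $\tilde{\mathbf{q}}$ be the Slater point from (\ref{eq:propcontinuityinequality}) with slack $\delta \equiv \min_{k}\bigl[A(\bm{\phi}_{0})\tilde{\mathbf{q}}\bigr]_{k} > 0$. Consider the convex combinations $\mathbf{q}_{m}^{\mathrm{raw}} = (1-\epsilon_{m})\mathbf{q} + \epsilon_{m}\tilde{\mathbf{q}}$ and expand
\begin{equation}
A(\bm{\phi}_{m})\mathbf{q}_{m}^{\mathrm{raw}} = (1-\epsilon_{m})A(\bm{\phi}_{0})\mathbf{q} + \epsilon_{m}A(\bm{\phi}_{0})\tilde{\mathbf{q}} + \bigl(A(\bm{\phi}_{m}) - A(\bm{\phi}_{0})\bigr)\mathbf{q}_{m}^{\mathrm{raw}}.
\end{equation}
The first term is nonnegative by feasibility of $\mathbf{q}$, the second is bounded below by $\epsilon_{m}\delta$, and the third is dominated in absolute value by $\|A(\bm{\phi}_{m}) - A(\bm{\phi}_{0})\|$. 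Choosing $\epsilon_{m}\downarrow 0$ slowly enough that $\epsilon_{m}\delta$ eventually exceeds $\|A(\bm{\phi}_{m}) - A(\bm{\phi}_{0})\|$, for instance $\epsilon_{m} = \sqrt{\|A(\bm{\phi}_{m}) - A(\bm{\phi}_{0})\|/\delta}$, yields $A(\bm{\phi}_{m})\mathbf{q}_{m}^{\mathrm{raw}} \geq \mathbf{0}_{(K+1)\times 1}$ for all large $m$. Normalizing to $\mathbf{q}_{m} = \mathbf{q}_{m}^{\mathrm{raw}}/\|\mathbf{q}_{m}^{\mathrm{raw}}\|$ produces a sequence in $C(\bm{\phi}_{m})$ that converges to $\mathbf{q}$, establishing lower hemicontinuity.

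The hardest part is exactly this lower-hemicontinuity calibration: the Slater weight $\epsilon_{m}$ must vanish (so $\mathbf{q}_{m}\to\mathbf{q}$) yet remain large enough to absorb the perturbation of the inequality matrix. Without the strict inequality in (\ref{eq:propcontinuityinequality}), $C(\bm{\phi}_{0})$ could lie entirely on a lower-dimensional face and perturbations in $\bm{\phi}$ could empty it, breaking continuity of the value functions. Given continuity of $C(\cdot)$ at $\bm{\phi}_{0}$ together with the continuity of the objective and compactness of $C(\bm{\phi})$, Berge's Maximum Theorem delivers continuity of $u(\bm{\phi},\mathbf{Y}^{T})$ and $l(\bm{\phi},\mathbf{Y}^{T})$ at $\bm{\phi}=\bm{\phi}_{0}$ for each $(i,h)$, which is the claim.
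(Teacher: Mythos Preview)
Your argument is correct and is essentially the approach the paper has in mind: the paper's own proof is a one-line deferral to Proposition~B.2 of \cite{Giacomini_Kitagawa_2020b}, noting only that the reduced-form innovations $\mathbf{u}_{t}$ are continuous in $\bm{\phi}$, and the underlying GK argument is precisely the Slater-condition/Berge's-theorem route you spell out. Your write-up simply makes explicit the calibration of the mixing weight $\epsilon_{m}$ and the normalization back to the sphere that the paper leaves implicit by citation.
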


\bigskip

\noindent\textbf{Proof of Proposition~\ref{prop:continuity}.} $\mathbf{Y}^{T}$ enters the NR through the reduced-form VAR innovations, $\mathbf{u}_{t}$. After noting that the reduced-form VAR innovations are (implicitly) continuous in $\bm{\phi}$, continuity of $u(\bm{\phi},\mathbf{Y}^{T})$ and $l(\bm{\phi},\mathbf{Y}^{T})$ follows by the same logic as in the proof of Proposition~B.2 of \cite{Giacomini_Kitagawa_2020b}. We omit the detail for brevity. $\qedsymbol$

\bigskip

\begin{proposition}\label{prop:differentiability}
\textbf{Differentiability.} Let the parameter of interest and restrictions be as in Proposition~\ref{prop:convexity}, and assume that the conditions in the proposition are satisfied. Denote the unit sphere in $\mathbb{R}^{n}$ by $\mathcal{S}^{n-1}$. If, at $\bm{\phi} = \bm{\phi}_{0}$, the set of solutions to the optimization problem
\begin{equation}\label{eq:optimisationproblem}
  \max_{\mathbf{q} \in \mathcal{S}^{n-1}} \quad \left(\min_{\mathbf{q} \in \mathcal{S}^{n-1}}\right) \quad \mathbf{c}_{i,h}'(\bm{\phi})\mathbf{q} \quad \textrm{s.t.} \quad  \begin{bmatrix}(\bm{\Sigma}_{tr}^{-1}\mathbf{u}_{t_{1}}, \ldots, \bm{\Sigma}_{tr}^{-1}\mathbf{u}_{t_{K}}), & \bm{\Sigma}_{tr}^{-1}\mathbf{e}_{1,n}\end{bmatrix}'\mathbf{q} \geq \mathbf{0}_{(K+1)\times 1}
\end{equation}
is singleton, the optimized value $u(\bm{\phi},\mathbf{Y}^{T})$ ($l(\bm{\phi},\mathbf{Y}^{T})$) is nonzero, and the number of binding inequality restrictions at the optimum is at most $n-1$, then $u(\bm{\phi},\mathbf{Y}^{T})$ ($l(\bm{\phi},\mathbf{Y}^{T})$) is almost-surely differentiable at $\bm{\phi} = \bm{\phi}_{0}$.
\end{proposition}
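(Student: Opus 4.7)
\textbf{Proof plan for Proposition~\ref{prop:differentiability}.} I would recast the problem as a standard constrained nonlinear program on $\mathbb{R}^{n}$ with the equality constraint $\mathbf{q}'\mathbf{q} = 1$ replacing the sphere restriction and the $K+1$ linear inequalities $\mathbf{f}_{k}(\bm{\phi},\mathbf{Y}^{T})'\mathbf{q} \geq 0$, where $\mathbf{f}_{k}(\bm{\phi},\mathbf{Y}^{T}) = \bm{\Sigma}_{tr}^{-1}\mathbf{u}_{t_{k}}$ for $k=1,\ldots,K$ and $\mathbf{f}_{K+1}(\bm{\phi}) = \bm{\Sigma}_{tr}^{-1}\mathbf{e}_{1,n}$. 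The Lagrangian is
\begin{equation*}
L(\mathbf{q},\lambda,\bm{\mu};\bm{\phi}) = \mathbf{c}_{i,h}(\bm{\phi})'\mathbf{q} - \lambda(\mathbf{q}'\mathbf{q} - 1) + \bm{\mu}'\mathbf{F}(\bm{\phi},\mathbf{Y}^{T})'\mathbf{q},
\end{equation*}
with KKT multipliers $\bm{\mu} \geq \mathbf{0}_{(K+1)\times 1}$ and complementary slackness. The first-order condition is $\mathbf{c}_{i,h}(\bm{\phi}) + \mathbf{F}(\bm{\phi},\mathbf{Y}^{T})\bm{\mu} = 2\lambda\mathbf{q}$. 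Premultiplying by $\mathbf{q}^{*\prime}$ and using complementary slackness yields $\mathbf{c}_{i,h}(\bm{\phi})'\mathbf{q}^{*} = 2\lambda^{*}$, so the hypothesis that the optimized value is nonzero implies $\lambda^{*}\neq 0$ at $\bm{\phi}=\bm{\phi}_{0}$.

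The plan is then to verify the three standard conditions that deliver differentiability of an NLP value function via the envelope theorem. First, I would argue that the linear independence constraint qualification (LICQ) holds at the optimum: the active set consists of at most $n-1$ inequality gradients $\{\mathbf{f}_{k}(\bm{\phi}_{0},\mathbf{Y}^{T})\}_{k\in\mathcal{A}}$ together with the sphere gradient $2\mathbf{q}^{*}$, giving at most $n$ vectors in $\mathbb{R}^{n}$. Since the reduced-form innovations in the finite set of restricted periods are continuously distributed under $p(\mathbf{Y}^{T}|\bm{\phi}_{0})$, these vectors are linearly independent $p(\mathbf{Y}^{T}|\bm{\phi}_{0})$-almost surely. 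Second, LICQ combined with singleton primal solution yields uniqueness of the KKT multipliers $(\lambda^{*},\bm{\mu}^{*})$. Third, strict complementarity and the second-order sufficient condition follow almost surely from the nondegenerate configuration (in particular, the second-order condition on the tangent space to the active constraints is inherited from the fact that $\lambda^{*}\neq 0$ and the Hessian of $L$ with respect to $\mathbf{q}$ equals $-2\lambda^{*}\mathbf{I}_{n}$).

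Given these three ingredients, I would invoke the implicit function theorem applied to the KKT system in a neighborhood of $\bm{\phi}_{0}$ to conclude that $(\mathbf{q}^{*}(\bm{\phi}),\lambda^{*}(\bm{\phi}),\bm{\mu}^{*}(\bm{\phi}))$ is continuously differentiable locally, with the active set held fixed (by strict complementarity and the continuity of $\mathbf{q}^{*}$ established in Proposition~\ref{prop:continuity}). The standard envelope theorem then gives
\begin{equation*}
\nabla_{\bm{\phi}} u(\bm{\phi}_{0},\mathbf{Y}^{T}) = \nabla_{\bm{\phi}} L(\mathbf{q}^{*},\lambda^{*},\bm{\mu}^{*};\bm{\phi}_{0}) = \left(\frac{\partial \mathbf{c}_{i,h}(\bm{\phi}_{0})}{\partial \bm{\phi}'}\right)'\mathbf{q}^{*} + \sum_{k\in\mathcal{A}}\mu_{k}^{*}\left(\frac{\partial \mathbf{f}_{k}(\bm{\phi}_{0},\mathbf{Y}^{T})}{\partial \bm{\phi}'}\right)'\mathbf{q}^{*},
\end{equation*}
which is well-defined and finite because $\mathbf{c}_{i,h}(\bm{\phi})$ and $\bm{\Sigma}_{tr}^{-1}\mathbf{u}_{t}(\bm{\phi})$ are smooth functions of $\bm{\phi}$ (the former via the recursive construction in the VMA($\infty$) representation, the latter because $\mathbf{u}_{t}(\bm{\phi}) = \mathbf{y}_{t}-\mathbf{B}\mathbf{x}_{t}$ and $\bm{\Sigma}$ is symmetric positive definite). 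The argument for $l(\bm{\phi},\mathbf{Y}^{T})$ is identical after replacing $\max$ with $\min$ and flipping the sign of $\lambda$.

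The main obstacle I anticipate is verifying LICQ almost surely, because it requires a probabilistic argument that the (random) active-constraint gradients together with $\mathbf{q}^{*}$ are generically linearly independent; this is where the ``almost-surely'' qualifier in the conclusion is needed. A secondary difficulty is ensuring that the active set does not change as $\bm{\phi}$ is perturbed in a small neighborhood of $\bm{\phi}_{0}$, which I would handle by combining the continuity of $\mathbf{q}^{*}(\bm{\phi})$ from Proposition~\ref{prop:continuity} with strict complementarity to show that non-binding constraints remain non-binding and binding constraints continue to bind with strictly positive multipliers in a neighborhood of $\bm{\phi}_{0}$.
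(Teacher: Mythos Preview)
Your argument is correct but takes a genuinely different route from the paper. The paper's proof is a two-line reduction: it applies the change of variables $\mathbf{x} = \bm{\Sigma}_{tr}\mathbf{q}$ to recast the optimization problem in the exact form of Equation~(2.5) in Gafarov, Meier and Montiel-Olea (2018), and then invokes their Theorem~2 as a black box. The hypotheses of that theorem are precisely the singleton-optimizer, nonzero-value, and at-most-$(n-1)$-binding conditions in the proposition, together with linear independence of the full constraint matrix, which the paper dispatches in one sentence by noting that the innovations are continuously distributed.

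What you are doing is essentially reproving the content of that external theorem from first principles: verifying LICQ, extracting $\lambda^{*}\neq 0$ from the nonzero value, reading off the second-order condition from the Hessian $-2\lambda^{*}\mathbf{I}_{n}$, and then applying the implicit function theorem to the KKT system to obtain the envelope formula. This is more self-contained and yields an explicit gradient expression, which the paper's proof does not. The trade-off is length and the need to handle the strict-complementarity and active-set-stability issues you flag at the end; the paper sidesteps both by outsourcing them to the cited reference. One small caution: your claim that strict complementarity ``follows almost surely from the nondegenerate configuration'' is asserted rather than argued, and in fact Theorem~2 of Gafarov et al.\ does not require it --- uniqueness of the primal solution together with LICQ already suffices for value-function differentiability via the envelope theorem, so you can drop that step rather than try to justify it.
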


\bigskip

\noindent\textbf{Proof of Proposition~\ref{prop:differentiability}.} One-to-one differentiable reparameterization of the optimization problem in Equation~(\ref{eq:optimisationproblem}) using $\mathbf{x} = \bm{\Sigma}_{tr}\mathbf{q}$ yields the optimization problem in Equation~(2.5) of Gafarov, Meier and Montiel-Olea (2018)\nocite{Gafarov_Meier_Montiel-Olea_2018} with a set of inequality restrictions that are now a function of the data through the reduced-form VAR innovations entering the NR. Noting that $\mathbf{u}_{t}$ is (implicitly) differentiable in $\bm{\phi}$, differentiability of $u(\bm{\phi},\mathbf{Y}^{T})$ at $\bm{\phi} = \bm{\phi}_{0}$ follows from their Theorem~2 under the assumptions that, at $\bm{\phi} = \bm{\phi}_{0}$, the set of solutions to the optimization problem is singleton, the optimized value $u(\bm{\phi},\mathbf{Y}^{T})$ is nonzero, and the number of binding sign restrictions at the optimum is at most $n-1$. Differentiability of $l(\bm{\phi},\mathbf{Y}^{T})$ follows similarly. Note that Theorem~2 of \cite{Gafarov_Meier_Montiel-Olea_2018} additionally requires that the column vectors of $\begin{bmatrix}(\bm{\Sigma}_{tr}^{-1}\mathbf{u}_{t_{1}}, \ldots, \bm{\Sigma}_{tr}^{-1}\mathbf{u}_{t_{K}}), & \bm{\Sigma}_{tr}^{-1}\mathbf{e}_{1,n}\end{bmatrix}$ are linearly independent, but this occurs almost-surely under the probability law for $\mathbf{Y}^{T}$. $\qedsymbol$

\section{Asymptotics with increasing number of NR}
\label{sec:increasingrestrictions}

What happens if the number of NR increases with the sample size? We conjecture that, under some assumptions about how the NR are generated, the class of posteriors for particular parameters will converge to a point mass at the truth. Intuitively, as the number of restrictions increases, the likelihood tends to be truncated to an increasing extent until the only point with positive likelihood is the true value of the parameter.

To provide some numerical evidence for this conjecture, we return to the bivariate example from Section~\ref{sec:bivariate}. Assume that $\bm{\phi}=\bm{\phi}_{0}$ is known (which will be the case asymptotically under regularity conditions), so $\pi_{\bm{\phi}|\mathbf{Y}^{T}}$ is a point mass at $\bm{\phi} = \bm{\phi}_{0}$. Consider the case where the econometrician observes $\mathrm{sgn}(\varepsilon_{1t})$ for all $t$ and imposes the shock-sign restriction $\mathrm{sgn}(\varepsilon_{1t})\varepsilon_{1t}(\bm{\phi},\mathbf{Q},\mathbf{y}_{t}) \geq 0$ for $t=1,\ldots,T$.\footnote{This assumption can be relaxed, so that the econometrician observes $\mathrm{sgn}(\varepsilon_{1t})$ each period with some probability.}

Figure~\ref{fig:posteriorConsistency} plots the conditional identified set for $\mathbf{q}_{1}$ for different numbers of restrictions given random realizations of a time series drawn from the data-generating process. As the sample size increases, the boundaries of the half-spaces generated by the binding shock-sign restrictions converge towards the true value of $\mathbf{q}_{1}$, $\mathbf{q}_{1,0}$. Additionally, the conditional identified set for $\mathbf{q}_{2}$ will converge to its true value, since $\mathbf{q}_{2}$ is orthogonal to $\mathbf{q}_{1}$ and satisfies a sign normalization.\footnote{Equivalently, we could show convergence of the conditional identified set for $\theta$ to $\theta_{0}$, which pins down all impulse responses.} In other words, imposing a growing number of shock-sign restrictions on a single shock is sufficient for the posterior of all impulse responses to converge to a point mass at the true value. Note, however, that this will not be the case in higher-dimensional VARs, since the collapse of the conditional identified set for $\mathbf{q}_{1}$ does not pin down values for $\mathbf{q}_{j}$, $j=2,\ldots,n$.

\begin{figure}[h]
    \center
    \caption{Illustration of Posterior Consistency} \label{fig:posteriorConsistency}
    \begin{tabular}{c c c}
        \includegraphics[scale=0.35]{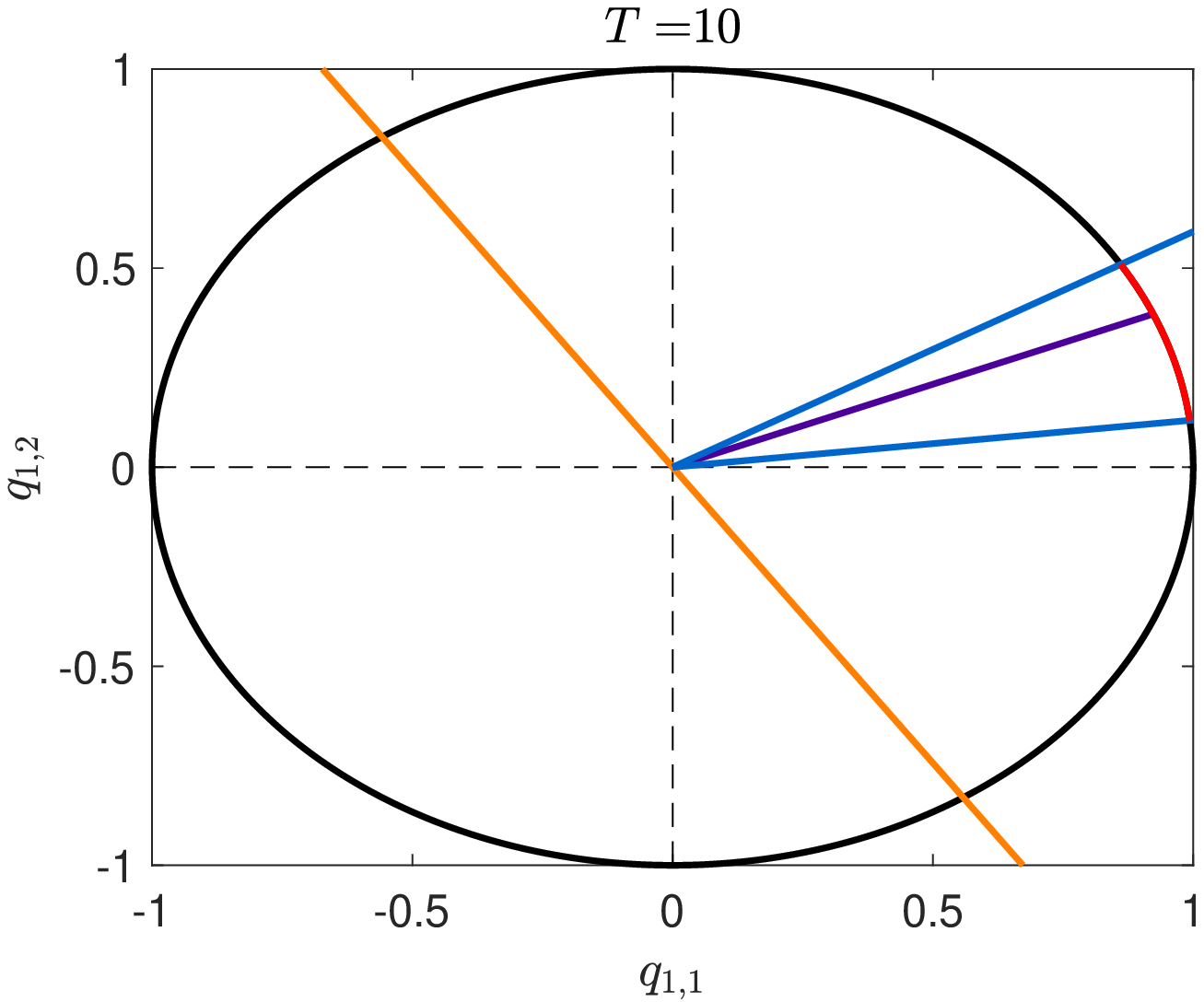} & \includegraphics[scale=0.35]{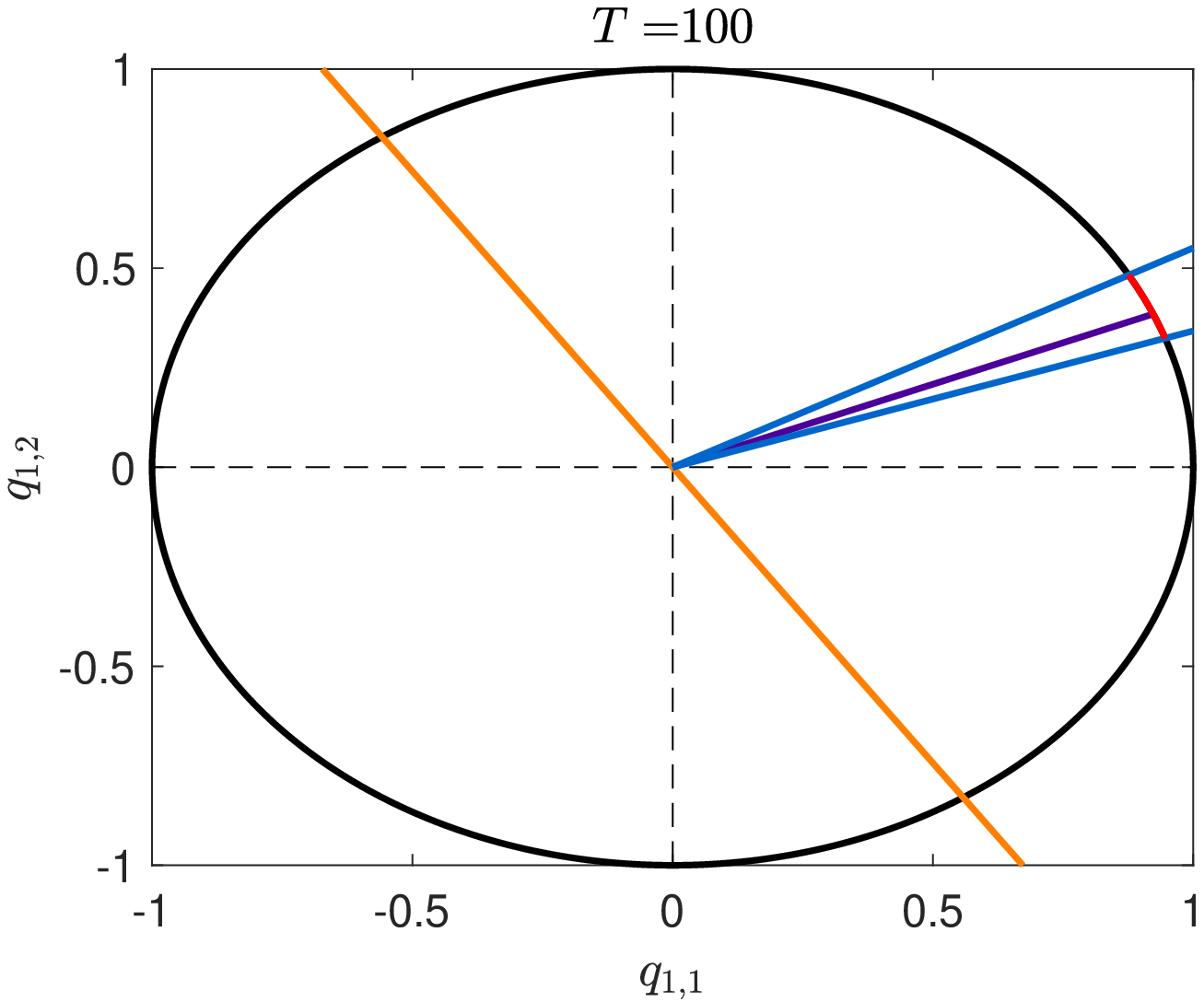} & \includegraphics[scale=0.35]{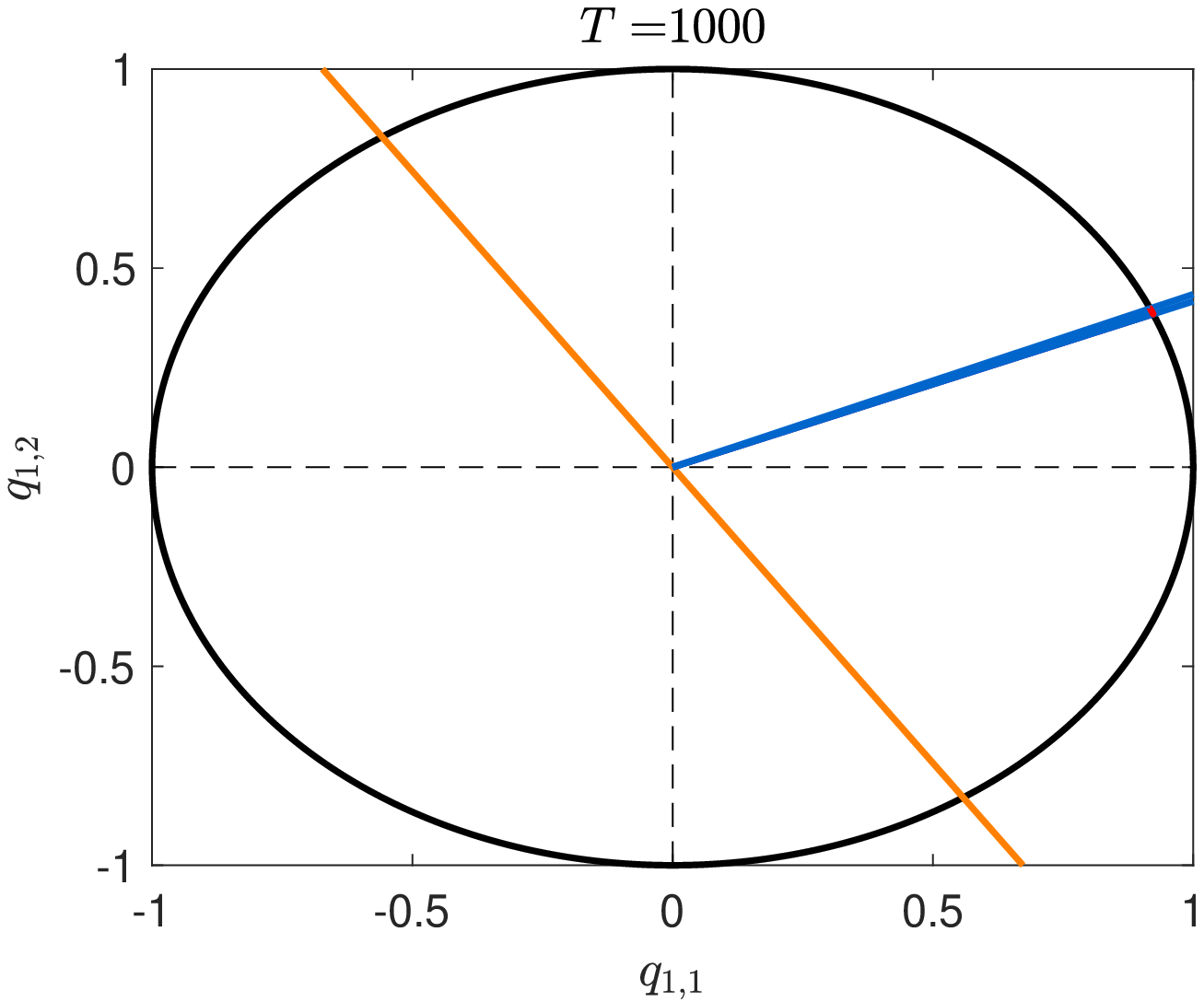}\\
    \end{tabular}
    \footnotesize \parbox[t]{0.65 in}{Notes:}\parbox[t]{5 in}{Purple line is true value of $\mathbf{q}_{1} = (q_{1,1},q_{1,2})'$; orange line is boundary of half-space generated by the sign normalization; blue lines are boundaries of half-spaces generated by `binding' shock-sign restrictions intersected with half-space generated by the sign normalisation; red line is intersection of all half-spaces with unit circle.} \\
\end{figure}

\section{Alternative algorithms for robust Bayesian inference}
\label{sec:appendixalgorithms}

Assume that the object of interest is an impulse response to the first structural shock. The upper bound of the conditional identified set for the horizon-$h$ impulse response of the $i$th variable to this shock given $\bm{\phi}$ and $\mathbf{Y}^{T}$ is the value function associated with the optimization problem
\begin{equation}\label{eq:objectivefunction}
  u(\bm{\phi},\mathbf{Y}^{T}) = \max_{\mathbf{Q}\in \mathcal{Q}(\bm{\phi}|\mathbf{Y}^{T},N,S)} \mathbf{c}_{i,h}'(\bm{\phi})\mathbf{q}_{1}.
\end{equation}
$l(\bm{\phi},\mathbf{Y}^{T})$ is obtained by minimising the same objective function subject to the same constraints. When $N(\bm{\phi},\mathbf{Q},\mathbf{Y}^{T})$ and $S(\bm{\phi},\mathbf{Q})$ only constrain $\mathbf{q}_{1}$, applying the change of variables $\mathbf{x} = \bm{\Sigma}_{tr}\mathbf{q}_{1}$ yields the optimization problem in \cite{Gafarov_Meier_Montiel-Olea_2018} with additional inequality restrictions that are functions of $\mathbf{Y}^{T}$.

Given a set of active inequality restrictions, \cite{Gafarov_Meier_Montiel-Olea_2018} provide an analytical expression for the value function and solution of this optimization problem. To find the bounds of the identified set, they compute these quantities for every possible combination of active restrictions and check which pair solves the optimization problem. Since the bounds are computed analytically at each set of active restrictions, this algorithm is computationally inexpensive as long as there is not a very large number of inequality restrictions. However, if $N(\bm{\phi},\mathbf{Q},\mathbf{Y}^{T})$ contains restrictions on the historical decomposition, all columns of $\mathbf{Q}$ are (nonlinearly) constrained and the analytical results are longer applicable. Similarly, the approach is not applicable when there are shock-sign or shock-rank restrictions on different structural shocks, or traditional sign restrictions on multiple columns of $\mathbf{Q}$. This approach may also be prohibitively slow when there is a large number of restrictions, which may be the case when there are shock-rank restrictions.\footnote{At most $n-1$ inequality constraints may be active at an optimum of the program in (\ref{eq:objectivefunction}), so the number of combinations of active constraints that must be checked when there are $s$ NR and $\tilde{s}$ traditional sign restrictions is $\sum_{k=0}^{n-1} {s+\tilde{s}+1\choose k}$. For example, in the empirical application below, when we consider a shock-rank restriction alongside traditional sign restrictions, there are $\sum_{k=0}^{n-1} {T+\tilde{s}+1\choose k} = 3.3876\times 10^{11}$~combinations of active restrictions to check.}

\cite{Amir-Ahmadi_Drautzburg_2021} propose an algorithm to determine whether the set of admissible values for $\mathbf{Q}$ is nonempty without recourse to random sampling from $\mathcal{O}(n)$. This algorithm can be more accurate and efficient than the simulation-based approach used in Algorithm~1, but it is applicable only when the columns of $\mathbf{Q}$ are subject to linear inequality restrictions, which is not the case when there are restrictions on the historical decomposition. However, practitioners may not always wish to impose restrictions on the historical decomposition. Accordingly, we describe an algorithm that can be used to conduct robust Bayesian inference without recourse to rejection sampling when there are shock-rank, shock-sign and/or traditional sign restrictions on a single column of $\mathbf{Q}$. The algorithm uses the approach in \cite{Amir-Ahmadi_Drautzburg_2021} to determine whether the conditional identified set for $\mathbf{q}_{1}$ is nonempty and replaces the Monte Carlo approximation of $[l(\bm{\phi},\mathbf{Y}^{T}),u(\bm{\phi},\mathbf{Y}^{T})]$ in Algorithm~1 with a numerical optimization step.

\bigskip

\noindent \textbf{Algorithm D.1.} \textit{Let $N(\bm{\phi},\mathbf{Y}^{T})\mathbf{q}_{1} \geq \mathbf{0}_{s\times 1}$ be the set of NR and let $S(\bm{\phi})\mathbf{q}_{1} \geq \mathbf{0}_{(\tilde{s}+1)\times 1}$ be the set of traditional sign restrictions (including the sign normalization). Assume the object of interest is $\eta_{i,1,h} = c_{i,h}'(\bm{\phi})\mathbf{q}_{1}$. Replace Steps~2 and 3 of Algorithm 1 with the following.
\begin{itemize}
  \item \textbf{Step~2}: Draw $\bm{\phi}$ from $\pi_{\bm{\phi}|\mathbf{Y}^{T}}$ and check whether the conditional identified set for $\mathbf{q}_{1}$ is empty by using the following subroutine.
      \begin{itemize}
      \item \textbf{2.1} Solve for the Chebyshev center $\{R,\tilde{\mathbf{q}}\}$ of the set
      \begin{equation}
        \{\tilde{\mathbf{q}}: (N(\bm{\phi},\mathbf{Y}^{T})', S(\bm{\phi})')'\tilde{\mathbf{q}} \geq \mathbf{0}_{(s+\tilde{s}+1)\times 1}, |\tilde{q}_{i}| \leq 1, i=1,\ldots,n\}.
       \end{equation}
       If $R > 0$, the conditional identified set is nonempty, so proceed to Step~3. Otherwise, repeat Step~2.
      \end{itemize}
  \item \textbf{Step~3}: Compute $l(\bm{\phi},\mathbf{Y}^{T})$ by solving the following constrained optimization problem with initial value $\mathbf{q}^{0} = \tilde{\mathbf{q}}/\lVert \tilde{\mathbf{q}} \rVert$:
    \begin{equation}
      l(\bm{\phi},\mathbf{Y}^{T}) = \min_{\mathbf{q}} c_{i,h}'(\bm{\phi})\mathbf{q} \quad \text{s.t.} \quad   (N(\bm{\phi},\mathbf{Y}^{T})', S(\bm{\phi})')'\tilde{\mathbf{q}} \geq \mathbf{0}_{(s+\tilde{s}+1)\times 1}, \mathbf{q}'\mathbf{q} = 1.
    \end{equation}
    Similarly, obtain $u(\bm{\phi},\mathbf{Y}^{T})$ by maximising $c_{i,h}'(\bm{\phi})\mathbf{q}$ subject to the same set of constraints.
\end{itemize}
}

\bigskip

Step~2.1 requires solving for the Chebyshev center of the set satisfying the narrative and traditional sign restrictions. The Chebyshev center $\tilde{\mathbf{q}}$ is the center of the largest ball with radius $R$ that can be inscribed within the set $\{\tilde{\mathbf{q}}: (N(\bm{\phi},\mathbf{Y}^{T})', S(\bm{\phi})')'\tilde{\mathbf{q}} \geq \mathbf{0}_{(s+\tilde{s}+1)\times 1}, |\tilde{q}_{i}| \leq 1, i=1,\ldots,n\}$, which is the intersection of the half-spaces generated by the inequality restrictions and the unit $n$-cube.\footnote{The restriction that $\tilde{\mathbf{q}}$ lies within the unit $n$-cube ensures that the problem is well-defined.} Letting $\mathbf{Z}_{k}'$ be the $k$th row of $(N(\bm{\phi},\mathbf{Y}^{T})', S(\bm{\phi})')'$, the Chebyshev center and radius can be obtained as the solution to the following problem (see, for example, \cite{Boyd_Vandenberghe_2004}):
\begin{equation*}
  \max_{\{R \geq 0,\tilde{\mathbf{q}}\}} R
\end{equation*}
subject to
\begin{align*}
    \mathbf{Z}_{k}'\tilde{\mathbf{q}} + R\lVert\mathbf{Z}_{k}\rVert &\geq 0, \quad k=1,\ldots,s+\tilde{s}+1 \\
    \tilde{q}_{i} + R &\leq 1, \quad i=1,\ldots,n.\\
    \tilde{q}_{i} - R &\geq -1, \quad i=1,\ldots,n.
\end{align*}
This is a linear program, which can be solved efficiently. If $R > 0$, then the conditional identified set for $\mathbf{q}_{1}$ is nonempty. If $\tilde{\mathbf{q}}$ is a Chebyshev center with $R > 0$, then $\tilde{\mathbf{q}}$ satisfies the inequality restrictions and $\lVert \tilde{\mathbf{q}} \rVert > 0$. $\mathbf{q}_{1}^{0} = \tilde{\mathbf{q}}/\lVert \tilde{\mathbf{q}} \rVert$ then has unit norm and satisfies the sign restrictions, so it can be used to initialize the optimization problem of Step~3. In practice, we solve this optimization problem using an interior-point algorithm within Matlab's `fmincon' optimizer.

\section{NR in the local projection framework}
\label{sec:localprojection}

\cite{Plagborg-Moller_Wolf_2020a} explain how to impose typical SVAR identifying restrictions in the local projection framework. This appendix explains how to impose NR in this framework.

\bigskip

\noindent\textbf{Local projection framework.} Assume the $n\times 1$ vector of data $\mathbf{y}_t$ is driven by an $n \times 1$ vector $\bm{\varepsilon}_t = (\varepsilon_{1t},\ldots,\varepsilon_{nt})'$ of structural shocks:
\begin{equation}
  \mathbf{y}_t = \mathbf{\mu}_{y} + \bm{\Theta}(L)\varepsilon_t, \quad \bm{\Theta}(L) \equiv \sum_{l=0}^{\infty}\bm{\Theta}_{l}L^{l},
\end{equation}
where $\{\bm{\Theta}_{l}\}_{l=0}^{\infty}$ is absolutely summable, $\bm{\Theta}(x)$ has full row rank for all complex scalars $x$ on the unit circle, and $\bm{\varepsilon}_t$ is independently and identically distributed with $\mathbb{E}(\bm{\varepsilon}_t) = \mathbf{0}_{n\times1}$ and $\mathbb{E}(\bm{\varepsilon}_{t}\bm{\varepsilon}_{t}') = \mathbf{I}_{n}$.

Consider the coefficient vectors $\{\bm{\beta}_{i,h}\}$ obtained from the $n\times(H+1)$ local projections
\begin{equation}
  y_{i,t+h} = \mu_{i,h} + \bm{\beta}_{i,h}'\mathbf{y}_t + \sum_{l=1}^{\infty}\bm{\delta}_{i,h,l}'\mathbf{y}_{t-l} + u_{i,h,t},
\end{equation}
where $i=1,\ldots,n$ and $h = 0,1,\ldots,H$. Let $\mathbf{C}_h = (\bm{\beta}_{1,h},\ldots,\bm{\beta}_{n,h})'$ denote the $n\times n$ matrix of horizon-$h$ projection coefficients. \cite{Plagborg-Moller_Wolf_2020a} show that the elements of $\mathbf{C}_h$ are the impulse responses of $\mathbf{y}_t$ at horizon $h$ to the Wold innovations $\mathbf{u}_t = \mathbf{y}_t - \mathrm{Proj}(\mathbf{y}_t|\{\mathbf{y}_{t-l}\}_{l=1}^{\infty})$. The Wold innovations are equal to the residuals of the local projection at $h=1$, so $\mathbf{u}_t = (u_{1,1,t},\ldots,u_{n,1,t})'$. Let $\mathrm{Var}(\mathbf{u}_t) = \bm{\Sigma} = \bm{\Sigma}_{tr}\bm{\Sigma}_{tr}'$, where $\bm{\Sigma}_{tr}$ is the lower-triangular Cholesky factor of $\bm{\Sigma}$ with strictly positive diagonal elements.

Assume that the structural shocks are invertible, in the sense that they can be recovered as a linear combination of $\mathbf{u}_t$: $\bm{\varepsilon}_t = \mathbf{G}\mathbf{u}_t$, where $\mathbf{G}$ is a full-rank $n\times n$ matrix. Reparameterize $\mathbf{G}$ as $\mathbf{G}^{-1} = \bm{\Sigma}_{tr}\mathbf{Q}$, where $\mathbf{Q} \in \mathcal{O}(n)$. The diagonal elements of $\mathbf{G}$ are normalized to be nonnegative, so $\mathrm{diag}(\mathbf{Q}'\bm{\Sigma}_{tr}^{-1}) \geq \mathbf{0}_{n\times 1}$.

Let $\bm{\phi} = (\mathrm{vec}(\mathbf{C}_0)',\ldots,\mathrm{vec}(\mathbf{C}_H)',\mathrm{vech}(\bm{\Sigma}_{tr})')'$. The object of interest is the (structural) impulse response, which is an element of $\bm{\Theta}_{l}$ . Invertibility of the shocks implies that the structural impulse responses can be obtained as rotations of the reduced-form impulse responses $\mathbf{C}_h$. In particular, given that $\mathbf{u}_t = \bm{\Sigma}_{tr}\mathbf{Q}\bm{\varepsilon}_t$, $\bm{\Theta}_{h} = \mathbf{C}_{h}\bm{\Sigma}_{tr}\mathbf{Q}$.

\bigskip

\noindent\textbf{Imposing NR.} The invertibility assumption implies that
\begin{equation}
    \bm{\varepsilon}_{t} = \mathbf{Q}'\bm{\Sigma}_{tr}^{-1}\mathbf{u}_{t}.
\end{equation}
The $i$th structural shock at time $t$ is therefore
\begin{equation}
  \varepsilon_{it}(\bm{\phi},\mathbf{Q},\mathbf{u}_{t}) = \mathbf{e}_{i}'\mathbf{Q}'\bm{\Sigma}_{tr}^{-1}\mathbf{u}_{t} = (\bm{\Sigma}_{tr}^{-1}\mathbf{u}_{t})'\mathbf{q}_{i}.
\end{equation}
A shock-sign restriction is therefore a linear inequality restriction on $\mathbf{q}_{i}$ that depends on the reduced-form parameter $\bm{\Sigma}_{tr}$ and the Wold innovations $\mathbf{u}_{t}$.

The historical decomposition is the cumulative contribution of the $j$th shock to the observed unexpected change in the $i$th variable between periods $t$ and $t+h$:
\begin{equation}
  H_{i,j,t,t+h} = \sum_{l=0}^{h} \mathbf{e}_{i}'\mathbf{C}_{l}\bm{\Sigma}_{tr}\mathbf{Q}\mathbf{e}_{j}\mathbf{e}_{j}'\bm{\varepsilon}_{t+h-l} = \sum_{l=0}^{h} \mathbf{C}_{l}\bm{\Sigma}_{tr}\mathbf{q}_{j}\mathbf{q}_{j}'\bm{\Sigma}_{tr}^{-1}\mathbf{u}_{t+h-l}.
\end{equation}
This is a function of the reduced-form parameter $\bm{\Sigma}_{tr}$, the reduced-form impulse responses $\{\mathbf{C}_{l}\}_{l=0}^{h}$ and the Wold innovations $\mathbf{u}_{t}$.

Given a set of traditional and narrative sign restrictions, the upper bound of the conditional identified set for a particular impulse response of interest, $\mathbf{e}_{i,n}'\bm{\Theta}_{h}\mathbf{e}_{j^{*},n}$, is the solution $u(\bm{\phi},\mathbf{Y}^{T})$ of the following constrained optimization problem:
\begin{equation}
  u(\bm{\phi},\mathbf{Y}^{T}) = \max_{\mathbf{Q}} \mathbf{e}_{i,n}'\mathbf{C}_{h}\bm{\Sigma}_{tr}\mathbf{Q}\mathbf{e}_{j^{*},n}
\end{equation}
subject to
\begin{equation*}
    N(\bm{\phi},\mathbf{Q},\mathbf{Y}^{T}) \geq \mathbf{0}_{s\times 1}, \quad \mathbf{Q}'\mathbf{Q} = \mathbf{I}_n, \quad \mathrm{diag}(\mathbf{Q}'\bm{\Sigma}_{tr}^{-1}) \geq \mathbf{0}_{n\times 1}.
\end{equation*}
The lower bound of the conditional identified set is the solution of the corresponding minimization problem.

\subsection*{Remarks}

\begin{itemize}
  \item The key difference between this framework and the SVAR framework is that, in the SVAR framework, the reduced-form impulse responses would be obtained from the VMA representation of the reduced-form VAR. In contrast, here they are obtained directly from local projections. Otherwise, structural impulse responses are obtained by rotating reduced-form impulse responses in exactly the same way as in the SVAR.
  \item As in \cite{Plagborg-Moller_Wolf_2020a}, we have assumed that there is an infinite number of lags appearing as controls in the local projections. Under this assumption, the reduced-form impulse responses will coincide with those from a VAR($\infty$) at all horizons. The horizon-1 local projection innovations will also coincide with the one-step-ahead forecast errors from the VAR, so the covariance matrix of these innovations will coincide. Consequently, the conditional identified set for the structural impulse responses will also coincide at all horizons. See \cite{Plagborg-Moller_Wolf_2020a} for discussions of the finite-lag case and the choice between VARs and local projections.
  \item Given posterior draws of $\bm{\phi}$, one could conduct robust Bayesian inference in exactly the same way as in the SVAR case. However, obtaining the posterior of $\bm{\phi}$ requires specifying a joint prior over the parameters in $\bm{\phi}$ and the parameters governing the system of local projection residuals, which are in general serially correlated (for example, see \cite{Lusompa_2020}).
\end{itemize}

\section{NR as proxy variables}
\label{sec:narrativeproxy}

\cite{Plagborg-Moller_Wolf_2020b} point out that information about the sign of a particular structural shock can be recast as a variable that can be used to point-identify impulse responses in a proxy SVAR or local projection framework.\footnote{For a related approach, see \cite{Budnik_Runstler_2020}.} Specifically, consider the variable that takes value one when the structural shock is known to be positive, minus one when it is known to be negative and zero otherwise. This `narrative proxy' will clearly be positively correlated with the structural shock of interest. Since the proxy depends only on the structural shock of interest, it will also be contemporaneously uncorrelated with the other structural shocks. It can therefore be used to point-identify the impulse responses to the shock of interest in a proxy SVAR (e.g., \cite{Mertens_Ravn_2013} and Montiel-Olea, Stock and Watson (2020)\nocite{Montiel-Olea_Stock_Watson_2020}). Since the instrument is additionally uncorrelated with leads and lags of all structural shocks, it could alternatively be used as an instrument in a local projection, which does not require assuming invertibility (e.g., \cite{Stock_Watson_2018}).\footnote{Note that the covariance between the narrative proxy and the structural shock of interest will converge to zero asymptotically when there is a fixed number of shock-sign restrictions used to generate the proxy. In this case, frequentist inference could be conducted using weak-instrument robust methods (e.g., Montiel-Olea et al. (2020)\nocite{Montiel-Olea_Stock_Watson_2020}).}

This approach is valid when there are shock-sign restrictions only, but more generally it is unclear how one would encode the information underlying richer sets of NR (e.g., restrictions on the historical decomposition) as an instrument without discarding potentially useful identifying information. Additionally, when there are only a small number of shock-sign restrictions used to generate the instrument, the point estimator of the impulse response will be sensitive to the realization of the data in the periods corresponding to the shock-sign restrictions. We illustrate this point below using the bivariate example of Section~\ref{sec:bivariate}.

\bigskip

\noindent\textbf{Proxy variables in the bivariate example.} Assume there is a variable $Z_{t}$ satisfying $\mathbb{E}(Z_{t}\varepsilon_{1t}) \neq 0$ and $\mathbb{E}(Z_{t}\varepsilon_{2t}) = 0$. After expressing $\varepsilon_{2t}$ in terms of $\mathbf{y}_{t}$ and the parameters, the exogeneity condition implies that
\begin{equation}
  \sigma_{22}\mathbb{E}(Z_{t}y_{1t})\sin\theta = \mathbb{E}(Z_{t}(\sigma_{11}y_{2t} - \sigma_{21}y_{1t}))\cos\theta.
\end{equation}
If the instrument is not relevant, so that $\mathbb{E}(Z_{t}\varepsilon_{1t}) = 0$, the restriction $\mathbb{E}(Z_{t}\varepsilon_{2t}) = 0$ carries no information about $\theta$, since $\mathbb{E}(Z_{t}y_{1t}) = 0$ and $\mathbb{E}(Z_{t}(\sigma_{11}y_{2t} - \sigma_{21}y_{1t})) = 0$. Otherwise,
\begin{equation}\label{eq:tantheta}
  \tan\theta = \frac{\mathbb{E}(Z_{t}(\sigma_{11}y_{2t} - \sigma_{21}y_{1t}))}{\sigma_{22}\mathbb{E}(Z_{t}y_{1t})}.
\end{equation}
This equation has two solutions in $[-\pi,\pi]$, one of which will be ruled out by the sign normalization restrictions. For example, if $\sigma_{21} < 0$ and the term on the right-hand side of Equation~(\ref{eq:tantheta}) (henceforth denoted by $C$) is positive, then $\theta$ is either equal to $\arctan(C) - \pi$ or $\arctan(C)$. The sign normalization implies that $\theta \in [\arctan(\sigma_{22}/\sigma_{21}),\arctan(\sigma_{22}/\sigma_{21}) + \pi]$, which rules out the first solution, so $\theta = \arctan(C)$. If $C$ is negative, then $\theta$ is either equal to $\arctan(C)$ or $\arctan(C) + \pi$. If $C > \sigma_{22}/\sigma_{21}$, then the sign normalization selects the first solution, otherwise it selects the second solution. Similar arguments apply when $\sigma_{21} > 0$.\footnote{When $\sigma_{21} > 0$, the sign normalization restricts $\theta$ to lie in $\left[-\pi + \arctan\left(\frac{\sigma_{22}}{\sigma_{21}}\right),\arctan\left(\frac{\sigma_{22}}{\sigma_{21}}\right)\right]$. If $C > \frac{\sigma_{22}}{\sigma_{21}}$, the sign normalization implies that $\theta = \arctan(C)-\pi$. Otherwise, the sign normalization implies that $\theta = \arctan(C)$.}

Consider the case where information about the sign of the first structural shock is recast as a binary variable. Specifically, as in the shock-sign example, assume the econometrician knows that $\varepsilon_{1k} \geq 0$ for some $k \in \left\{1,\ldots,T\right\}$, and let $Z_{k} = \mathrm{sgn}(\varepsilon_{1k})$ with $Z_{t} = 0$ for $t \neq k$. What happens if the econometrician imposes the identifying restriction that $\mathbb{E}(Z_{t}\varepsilon_{2t}) = 0$?

Maintaining the assumption that $\bm{\phi}$ is known with $\sigma_{21} < 0$, in the case where $(\sigma_{11}y_{2k} - \sigma_{21}y_{1k}) > 0$ and $y_{1k} > 0$, an analogue estimator of $\theta$ is
\begin{align}
  \hat{\theta} &= \arctan\left(\frac{\frac{1}{T}\sum_{t=1}^{T}Z_{t}(\sigma_{11}y_{2t} - \sigma_{21}y_{1t})}{\sigma_{22}\frac{1}{T}\sum_{t=1}^{T}Z_{t}y_{1t}}\right) \notag \\
         &= \arctan\left(\frac{\sigma_{11}y_{2k} - \sigma_{21}y_{1k}}{\sigma_{22}y_{1k}}\right). \label{eq:thetahat}
\end{align}
Note that this is equal to the estimator that would be obtained if one were to impose the `narrative zero restriction' $\varepsilon_{2k} = 0$. Additionally, $\hat{\theta}$ lies within the conditional identified set under the shock-sign restriction $\varepsilon_{1k} \geq 0$. To see this, first note that $\hat{\theta}$ lies in the range $(0,\pi/2)$, since the argument entering the $\arctan(.)$ function is positive by assumption. The conditional identified set for $\theta$ under the shock-sign restriction in this case is given by (\ref{eq:thetaset1}). The lower bound of the conditional identified set is bounded above by zero, while the upper bound is bounded below by $\pi/2$, so $\hat{\theta}$ necessarily lies within this conditional identified set.

How does this estimator relate to the true value of $\theta$? Assume that the data are generated by a process with parameter $\theta_{0} \in (0,\frac{\pi}{2})$ (with $\mathbf{Q}$ equal to the rotation matrix). Replacing $y_{1k}$ and $y_{2k}$ in (\ref{eq:thetahat}) using $\mathbf{y}_{k} = \mathbf{A}_{0}^{-1}\bm{\varepsilon}_{k}$ yields an expression for $\hat{\theta}$ in terms of the true parameters and the underlying structural shocks:
\begin{multline}
  \hat{\theta} = \arctan\Bigg(\left(\sigma_{22}\left(\sigma_{11}\cos\theta_{0}\varepsilon_{1k} - \sigma_{11}\sin\theta_{0}\varepsilon_{2k}\right)\right)^{-1}\Big[\sigma_{11}\Big[\left(\sigma_{21}\cos\theta_{0}+\sigma_{22}\sin\theta_{0}\right)\varepsilon_{1k}+ \\
  \left(\sigma_{22}\cos\theta_{0}-\sigma_{21}\sin\theta_{0}\right)\varepsilon_{2k}\Big] - \sigma_{21}\left(\sigma_{11}\cos\theta_{0}\varepsilon_{1k}-\sigma_{11}\sin\theta_{0}\varepsilon_{2k}\right)\Big]\Bigg).
\end{multline}
If $\varepsilon_{2k} = 0$, we have that $\hat{\theta} = \theta_{0}$. Otherwise, $\hat{\theta}$ will not in general coincide with $\theta_{0}$. For example, for $\varepsilon_{2k} \neq 0$ and $\varepsilon_{1k} \approx 0$, $\hat{\theta} \approx \arctan(\cot\theta_{0}) = \frac{\pi}{2} - \theta_{0}$.\footnote{This follows from the fact that $\arctan(x) + \arctan\left(\frac{1}{x}\right) = \frac{\pi}{2}$ for $x > 0$.} In this case, the impulse-response estimator is
\begin{equation}
    \hat{\eta} = \sigma_{11}\cos\hat{\theta} \approx \sigma_{11}\cos\left(\frac{\pi}{2} - \theta_{0}\right) = -\sigma_{11}\sin\theta_{0},
\end{equation}
which is the true impulse response of the first variable to the second shock, rather than the first shock. In general, the estimator of the impulse response may be sensitive to the value of the second shock in period $k$, since it is based solely on the data in period $k$.

The impulse response considered above is to a standard-deviation shock in $\varepsilon_{1t}$ (i.e., an \textit{absolute} impulse response). In the literature that uses proxies to identify the effects of macroeconomic shocks, it is common to use the \textit{relative} impulse response, which is the impulse response to a shock that raises a particular variable by one unit. For example, the impulse response of $y_{2t}$ to a shock that raises the first variable by one unit, $\tilde{\eta}_{2}$, is the ratio of the absolute impulse response of the second variable to the absolute impulse response of the first variable. The analogue estimator of the absolute impulse response of the first variable is
\begin{align}
  \hat{\eta} &= \sigma_{11}\cos\hat{\theta} \notag \\
  &= \sigma_{11}\cos\left(\arctan\left(\frac{\sigma_{11}y_{2k}-\sigma_{21}y_{1k}}{\sigma_{22}y_{1k}}\right)\right) \notag \\
  &= \frac{\sigma_{11}\sigma_{22}y_{1k}}{\sqrt{\sigma_{22}^{2}y_{1k}^{2} + (\sigma_{11}y_{2k} - \sigma_{21}y_{1k})^{2}}},
\end{align}
where the last line follows from the fact that $\cos(\arctan(x)) = (1+x^{2})^{-1/2}$. The estimator for the absolute impulse response of the second variable is
\begin{align}
  \hat{\eta}_{2} &= \sigma_{21}\cos\hat{\theta} + \sigma_{22}\sin\hat{\theta} \notag \\
  &= \sqrt{\sigma_{21}^{2} + \sigma_{22}^{2}}\cos\left(\hat{\theta} - \arctan\left(\frac{\sigma_{22}}{\sigma_{21}}\right)\right) \notag \\
  &= \sqrt{\sigma_{21}^{2} + \sigma_{22}^{2}}\left[\cos\hat{\theta}\cos\left(\arctan\left(\frac{\sigma_{22}}{\sigma_{21}}\right)\right) + \sin\hat{\theta}\sin\left( \arctan\left(\frac{\sigma_{22}}{\sigma_{21}}\right)\right)\right] \notag \\
  &= \frac{\sigma_{11}\sigma_{22}y_{2k}}{\sqrt{\sigma_{22}^{2}y_{1k}^{2} + (\sigma_{11}y_{2k} - \sigma_{21}y_{1k})^{2}}},
\end{align}
where we have used that $a\cos x + b\sin x = \sqrt{a^{2} + b^{2}}\cos(x - \alpha)$ with $\tan\alpha = b/a$, $\cos(x - y) = \cos x \cos y + \sin x \sin y$, $\cos(\arctan(x)) = (1+x^{2})^{-1/2}$ and $\sin(\arctan(x)) = x(1+x^{2})^{-1/2}$. Consequently,
\begin{equation}
  \tilde{\eta}_{2} = \frac{\hat{\eta}_{2}}{\hat{\eta}} = \frac{y_{2k}}{y_{1k}}.
\end{equation}
The estimator of the relative impulse response will clearly also be sensitive to the realizations of the structural shocks in period $k$. Similar to above, if $\varepsilon_{2k} = 0$, then $\tilde{\eta}_{2}$ will be equal to the true relative impulse response of the second variable to the first shock. If $\varepsilon_{1k} \approx 0$ and $\varepsilon_{2k} \neq 0$, then $\tilde{\eta}_{2}$ will be approximately equal to the true relative impulse response of the second variable to the second shock.

\end{appendices}

\newpage
\bibliographystyle{ecta}
\bibliography{NarrativeRestrictions_Bibliography}
\addcontentsline{toc}{section}{References}

\end{document}